\DeclareMathOperator{\E}{\mathbb{E}}
\DeclareMathOperator{\dist}{dist}
\DeclareMathOperator{\sig}{sig}
\DeclareMathOperator{\cor}{cor}
\DeclareMathOperator{\rad}{rad}
\DeclareMathOperator{\diam}{diam}
\DeclareMathOperator{\cone}{cone}
\DeclareMathOperator{\diag}{diag}
\renewcommand{\Pr}[2][]{\mathbb{P}_{#1} \left\{ #2 \rule{0mm}{3mm}\right\}}
\newcommand{\ip}[2]{\left\langle#1,#2\right\rangle}
\newcommand{\gw}[1]{\omega\left(#1\right)}
\def \CC {\mathcal{C}}
\def \EE {\mathcal{E}}
\def \NN {\mathcal{N}}
\def \TT {\mathcal{T}}
\def \B {\mathbb{B}}
\def \R {\mathbb{R}}
\def \S {\mathbb{S}}
\def \va {\bm{a}}
\def \vb {\bm{b}}
\def \vd {\bm{d}}
\def \vg {\bm{g}}
\def \vh {\bm{h}}
\def \vs {\bm{s}}
\def \vu {\bm{u}}
\def \vv {\bm{v}}
\def \vw {\bm{w}}
\def \vx {\bm{x}}
\def \vy {\bm{y}}
\def \vz {\bm{z}}
\def \vzero {\bm{0}}
\def \mA {\bm{A}}
\def \mB {\bm{B}}
\def \mI {\bm{I}}
\def \mD {\bm{D}}
\def \mSigma {\bm{\Sigma}}
\def \mPhi {\bm{\Phi}}
\def \mPsi {\bm{\Psi}}
\newtheorem{theorem}{Theorem}
\newtheorem{lemma}{Lemma}
\newtheorem{corollary}{Corollary}
\newtheorem{condition}{Condition}
\newtheorem{fact}{Fact}
\theoremstyle{definition}
\theoremstyle{remark}
\newtheorem{remark}{Remark}
\newtheorem*{rep@theorem}{\rep@title}
\newcommand{\newreptheorem}[2]{%
	\newenvironment{rep#1}[1]{%
		\def\rep@title{#2 \ref{##1}}%
		\begin{rep@theorem}}%
		{\end{rep@theorem}}}
\begin{document}
%
\title{Stable Recovery of Structured Signals From Corrupted Sub-Gaussian Measurements}
%
%
\author{
\IEEEauthorblockN{Jinchi Chen\IEEEauthorrefmark{1} and Yulong~Liu\IEEEauthorrefmark{1}}

\IEEEauthorblockA{\IEEEauthorrefmark{1}School of Physics, Beijing Institute of Technology, Beijing 100081, China}

\thanks{The material in this paper was presented in part at the IEEE International Symposium on Information Theory, Aachen, Germany, June 2017 \cite{chen2017Corrupted}.}
}

%

\maketitle


\begin{abstract}
This paper studies the problem of accurately recovering a structured signal from a small number of corrupted sub-Gaussian measurements. We consider three different procedures to reconstruct signal and corruption when different kinds of prior knowledge are available. In each case, we provide conditions (in terms of the number of measurements) for stable signal recovery from structured corruption with added unstructured noise. Our results theoretically demonstrate how to choose the regularization parameters in both partially and fully penalized recovery procedures and shed some light on the relationships among the three procedures. The key ingredient in our analysis is an extended matrix deviation inequality for isotropic sub-Gaussian matrices, which implies a tight lower bound for the restricted singular value of the extended sensing matrix. Numerical experiments are presented to verify our theoretical results.
\end{abstract}

\begin{IEEEkeywords}
	Corrupted sensing, compressed sensing, signal separation, signal demixing, sub-Gaussian, Gaussian width, Gaussian complexity, Gaussian squared distance, extended matrix deviation inequality.
\end{IEEEkeywords}

\section{Introduction}
Corrupted sensing concerns the problem of recovering a structured signal from a relatively small number of corrupted measurements
\begin{align}\label{model: observe}
\vy = \bm{\Phi}\vx^{\star} +\vv^{\star} + \vz,
\end{align}
where $\bm{\Phi}\in\R^{m\times n}$ is the sensing matrix, $\vx^{\star}\in\R^n$ is the structured signal, $\vv^{\star}\in\R^m$ is the structured corruption, and $\vz \in \R^m$ is the unstructured observation noise. The goal is to estimate $\vx^{\star}$ and $\vv^{\star}$ from given knowledge of $\vy$ and $\bm{\Phi}$.

This problem has received increasing attention recently with many interesting practical applications as well as theoretical consideration. Examples of applications include face recognition \cite{wright2009robust}, subspace clustering \cite{elhamifar2009sparse}, sensor network \cite{haupt2008compressed}, latent variable modeling \cite{chandrasekaran2011rank}, and so on. On the theoretical side, performance guarantees include sparse signal recovery from sparse corruption \cite{wright2010dense,li2013compressed,nguyen2013exact,nguyen2013robust,kuppinger2012uncertainty,studer2012recovery,pope2013probabilistic,studer2014stable}, low-rank matrix recovery from sparse corruption \cite{chandrasekaran2011rank, candes2011robust, xu2012robust, xu2013outlier}, and structured signal recovery from structured corruption \cite{foygel2014corrupted}. It is worth noting that this model \eqref{model: observe} also includes the signal separation (or demixing) problem \cite{mccoy2014sharp} in which $\vv^{\star}$ might actually contain useful information and thus be necessary to be recovered. In particular, if there is no corruption $(\vv^{\star} = \vzero)$, this model \eqref{model: observe} reduces to the standard compressed sensing problem.

Since this problem is generally ill-posed, tractable recovery is possible when both signal and corruption are suitably structured. Typical examples of structured signals (or corruptions) include sparse
vectors, which exhibit low complexity under the $\ell_1$ norm, and low-rank matrices which exhibit low complexity with respect to the nuclear norm (more examples can be found in \cite{chandrasekaran2012convex}). Let $f(\cdot)$ and $g(\cdot)$ be suitable norms which promote structures for signal and corruption respectively. We consider three different convex optimization approaches to disentangle signal and corruption when different kinds of prior information are available. Specifically, when prior knowledge of either signal $f(\vx^{\star})$ or corruption $g(\vv^{\star})$ is available and the noise level $\delta$ (in terms of the $\ell_2$ norm) is known, it is natural to consider the following constrained convex recovery procedures
\begin{align}\label{Constrained_Optimization_I}
\min_{\vx, \vv} ~f(\vx),\quad\text{s.t.~}&g(\vv) \leq g(\vv^{\star}), ~~\|\vy-\bm{\Phi}\vx-\vv\|_2\leq \delta
\end{align}
and
\begin{align}\label{Constrained_Optimization_II}
\min_{\vx, \vv} ~g(\vv),\quad\text{s.t.~}&f(\vx) \leq f(\vx^{\star}), ~~\|\vy-\bm{\Phi}\vx-\vv\|_2\leq \delta.
\end{align}
When only the noise level $\delta$ is known, it is convenient to use the partially penalized convex recovery procedure
\begin{align}\label{Partially_Penalized_Optimization}
\min_{\vx,\vv}~f(\vx)+\lambda\cdot g(\vv),\quad\text{s.t.}\quad\|\vy-\bm{\Phi}\vx-\vv\|_2\leq \delta,
\end{align}
where $\lambda > 0$ is a regularization parameter. When there is no prior knowledge available, it is practical to utilize the fully penalized convex recovery procedure
\begin{align}\label{Fully Penalized Optimization}
\min_{\vx,\vv}\frac{1}{2}\|\vy-\bm{\Phi}\vx-\vv\|_2^2+\tau_1\cdot f(\vx)+\tau_2\cdot g(\vv),
\end{align}
where $\tau_1,\tau_2 > 0$ are some regularization parameters.

\subsection{Contribution}
This paper considers the problem of recovering structured signals from corrupted sub-Gaussian measurements. More precisely, we consider the following observation model:
\begin{itemize}
\item Sub-Gaussian measurements: we assume that each row $\bm{\Phi}_i$ of the sensing matrix $\bm{\Phi}$ is independent, centered, and sub-Gaussian random vector with
\begin{equation}
\label{model: subgaus property}
\|\bm{\Phi}_{i}\|_{\psi_2}\leq K/\sqrt{m}~~ \textrm{and} ~~ \E\bm{\Phi}_{i}^T\bm{\Phi}_{i} = \mI_n/m,
\end{equation}
where $\|\cdot\|_{\psi_2}$ denotes the sub-Gaussian norm (defined in \eqref{Sub-Gaussian_Definition}) and $\mI_n$ is the $n$-dimensional identity matrix;

\item Bounded or sub-Gaussian noise: the unstructured noise $\vz$ is assumed to be bounded $(\|\vz\|_2 \leq \delta)$ or be a random vector with independent centered sub-Gaussian entries satisfying
\begin{align} \label{model: noise}
\|\vz_i\|_{\psi_2} \leq L ~~\text{and}~~ \E\vz_i^2 = 1.
\end{align}
\end{itemize}

Under the above model assumptions, the contribution of this paper is twofold:
\begin{itemize}
\item First, we establish an extended matrix deviation inequality for isotropic sub-Gaussian matrices, which provides a powerful tool for analyzing the corrupted sensing problem;

\item Second, based on the deviation inequality, we present performance guarantees for all three convex recovery procedures. These results theoretically demonstrate how to select the regularization parameters in both partially and fully penalized recovery procedures and shed some light on the relationships among the three approaches.

\end{itemize}

\subsection{Related Work}
Within the past few years, there have been numerous studies to understand the theoretical aspect of the corrupted sensing problem. These works might be roughly put into the following three groups by the type of structures considered.

\subsubsection{Sparse Signal Recovery from Sparse Corruption} The specific problem of recovering a sparse vector from sparsely corrupted measurements has been analyzed under a variety of different model assumptions. For instance, Wright and Ma \cite{wright2010dense} study exact recovery by the partially penalized recovery procedure (with $f(\cdot)= g(\cdot) = \|\cdot\|_1, \lambda = 1$, and $\delta = 0$) under a bouquet model: $\vy=\mPhi \vx^{\star} + \vv^{\star}$, where the columns of $\mPhi$ are independent identically distributed (i.i.d.) samples from a Gaussian distribution $\NN(\bm{\mu}, {\nu^2}\mI_m/m)$ with $\|\bm{\mu}\|_2=1$ and $\|\bm{\mu}\|_{\infty}\leq {C}/{\sqrt{m}}$, and $\vv^{\star}$ has uniformly random signs. Li \cite{li2013compressed} considers stable recovery of $(\vx^{\star}, \vv^{\star})$ from $\vy=\mPhi\vx^{\star}+\vv^{\star}+\vz$ by the partially penalized recovery procedure (with $f(\cdot)= g(\cdot) = \|\cdot\|_1$ and $\lambda = 1/\sqrt{\log(n/m)+1}$), where $\mPhi$ is a Gaussian matrix with i.i.d. entries and $\vz$ is a bounded noise vector.  In the absence of noise, \cite{li2013compressed} also considers a class of sensing matrices, in which the rows of $\mPhi$ are sampled independently from a population obeying $\E\mPhi_i^{T}\mPhi_i = \mI_n$ and $\|\mPhi_i\|_{\infty} \leq c$. In contrast to \cite{li2013compressed}, Nguyen and Tran \cite{nguyen2013exact} also establish stable or exact recovery of $(\vx^{\star}, \vv^{\star})$ from $\vy=\mPhi\vx^{\star}+\vv^{\star}+\vz$ by the partially penalized recovery procedure, but with a different model assumption in which $\mPhi$ has columns sampled uniformly from an orthonormal matrix, $\vx^{\star}$ has uniformly random signs, $\vv^{\star}$ has uniformly distributed support, and $\vz$ is a bounded noise vector. In a subsequent paper \cite{nguyen2013robust}, Nguyen and Tran study stable recovery of $(\vx^{\star}, \vv^{\star})$ from $\vy=\mPhi\vx^{\star}+\vv^{\star}+\vz$ by the fully penalized recovery procedure (with $f(\cdot)= g(\cdot) = \|\cdot\|_1$), where $\mPhi$ is an Guassian sensing matrix with i.i.d. entries and $\vz$ is a Gaussian noise vector. Pope \textit{et al.} \cite{pope2013probabilistic} investigate the exact recovery from the observation model $\vy=\mPhi\vx^{\star}+\mPsi\vv^{\star}$, where $\mPhi$ and $\mPsi$ are deterministic matrices. The authors require uniform randomness in the support set of $\vx^{\star}$ or $\vv^{\star}$ and rely on certain incoherence properties of $\mPhi$ and $\mPsi$. These probabilistic recovery guarantees improve or refine the previous results in \cite{kuppinger2012uncertainty, studer2012recovery, studer2014stable}.

\subsubsection{Low-rank Matrix Recovery from Sparse Corruption}

Another line of work considers the problem of recovering a low-rank matrix from sparse corruption. Chandrasekaran \textit{et al.} \cite{chandrasekaran2011rank} study the exact decomposition of a given matrix into its sparse and low-rank components by the partially penalized recovery procedure (with $f(\cdot) = \|\cdot\|_*$, $g(\cdot) = \|\cdot\|_1$ and $\delta = 0$), where $\|\mD\|_*$ denotes the nuclear norm of $\mD$ and $\|\mD\|_1 = \sum_{i,j} |\mD_{i,j}|$ . The work of Cand\`{e}s \textit{et al.} \cite{candes2011robust} uses this model for robust principal component analysis and image processing applications. Modifications to the rank-sparsity model also find applications in robust statistics, see e.g., \cite{xu2012robust, xu2013outlier}.

\subsubsection{Structured Signal Recovery from Structured Corruption}

The most closely related to our current paper are the results by \cite{mccoy2014sharp, foygel2014corrupted, amelunxen2014living, Zhang2017}, in which the problem of recovering a structured signal from structured corruption is considered.
In \cite{mccoy2014sharp}, McCoy and Tropp analyze the constrained recovery procedures (\eqref{Constrained_Optimization_I} and \eqref{Constrained_Optimization_II}) in the setting where $\mPhi$ is a random orthogonal matrix ($m=n$) and measurements are noiseless ($\delta = 0$). In the same setting, the work of Amelunxen \textit{et al.} \cite{amelunxen2014living} establishes the sharpness of their recovery results, by specifying an appropriate threshold under which the constrained procedures fail with high probability. Foygel and Mackey \cite{foygel2014corrupted} study both the constrained recovery procedures (\eqref{Constrained_Optimization_I} and \eqref{Constrained_Optimization_II}) and the partially penalized recovery procedure \eqref{Partially_Penalized_Optimization} in the setting where $\mPhi$ is a Gaussian matrix ($m \leq n$ or $m > n$) with i.i.d. entries and  the noise is bounded. A recent work of Zhang \textit{et al.} \cite{Zhang2017} establishes the phase transition theory of the constrained recovery procedures under the setting of \cite{foygel2014corrupted}. In the present paper, we establish performance guarantees for all three convex recovery procedures in the setting where $\mPhi$ is a sub-Gaussian matrix and the noise is bounded or sub-Gaussian. These results solve a series of open problems in \cite{foygel2014corrupted}, for example, allowing non-Gaussian sensing matrix and stochastic unstructured noise in observation model \eqref{model: observe} and analyzing the fully penalized convex recovery procedure \eqref{Fully Penalized Optimization} for arbitrary structured signals and structured corruptions.

\subsection{Organization}

The remainder of the paper is organized as follows. We begin by reviewing some preliminaries that are necessary for our subsequent analysis in Section \ref{Preliminaries}. In Section \ref{Extended_Matrix_Deviation_Inequality}, we establish a mathematical tool which is critical for analyzing the corrupted sensing problem. Section \ref{PerformanceGuarantees} is devoting to presenting
 the main theoretical results of the paper.  We then present results from numerical simulations in Section \ref{Simulations}. We conclude the paper in Section \ref{Conclusion}.

\section{Preliminaries}\label{Preliminaries}
In this section, we review some useful concepts from convex geometry and high-dimensional probability that underlie our analysis. Throughout the paper, $\S^{n-1}$ and $\B_2^n$ denote the unit sphere and ball in $\R^n$ under the $\ell_2$ norm respectively, while $\B_f^n: = \{\vu \in \R^n : f(\vu) \leq 1\}$ is the unit ball in $\R^n$ under the norm defined by $f$. The compatibility constant between $f$ and the $\ell_2$ norm is defined as $\alpha_f := \sup_{\vu\neq 0}{f(\vu)}/{\|\vu\|_2}$. We use the notation $C, C', c, c', \textrm{etc.},$ to refer to positive constants, whose value may change from line to line.


\subsection{Convex Geometry}

\subsubsection{Subdifferential}
The \emph{subdifferential} of $f$ at $\vx$ is the set of vectors
\begin{equation*}\label{DefinitionofSubdifferential}
  \partial f(\vx) = \{ \vu \in \R^n: f(\vx + \vd) \geq f(\vx) + \langle \vu, \vd\rangle ~~ \textrm{for all} ~~\vd \in \R^n \}.
\end{equation*}
For any number $\kappa \geq 0$, we denote the scaled (by $\kappa$) subdifferential as $\kappa \cdot \partial f(\vx)  =  \{\kappa \vu : \vu \in \partial f(\vx)\}$.

\subsubsection{Tangent Cone and Normal Cone}
A cone is a set that is closed under multiplication by positive scalars. The \emph{tangent cone} of $f$ at $\vx$ is defined as the set of descent directions of $f$ at $\vx$
\begin{equation*}\label{DefinitionofTangentCone}
    \mathcal{T}_f = \{\vu \in \R^n: f(\vx+t \cdot \vu) \le f(\vx)~~\textrm{for some}~~t>0\}.
\end{equation*}
The \emph{normal cone} of $f$ at $\vx$ is the polar of the tangent cone, given by
\begin{equation*}\label{DefinitionofNormalCone}
    \mathcal{N}_f = \{\vu \in \R^n: \langle\vu, \vd\rangle \leq 0~~\textrm{for all}~~ \vd \in \mathcal{T}_f\},
\end{equation*}
which may be written as the cone hull of the subdifferential (if $0 \notin \partial f(\vx)$) \cite[Theorem 1.3.5]{hiriart1993convex}
\begin{align*}\label{subgradientNormalCone}
  \mathcal{N}_f  = \cone\{\partial f(\vx)\}= \{\vu \in \R^n: \vu \in t \cdot \partial f(\vx)~~\textrm{for some}~~t>0 \}.
\end{align*}
Fig. \ref{fig:Cone} illustrates related concepts in $\R^2$ for the $\ell_1$ norm.

\begin{figure*}
  	\centering
  		\includegraphics[width= .6\textwidth]{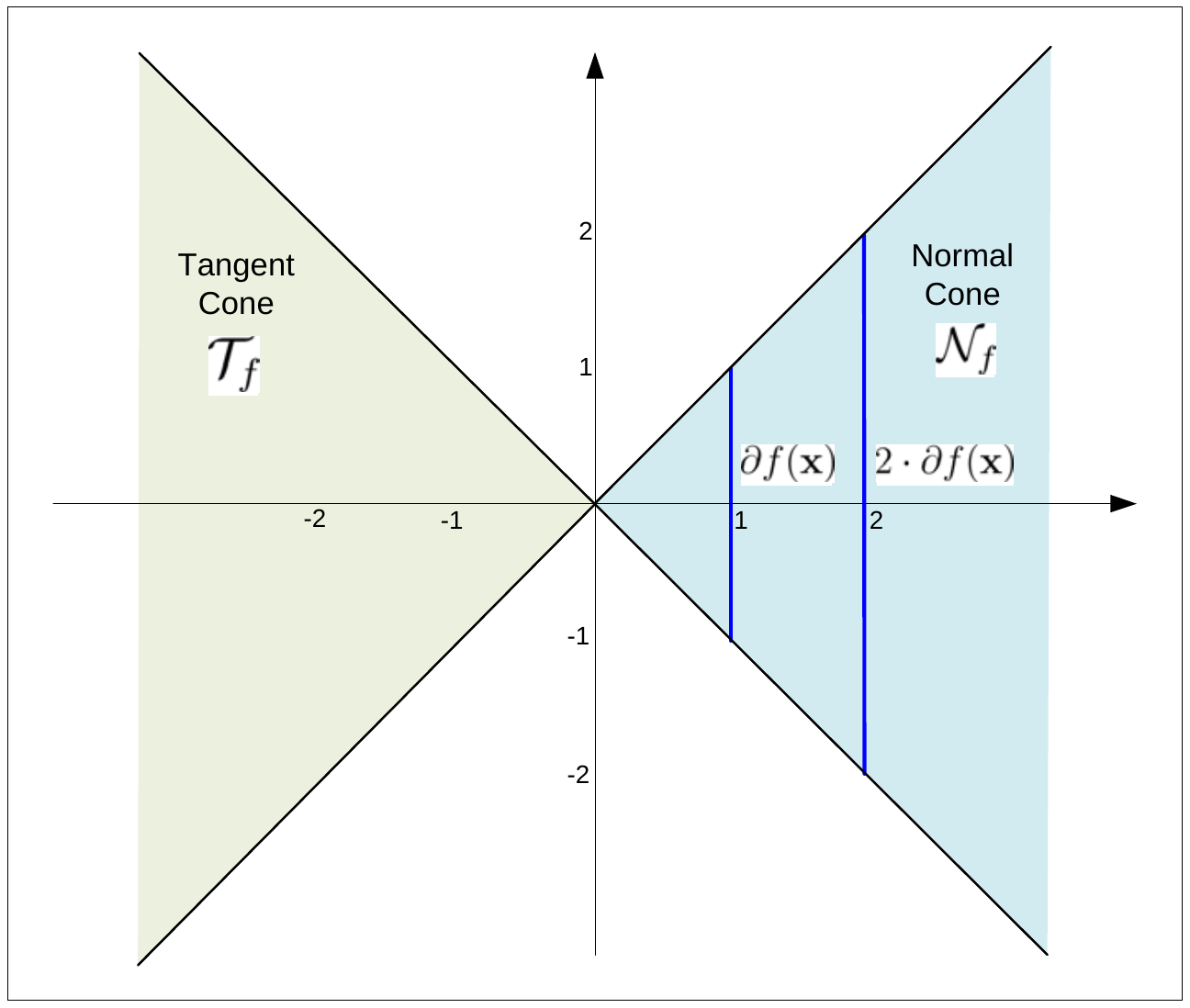}
	\caption{Illustrations of the subdifferential, scaled subdifferential, normal cone, and tangent cone for $f(\vx) = \|\vx\|_1$ at the point $\vx = (1, 0)$. (For clarity, cones are shifted to the original point.)}
	\label{fig:Cone}
\end{figure*}

\subsubsection{Gaussian Complexity, Gaussian Width, and Gaussian Distance}

For any $\TT\subseteq\R^{n}$, a simple way to quantify the ``size'' of $\TT$ is through the \emph{Gaussian complexity}
\begin{equation*}\label{Definition_Gaussian_complexity}
  \gamma(\TT) := \E \sup_{\vx \in \TT} |\langle \vg, \vx \rangle|, ~~ \textrm{where} ~~\vg\sim\NN(0,\mI_n).
\end{equation*}
Another popular geometric quantity closely related to Gaussian complexity is the \emph{Gaussian width}
\begin{equation*}\label{Definition_Gaussian_width}
  \omega(\TT) := \E \sup_{\vx \in \TT} \langle \vg, \vx \rangle, ~~ \textrm{where} ~~\vg\sim\NN(0,\mI_n).
\end{equation*}
In particular\footnote{Here we use a slightly sharper upper bound than that in \cite{liaw2016simple}. Indeed, since $\TT - \TT$ is original symmetric, then
$2\omega(\TT) = \gamma(\TT-\TT) = \E \sup_{\vx, \vy \in \TT} |\langle \vg, \vx-\vy \rangle| \geq \E \sup_{\vx \in \TT} |\langle \vg, \vx-\vy \rangle| \geq \E \sup_{\vx \in \TT} [|\langle \vg, \vx\rangle| - |\langle \vg, \vy\rangle|] \geq  \gamma(\TT) - (\E|\langle \vg, \vy\rangle|^2 )^\frac{1}{2}= \gamma(\TT) - \|\vy\|_2, ~ \forall~ \vy \in \TT.$ Rearranging yields the desired result.} \cite{liaw2016simple},
\begin{equation}\label{Relation}
  \left(\omega(\TT)+\|\vy\|_2\right)/3 \leq \gamma(\TT) \leq 2\omega(\TT)+\|\vy\|_2 ~~ \textrm{for every}~ \vy \in \TT.
\end{equation}
The \emph{Gaussian squared distance} $\eta^2(\TT)$ of a subset $\TT\subset\R^n$ is defined as
\begin{align*}
	\eta^2(\TT):=\E\inf_{\vu\in\TT}\|\vg-\vu\|_2^2,~~\textrm{where}~~\vg\sim\NN(0,\mI_n).
\end{align*}


\subsection{High-Dimensional Probability}


%
\subsubsection{Sub-Gaussian and Sub-exponential Random Variables}

A random variable $X$ is sub-Gaussian if its distribution is dominated by a Gaussian distribution. Formally, $X$ is called a \emph{sub-Gaussian} random variable if the Orlicz norm
\begin{equation}\label{Sub-Gaussian_Definition}
  \|X\|_{\psi_2} = \inf\{K > 0: \E \psi_2(|X|/K) \leq 1\}
\end{equation}
is finite for the Orlicz function $\psi_2(x) = \exp(x^2) -1$. The \emph{sub-Gaussian norm} of $X$, denoted $\|X\|_{\psi_2}$, is defined to be the smallest $K$ in \eqref{Sub-Gaussian_Definition}.
There are several equivalent definitions used in the literature. In particular, $X$ is sub-Gaussian if there exist absolute constants $C>1$ and $K$ such that
\begin{equation}\label{Sub-Gaussian_Definition1}
  \big[\E|X|^p\big]^{1/p} \leq K \sqrt{p},~~\text{ for all } p \geq 1
\end{equation}
or
\begin{equation}\label{Sub-Gaussian_Definition2}
    \Pr{\left| X \right| \geq t } \leq C \exp\left\{-{t^2}/K^2 \right\},~~\text{ for all } t \geq 0.
\end{equation}
Thus we can redefine the sub-Gaussian norm of $X$ as the smallest $K$ such that \eqref{Sub-Gaussian_Definition1} or \eqref{Sub-Gaussian_Definition2} holds. One can show that $\|X\|_{\psi_2}$ defined this way is within an absolute constant factor from that defined in \eqref{Sub-Gaussian_Definition}, see \cite[Section 5.2.3]{vershynin2010introduction}. Classical examples of sub-Gaussian random variables include Gaussian, Bernoulli, and all bounded random variables. Similarly, a random variable $X$ is sub-exponential if its distribution is dominated by an exponential distribution. More precisely, $X$ is called a \emph{sub-exponential} random variable if the Orlicz norm
\begin{equation}\label{Sub-exponential_Definition}
  \|X\|_{\psi_1} = \inf\{K > 0: \E \psi_1(|X|/K) \leq 1\}
\end{equation}
is finite for the Orlicz function $\psi_1(x) = \exp(x) -1$. The \emph{sub-exponential norm} of $X$, denoted $\|X\|_{\psi_1}$, is defined to be the smallest $K$ in \eqref{Sub-exponential_Definition}.

\subsubsection{Sub-Gaussian and Sub-exponential Random Vectors}

A random vector $\vx$ in $\R^n$ is sub-Gaussian (or sub-exponential) if all of its one-dimensional marginals are sub-Gaussian (or sub-exponential) random variables and its $\psi_2$-norm (or $\psi_1$-norm) is defined as
\begin{equation} \label{subGaussian vector}
\|\vx\|_{\psi_2}:=\sup_{\vy\in\S^{n-1}}\big\| \ip{\vx}{\vy} \big\|_{\psi_2} ~~~~\textrm{or} ~~~~ \|\vx\|_{\psi_1}:=\sup_{\vy\in\S^{n-1}}\big\| \ip{\vx}{\vy} \big\|_{\psi_1}.
\end{equation}
A random vector $\vx$ in $\R^n$ is called \emph{isotropic} if
\begin{equation*}\label{Isotropic_Definition}
\E(\vx\vx^T) = \mI_n.
\end{equation*}
For basic properties of sub-Gaussian and sub-exponential random variables (vectors), see e.g., \cite{vershynin2010introduction}, \cite[Chapter 2]{vershynin2016book}. Some properties which will be used in our proofs are included in Appendix A.

\subsubsection{Talagrand's Majorizing Measure Theorem}
The key ingredient in the proofs of our mathematical tools is the Talagrand's Majorizing Measure Theorem which states that any sub-Gaussian process is dominated by a Gaussian process with the same (or larger) increments.
\begin{fact}[Talagrand's Majorizing Measure Theorem]
	\label{Talagrand Them}
	Let $( X_{\vu} )_{\vu\in \TT}$ be a random process indexed by points in a bounded set $\TT \subset \R^{n}$. Assume that the process has sub-Gaussian increments, that is, there exists $M \geq 0$ such that
	\begin{equation*}
	\| X_{\vu} - X_{\vv} \|_{\psi_2} \leq M \|\vu-\vv\|_2 ~~~~ \text{for every} ~~ \vu,\vv \in \TT.
	\end{equation*}
    Then
	\begin{equation}\label{Expectation_Bound}
	\E \sup_{\vu,\vv \in \TT} \big| X_{\vu} - X_{\vv} \big| \leq C M \omega(\TT).
	\end{equation}
    Moreover, for any $t\geq 0$, the event
    \begin{equation}\label{High_Probability_Bound}
	\sup_{\vu, \vv \in \TT} \big|X_{\vu} - X_{\vv}\big| \leq CM \big[ \omega(\TT) + t \cdot \diam(\TT) \big]
	\end{equation}
    holds with probability at least $1- \exp(-t^2)$, where $\diam(\TT) := \sup_{\vx,\vy\in \TT}\|\vx-\vy\|_2$ denotes the diameter of $\TT$.
\end{fact}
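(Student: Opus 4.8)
The plan is to derive this from the \emph{generic chaining} bound together with Talagrand's majorizing measure theorem for \emph{Gaussian} processes. After dividing through by $M$, the hypothesis says precisely that $(X_\vu)$ is a process whose increments are sub-Gaussian with respect to the Euclidean metric on $\TT$, and the whole difficulty is to organize the multi-scale structure of $\TT$ under that metric. Concretely, I would first fix a nested \emph{admissible sequence} $\TT_0\subseteq\TT_1\subseteq\cdots\subseteq\TT$ with $|\TT_0|=1$ and $|\TT_k|\le 2^{2^k}$, chosen so that
\[
\sup_{\vu\in\TT}\ \sum_{k\ge0}2^{k/2}\,\dist(\vu,\TT_k)\ \le\ C\,\gamma_2(\TT),
\]
where $\gamma_2(\TT)$ is Talagrand's $\gamma_2$-functional of $(\TT,\|\cdot\|_2)$. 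Writing $\pi_k(\vu)$ for a nearest point of $\TT_k$ to $\vu$ (so $\pi_0(\vu)=\vu_0$ is a fixed root) and telescoping $X_\vu-X_{\vu_0}=\sum_{k\ge1}(X_{\pi_k(\vu)}-X_{\pi_{k-1}(\vu)})$ reduces everything to controlling the chain increments.

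For the expectation bound \eqref{Expectation_Bound}, note that at level $k$ there are at most $|\TT_k|\cdot|\TT_{k-1}|\le 2^{2^{k+1}}$ distinct increments $X_{\pi_k(\vu)}-X_{\pi_{k-1}(\vu)}$, each sub-Gaussian with parameter at most $M\,\|\pi_k(\vu)-\pi_{k-1}(\vu)\|_2\le M(\dist(\vu,\TT_k)+\dist(\vu,\TT_{k-1}))$. Applying the sub-Gaussian tail \eqref{Sub-Gaussian_Definition2} together with a union bound over these $2^{2^{k+1}}$ values, at the scale $\asymp 2^{k/2}$ times their length, controls all level-$k$ increments simultaneously with overwhelming probability; the exponent $2^{k+1}\ln2$ coming from the cardinalities is absorbed by choosing the scale constant large. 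Integrating the resulting uniform bound and summing the geometric series in $k$ gives $\E\sup_\vu|X_\vu-X_{\vu_0}|\le CM\gamma_2(\TT)$, hence $\E\sup_{\vu,\vv}|X_\vu-X_\vv|\le 2CM\gamma_2(\TT)$.

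For the tail bound \eqref{High_Probability_Bound}, I would re-run the same computation with the level-$k$ failure threshold taken to be $\asymp(2^{k/2}+t)$ times the increment length: the residual probability is then $\sum_{k\ge1}\exp(-c\,2^k-c\,t^2)\le C\exp(-ct^2)$, while on the good event the $2^{k/2}$-part sums to $\lesssim\gamma_2(\TT)$ as before and the $t$-part sums to $\lesssim t\,\diam(\TT)$, yielding \eqref{High_Probability_Bound} after rescaling $t$ and the constants. The last step is to replace $\gamma_2(\TT)$ by $\omega(\TT)$: apply Talagrand's majorizing measure theorem for Gaussian processes to the canonical process $G_\vu=\langle\vg,\vu\rangle$ on $\TT\subset\R^n$, whose increment metric is exactly $\|\vu-\vv\|_2$, to get $\gamma_2(\TT,\|\cdot\|_2)\le C\,\E\sup_{\vu\in\TT}G_\vu=C\,\omega(\TT)$. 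Substituting this into the two previous bounds gives \eqref{Expectation_Bound} and \eqref{High_Probability_Bound}.

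The main obstacle is this last conversion: the inequality $\gamma_2(\TT,d)\lesssim\E\sup_\vu G_\vu$ is the deep half of the majorizing measure theorem, and proving it from scratch requires Talagrand's partitioning scheme — iteratively refining partitions of $\TT$ and lower-bounding the Gaussian supremum on each cell via Sudakov minoration, while keeping the cardinality growth in check through a growth-functional argument. Since this is a classical result, I would simply cite it; with it in hand, the chaining arguments of the previous two paragraphs are routine bookkeeping, the only minor point of care being to make the constant in \eqref{High_Probability_Bound} uniform in $t$ (which is handled by treating the regime of large $t$, where the inequality is trivial because $\diam(\TT)$ then dominates, separately from the chaining regime).
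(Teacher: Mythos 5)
First, a point of comparison: the paper does not prove this Fact at all --- it imports the expectation bound from Talagrand's book and the tail bound from Dirksen or Liaw \emph{et al.}, so your attempt is a reconstruction of the standard literature proof rather than a parallel to any argument in the paper. Your outline of the expectation bound \eqref{Expectation_Bound} is the standard one and is sound: chain along an admissible sequence, union-bound the at most $2^{2^{k+1}}$ level-$k$ increments at scale $\asymp 2^{k/2}$ times their length, sum the series to get $CM\gamma_2(\TT)$, and invoke the deep half of the majorizing measure theorem to replace $\gamma_2(\TT,\|\cdot\|_2)$ by $C\,\omega(\TT)$; citing that last step is entirely appropriate.

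The tail bound \eqref{High_Probability_Bound}, however, has a genuine gap. With your level-$k$ threshold $\asymp(2^{k/2}+t)\,\|\pi_k(\vu)-\pi_{k-1}(\vu)\|_2$, the $t$-part of the chaining sum is $t\sum_{k}\|\pi_k(\vu)-\pi_{k-1}(\vu)\|_2$, and this series is controlled only by $\sum_k\big(\dist(\vu,\TT_k)+\dist(\vu,\TT_{k-1})\big)\lesssim\gamma_2(\TT)$, not by $\diam(\TT)$: each link is small, but the total length of the chain has no reason to be comparable to a single diameter (for $\TT$ a Euclidean ball in $\R^n$ it is larger by a factor $\sqrt{n}$). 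What your argument actually yields is $\sup_{\vu,\vv}|X_\vu-X_\vv|\lesssim M(1+t)\,\omega(\TT)$, which is strictly weaker than \eqref{High_Probability_Bound} whenever $\diam(\TT)\ll\omega(\TT)$, and that weaker form would not support the paper's applications: in Corollary \ref{CorollaryofMDI} and Theorems \ref{them: Constrained Recovery}--\ref{them: Fully Penalized Recovery} one takes $t\asymp\gamma(\TT)$ with $\rad(\TT)=1$, so the deviation must be of order $\gamma(\TT)+t$ and not $(1+t)\gamma(\TT)$; otherwise the sample-size requirement degrades from $m\gtrsim\gamma^2$ to $m\gtrsim\gamma^4$. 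Your closing remark that the large-$t$ regime is ``trivial because $\diam(\TT)$ dominates'' is also not right --- even for large $t$ one needs a bound uniform over infinitely many pairs. The standard repair (Dirksen, Theorem~3.2; Talagrand, Theorem~2.2.27) is to truncate the chain at the level $j$ with $2^{j}\asymp t^2$: bound the \emph{single} increment $X_{\pi_j(\vu)}-X_{\vu_0}$, whose $\psi_2$-norm is at most $M\diam(\TT)$, by $C\,t\,M\diam(\TT)$ via a union bound over the at most $2^{2^{j}}\leq e^{Ct^2}$ points of $\TT_j$, and chain only the levels $k>j$, where $2^{k/2}>t$ already pushes the failure probabilities below $e^{-t^2}$ and the thresholds $2^{k/2}\|\pi_k(\vu)-\pi_{k-1}(\vu)\|_2$ sum to $\gamma_2(\TT)$.
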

\begin{remark}
	The expectation bound \eqref{Expectation_Bound} of this theorem can be found e.g. in \cite[Theorem~2.1.1, 2.1.5]{talagrand2006generic}. The high probability bound \eqref{High_Probability_Bound} can be found e.g. in \cite[Theorem~3.2]{dirksen2015tail} or \cite[Theorem~4.1]{liaw2016simple}.
\end{remark}

\section{Extended Matrix Deviation Inequality}\label{Extended_Matrix_Deviation_Inequality}
In this section, we establish a mathematical tool which provides a unified framework for analyzing all three recovery procedures. Note that if we denote $\bm{\Upsilon} = [\bm{\Phi}, \mI_m]$ and $\vs^{\star} = [(\vx^{\star})^T, (\vv^{\star})^T]^T$, then the observation model \eqref{model: observe} can be reformulated as $\vy = \bm{\Upsilon}\vs^{\star} + \vz $, which is the same as the standard compressed sensing model. It is now well known that the restricted singular value of the sensing matrix $\bm{\Upsilon}$ (see, e.g., \cite{bickel2009simultaneous, chandrasekaran2012convex}) plays a key role in analyzing the compressed sensing problem. With this observation in mind, we establish an extended matrix deviation inequality which implies a tight lower bound for the restricted singular value of the extended sensing matrix $[\bm{\Phi}, \mI_m]$. This result states that for any isotropic sub-Gaussian matrix $\mA$ and any bounded subset $\TT \subseteq \R^n\times\R^m$, the deviation of $\|\mA\va+\sqrt{m}\vb\|_2$ around $\sqrt{m}\cdot\sqrt{\|\va\|_2^2 + \|\vb\|_2^2}$ over vectors $(\va, \vb) \in \TT$ is uniformly bounded by the Gaussian complexity of $\TT$.

\begin{theorem}[Extended Matrix Deviation Inequality]
	\label{them: mat dev ineq}
	Let $\mA$ be an $m \times n$ matrix whose rows $\mA_i$ are independent centered isotropic sub-Gaussian vectors with $K = \max_i \|\mA_i\|_{\psi_2}$, and $\TT$ be a bounded subset of $\R^n\times\R^m$. Then
	\begin{align*}
	\E\sup_{(\va,\vb)\in \TT} \left| \|\mA\va+\sqrt{m}\vb\|_2 - \sqrt{m}\cdot\sqrt{\|\va\|_2^2 + \|\vb\|_2^2} \right| \leq CK^2\cdot\gamma(\TT).
	\end{align*}
	For any $t\geq 0$, the event
	\begin{align*}
	\sup_{ (\va,\vb)\in \TT }\left| \|\mA\va+\sqrt{m}\vb\|_2 - \sqrt{m}\cdot\sqrt{\|\va\|_2^2 + \|\vb\|_2^2} \right| \leq CK^2[ \gamma(\TT) + t\cdot \rad(\TT) ]
	\end{align*}
	holds with probability at least $1-\exp(-t^2)$, where $\rad(\TT) := \sup_{\vx \in \TT}\|\vx\|_2$ denotes the radius of $\TT$.
\end{theorem}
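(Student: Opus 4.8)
The plan is to reduce the extended deviation inequality to the standard matrix deviation inequality of Liaw et al.\ \cite{liaw2016simple} via a stacking trick, and then invoke Talagrand's Majorizing Measure Theorem (Fact \ref{Talagrand Them}) to control the relevant sub-Gaussian process. First I would define the augmented matrix $\bm{B} = [\mA, \sqrt{m}\,\mI_m] \in \R^{m\times(n+m)}$, so that $\|\mA\va + \sqrt{m}\vb\|_2 = \|\bm{B}\vs\|_2$ where $\vs = [\va^T, \vb^T]^T$. The natural centering quantity is $\sqrt{m}\,\|\vs\|_2 = \sqrt{m}\sqrt{\|\va\|_2^2 + \|\vb\|_2^2}$, which is exactly $(\E\|\bm{B}\vs\|_2^2)^{1/2}$ after observing that $\E\|\bm{B}\vs\|_2^2 = \E\|\mA\va\|_2^2 + 2\sqrt{m}\,\E\langle \mA\va, \vb\rangle + m\|\vb\|_2^2 = m\|\va\|_2^2 + 0 + m\|\vb\|_2^2$, using that each row $\mA_i$ is centered and isotropic (so $\E\|\mA\va\|_2^2 = \sum_i \E\langle\mA_i,\va\rangle^2 = m\|\va\|_2^2$ and $\E\langle\mA\va,\vb\rangle = \sum_i b_i \E\langle\mA_i,\va\rangle = 0$). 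So the target is precisely a deviation-of-the-norm statement for $\bm{B}$, and I would like to apply the Liaw--Vershynin--Ivanov--Rudelson bound directly.

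The obstruction is that $\bm{B}$ does not have i.i.d.\ or even independent \emph{isotropic} rows in the sense required by the off-the-shelf theorem: the $i$-th row of $\bm{B}$ is $[\mA_i, \sqrt{m}\,\ve_i]$, which is a fixed shift of a sub-Gaussian row, hence not centered, and the deterministic block $\sqrt{m}\mI_m$ is singular-ish in structure. Rather than force the black-box theorem, I would instead prove the result from scratch by bounding the sub-Gaussian increments of the process directly and applying Fact \ref{Talagrand Them}. Define, for $\vs = (\va,\vb) \in \TT$, the random variable $X_{\vs} = \|\mA\va + \sqrt{m}\vb\|_2 - \sqrt{m}\,\|\vs\|_2$. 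For two points $\vs = (\va,\vb)$ and $\vs' = (\va',\vb')$, the reverse triangle inequality gives
\begin{align*}
|X_{\vs} - X_{\vs'}| \le \big| \|\mA\va + \sqrt{m}\vb\|_2 - \|\mA\va' + \sqrt{m}\vb'\|_2 \big| + \sqrt{m}\,\big| \|\vs\|_2 - \|\vs'\|_2 \big| \le \|\mA(\va - \va') + \sqrt{m}(\vb - \vb')\|_2 + \sqrt{m}\,\|\vs - \vs'\|_2.
\end{align*}
The key step is then a one-sided "concentration of the norm" estimate: for a fixed vector $\vw = (\vp, \vq)$, I would show $\big\| \|\mA\vp + \sqrt{m}\vq\|_2 - \sqrt{m}\,\|\vw\|_2 \big\|_{\psi_2} \le C K^2 \|\vw\|_2$. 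This follows by writing $\|\mA\vp + \sqrt{m}\vq\|_2^2 - m\|\vw\|_2^2 = \sum_{i=1}^m \big( \langle\mA_i,\vp\rangle^2 - \|\vp\|_2^2 \big) + 2\sqrt{m}\sum_i q_i \langle\mA_i,\vp\rangle$, recognizing the first sum as a sum of independent centered sub-exponential variables (each $\|\langle\mA_i,\vp\rangle^2 - \|\vp\|_2^2\|_{\psi_1} \lesssim K^2\|\vp\|_2^2$) and the second as a sum of independent centered sub-Gaussians, applying Bernstein's inequality to get $\big\| \|\mA\vp+\sqrt{m}\vq\|_2^2 - m\|\vw\|_2^2 \big\|_{\psi_1} \lesssim K^2 \sqrt{m}\,\|\vw\|_2 \cdot \sqrt{m}\,\|\vw\|_2$ combined with the elementary bound $|a - b| \le |a^2 - b^2|/b$ (for the regime $a \ge b/2$) together with a separate crude tail bound on the complementary event, to pass from the square to the norm. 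This is the standard centering argument and I would cite the analogous lemma in \cite{vershynin2016book} or \cite{liaw2016simple} to keep it short. Applying this with $\vw = \vs - \vs'$ gives
\begin{align*}
\|X_{\vs} - X_{\vs'}\|_{\psi_2} \le C K^2 \|\vs - \vs'\|_2 + \sqrt{m}\,\|\vs - \vs'\|_2 \le C' K^2 \|\vs - \vs'\|_2,
\end{align*}
absorbing the deterministic term (note $K \gtrsim 1$ for any isotropic vector). Thus $(X_{\vs})_{\vs\in\TT}$ has $C'K^2$-sub-Gaussian increments.

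With the increment bound in hand, Fact \ref{Talagrand Them} applied to $\TT$ immediately yields $\E \sup_{\vs,\vs'\in\TT} |X_{\vs} - X_{\vs'}| \le C K^2 \omega(\TT)$ and the high-probability version $\sup_{\vs,\vs'\in\TT}|X_{\vs}-X_{\vs'}| \le CK^2[\omega(\TT) + t\,\diam(\TT)]$ with probability $\ge 1 - \exp(-t^2)$. To convert a bound on $\sup |X_{\vs} - X_{\vs'}|$ into a bound on $\sup |X_{\vs}|$, I would fix a reference point $\vs_0 \in \TT$ and use $\sup_{\vs}|X_{\vs}| \le |X_{\vs_0}| + \sup_{\vs,\vs'}|X_{\vs}-X_{\vs'}|$. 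The term $\E|X_{\vs_0}|$ is controlled by the same centering lemma applied to the single vector $\vs_0$, giving $\E|X_{\vs_0}| \lesssim K^2 \|\vs_0\|_2 \le K^2 \rad(\TT)$; alternatively, and more cleanly, one augments $\TT$ with the origin or simply notes $X_{\vzero} = 0$ if $\vzero \in \TT$ — but since $\TT$ need not contain the origin, I would handle it by the reference-point bound and then replace $\omega(\TT) + \rad(\TT)$ (or $\omega(\TT) + \|\vs_0\|_2$) by $\gamma(\TT)$ using inequality \eqref{Relation}, which states $\gamma(\TT) \ge (\omega(\TT) + \|\vs_0\|_2)/3$ for any $\vs_0 \in \TT$. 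Similarly $\diam(\TT) \le 2\rad(\TT)$ handles the tail term. Collecting constants gives both displayed inequalities. The main obstacle, as flagged, is the centering lemma — getting the sub-Gaussian norm of $\|\mA\vp + \sqrt{m}\vq\|_2 - \sqrt{m}\|\vw\|_2$ sharp (scaling linearly in $\|\vw\|_2$ with the right $K^2$ dependence) rather than picking up stray $\sqrt{m}$ factors — and everything else is bookkeeping with Fact \ref{Talagrand Them} and \eqref{Relation}.
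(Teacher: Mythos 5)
Your overall architecture is the right one and matches the paper's: show that $X_{\va,\vb}=\|\mA\va+\sqrt{m}\vb\|_2-\sqrt{m}\sqrt{\|\va\|_2^2+\|\vb\|_2^2}$ has sub-Gaussian increments, feed that into Talagrand's Majorizing Measure Theorem, and handle the reference point via \eqref{Relation}. Your ``centering lemma'' for a single fixed vector is also exactly the paper's Step 1 (Lemma \ref{lemma: concen of cor random mat}). But the step where you derive the increment bound is wrong, and it is precisely the step that carries all the difficulty. You bound
$|X_{\vs}-X_{\vs'}|\le\|\mA(\va-\va')+\sqrt{m}(\vb-\vb')\|_2+\sqrt{m}\|\vs-\vs'\|_2$
by the triangle inequality and then claim $\|X_{\vs}-X_{\vs'}\|_{\psi_2}\le CK^2\|\vs-\vs'\|_2+\sqrt{m}\|\vs-\vs'\|_2\le C'K^2\|\vs-\vs'\|_2$, ``absorbing the deterministic term since $K\gtrsim 1$.'' That absorption requires $\sqrt{m}\lesssim K^2$, which is false in the regime of interest ($K$ is an absolute constant, e.g.\ for Gaussian or Bernoulli rows, while $m\to\infty$). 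Worse, the first term on the right is itself a nonnegative random variable with mean $\approx\sqrt{m}\|\vs-\vs'\|_2$, so its $\psi_2$-norm is at least of order $\sqrt{m}\|\vs-\vs'\|_2$; the crude triangle-inequality split destroys the cancellation between the two norms that makes $X_{\vs}-X_{\vs'}$ small. With the increment constant $M\asymp\sqrt{m}$ that your argument actually delivers, Talagrand would only give $\E\sup|X_{\va,\vb}|\lesssim\sqrt{m}\,\gamma(\TT)$, and the downstream restricted-singular-value bound \eqref{LowerBound} would read $\sqrt{m}-C\sqrt{m}\,\gamma(\TT\cap\S^{n+m-1})$, which is vacuous.

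The missing idea is the one in the paper's Step 2 (Lemma \ref{lemma: double units}): for two points on the unit sphere, split on $t\ge 2\sqrt{m}$ versus $t\le 2\sqrt{m}$. In the large-$t$ regime your triangle-inequality bound \emph{does} work, because then $t-\sqrt{m}\ge t/2$ and Step 1 applies to the normalized difference. In the small-$t$ regime one must multiply the difference of norms by $\|\mA\va+\sqrt{m}\vb\|_2+\|\mA\va'+\sqrt{m}\vb'\|_2$ to convert it into a difference of \emph{squared} norms --- a quadratic form controlled by Bernstein's inequality at scale $t\sqrt{m}$ --- and then condition on the event $\|\mA\va+\sqrt{m}\vb\|_2\ge\sqrt{m}/2$ to divide back down by $\sqrt{m}$ and recover the scale $t$ with constant $K^2$. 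A further point your proposal skips: the extension from unit vectors to arbitrary $(\va,\vb),(\va',\vb')$ is not automatic (the process is only positively homogeneous), and the paper's Step 3 handles it by projecting one point onto the sphere and exploiting colinearity. Without these two ingredients the proof does not go through.
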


\begin{proof}
  See Appendix \ref{ProofofEMDI}.
\end{proof}

\begin{remark}[Anisotropic case] By using a simple linear transform, Theorem \ref{them: mat dev ineq} can be extended to the anisotropic case in which each row of $\mA$ satisfies $\E \mA_i^T \mA_i = \mSigma $ for some invertible covariance matrix $\mSigma$. To see this, consider the whitened version $\mB = \mA \mSigma^{-1/2}$.
Note that
\begin{align*}
\|\mB_i\|_{\psi_2} & = \sup_{\vy\in\S^{n-1}}\big\| \ip{\mB_i^T}{\vy} \big\|_{\psi_2}= \sup_{\vy\in\S^{n-1}}\left\| \ip{\mA_i^T}{\mSigma^{-1/2}\vy} \right\|_{\psi_2} \\
                   & = \sup_{\vy\in\S^{n-1}}\|\mSigma^{-1/2}\vy\|_2 \cdot \left\| \ip{\mA_i^T}{\mSigma^{-1/2}\vy/\|\mSigma^{-1/2}\vy\|_2} \right\|_{\psi_2} \\
                   &\leq \|\mSigma^{-1/2}\| \cdot K.
\end{align*}
Define the block diagonal matrix $\tilde{\mSigma} = \diag\{\mSigma^{1/2}, \mI_m\}$, then we have
\begin{align*}
  \E\sup_{(\va,\vb)\in \TT} \left| \|\mA\va+\sqrt{m}\vb\|_2 - \sqrt{m}\cdot\sqrt{\|\mSigma^{1/2}\va\|_2^2 + \|\vb\|_2^2} \right| &=  \E\sup_{(\va,\vb)\in \TT} \left| \|\mB\mSigma^{1/2}\va+\sqrt{m}\vb\|_2 - \sqrt{m}\cdot\sqrt{\|\mSigma^{1/2}\va\|_2^2 + \|\vb\|_2^2} \right|\\
   & = \E\sup_{(\va,\vb)\in \tilde{\mSigma}\TT} \left| \|\mB\va+\sqrt{m}\vb\|_2 - \sqrt{m}\cdot\sqrt{\|\va\|_2^2 + \|\vb\|_2^2} \right|\\
   & \leq C\|\mSigma^{-1}\| K^2 \gamma(\tilde{\mSigma}\TT)\\
   & \leq C\|\mSigma^{-1}\| \max\{\|\mSigma\|^{1/2}, 1\} K^2 \gamma(\TT).
\end{align*}
The first inequality follows from Theorem \ref{them: mat dev ineq}. The last inequality holds because $\gamma(\tilde{\mSigma}\TT) \leq \|\tilde{\mSigma}\| \gamma(\TT) = \max\{\|\mSigma\|^{1/2}, 1\} \gamma(\TT)$. Similarly, we can obtain the anisotropic version for the second part of Theorem \ref{them: mat dev ineq}.
\end{remark}

%

\begin{remark} Very recently, Liaw \textit{et al.} \cite{liaw2016simple} have shown that if $\mA$ is an isotropic (or anisotropic) sub-Gaussian matrix, then
\begin{equation*}
  \E \sup_{\va \in \TT} \big|\|\mA \va\|_2 - \sqrt{m}\|\va\|_2 \big| \leq C K^2 \gamma(\TT)
\end{equation*}
and the event
\begin{equation}\label{HighProbabilityBoundLiaw}
  \sup_{\va \in \TT} \big|\|\mA \va\|_2 - \sqrt{m}\|\va\|_2 \big| \leq C K^2 [\gamma(\TT)+t \rad(\TT)]
\end{equation}
 holds with probability at least $1-\exp(-t^2)$. It is not hard to see that these results can not be directly applied to the corrupted sensing problem, because the covariance matrix of each row of $[\bm{\Phi}, \mI_m]$ is singular. However, $[\bm{\Phi}, \mI_m]$ still has independent sub-Gaussian rows. This fact allows us to generalize the analysis in \cite{liaw2016simple} to the extended case. It is worth noting that there are several earlier variants of \eqref{HighProbabilityBoundLiaw} proved in \cite{klartag2005empirical,mendelson2007reconstruction,dirksen2015tail}. See \cite{liaw2016simple} for their comparisons.

\end{remark}

When $\TT$ is a subset of the unit sphere, we have the following corollary.

\begin{corollary}\label{CorollaryofMDI}
  Under the assumptions of Theorem \ref{them: mat dev ineq}, for any $t \geq 0$, the event
  \begin{equation}
  \sup_{ (\va,\vb)\in \TT \cap \S^{n+m-1}  }\left| \|\mA\va+\sqrt{m}\vb\|_2 - \sqrt{m} \right| \leq CK^2[ \gamma(\TT \cap \S^{n+m-1}) + t]
  \end{equation}
  holds with probability at least $1-\exp(-t^2)$.
\end{corollary}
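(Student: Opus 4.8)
The plan is to obtain Corollary \ref{CorollaryofMDI} as an immediate specialization of the high‑probability half of Theorem \ref{them: mat dev ineq}. The only idea needed is that once the index set is restricted to the unit sphere, the centering quantity $\sqrt{m}\cdot\sqrt{\|\va\|_2^2+\|\vb\|_2^2}$ appearing in Theorem \ref{them: mat dev ineq} becomes the constant $\sqrt{m}$, and the radius of the index set is at most $1$.

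Concretely, I would set $\TT' := \TT \cap \S^{n+m-1}$, viewing $\R^n\times\R^m$ as $\R^{n+m}$, and apply the second display of Theorem \ref{them: mat dev ineq} with $\TT$ replaced by $\TT'$. This is legitimate because $\TT'$ is a bounded subset of $\R^n\times\R^m$ (being a subset of the unit sphere) and the hypotheses on $\mA$ are unaffected. For every $(\va,\vb)\in\TT'$ we have $\|\va\|_2^2+\|\vb\|_2^2 = 1$, hence $\sqrt{m}\cdot\sqrt{\|\va\|_2^2+\|\vb\|_2^2}=\sqrt{m}$, so the left‑hand supremum in Theorem \ref{them: mat dev ineq} is exactly $\sup_{(\va,\vb)\in\TT'}\big|\|\mA\va+\sqrt{m}\vb\|_2-\sqrt{m}\big|$. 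On the right‑hand side, $\rad(\TT')=\sup_{(\va,\vb)\in\TT'}\|(\va,\vb)\|_2\le 1$, so $CK^2[\gamma(\TT')+t\cdot\rad(\TT')]\le CK^2[\gamma(\TT')+t]$. Combining these two substitutions, the event asserted in the corollary holds on the same (probability at least $1-\exp(-t^2)$) event delivered by Theorem \ref{them: mat dev ineq}, with the same constant $C$, which finishes the argument.

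There is essentially no obstacle here; the only items worth a sentence of bookkeeping are that $\gamma(\TT')$ is finite because $\TT'$ is bounded, and that the degenerate case $\TT'=\emptyset$ (e.g. when $\TT=\{\vzero\}$) is vacuous under the convention $\sup_{\emptyset}=-\infty$. If desired, the analogous expectation bound $\E\sup_{(\va,\vb)\in\TT'}\big|\|\mA\va+\sqrt{m}\vb\|_2-\sqrt{m}\big|\le CK^2\gamma(\TT')$ follows in the same way from the first display of Theorem \ref{them: mat dev ineq}, but it is not needed for the stated corollary.
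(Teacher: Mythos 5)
Your proposal is correct and is exactly the specialization the paper intends: the corollary is stated without an explicit proof precisely because, on $\TT\cap\S^{n+m-1}$, the centering term $\sqrt{m}\cdot\sqrt{\|\va\|_2^2+\|\vb\|_2^2}$ reduces to $\sqrt{m}$ and $\rad(\TT\cap\S^{n+m-1})\le 1$, so the high-probability bound of Theorem \ref{them: mat dev ineq} gives the result directly. Nothing is missing.
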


Corollary \ref{CorollaryofMDI} may be specialized to bound to the restricted singular value of $[\mA, \sqrt{m}\mI_m]$. Indeed, let $\TT \subset \S^{n+m-1}$. Then the corollary states that, with high probability (e.g., $1-\exp \{-\gamma^2(\TT\cap \S^{n+m-1})\}$),
\begin{align}\label{LowerBound}
 \inf_{(\va, \vb) \in \TT \cap \S^{n+m-1}}\|\mA \va+ \sqrt{m}\vb\|_2  & \geq \sqrt{m} - CK^2{\gamma(\TT\cap \S^{n+m-1})}
\end{align}
and
\begin{align}\label{UpperBound}
 \sup_{(\va, \vb) \in \TT\cap \S^{n+m-1}}\|\mA \va+ \sqrt{m}\vb\|_2  & \leq \sqrt{m} + CK^2{\gamma(\TT\cap \S^{n+m-1})}
\end{align}
for all $(\va, \vb) \in \TT$. As we will see in the next section, \eqref{LowerBound} plays a key role in analyzing the corrupted sensing problem.

\section{Recovery From Corrupted Sub-Gaussian Measurements}\label{PerformanceGuarantees}
In this section, we present our theoretical results on the recovery of structured signals from corrupted sub-Gaussian measurements via three different convex recovery procedures.


\subsection{Recovery via Constrained Optimization}
We start with analyzing the constrained convex recovery procedures \eqref{Constrained_Optimization_I} and \eqref{Constrained_Optimization_II}.
Our first result shows that, with high probability, approximately
\begin{align}\label{NumberofMeasurements11}
CK^4 \left[ \omega^2(\TT_f(\vx^{\star})\cap\S^{n-1}) + \omega^2(\TT_g(\vv^{\star})\cap\S^{m-1}) \right]
\end{align}
corrupted measurements suffice to recover $(\vx^{\star}, \vv^{\star})$ exactly in the absence of noise and stably in the presence of noise, via either of the procedures \eqref{Constrained_Optimization_I} or \eqref{Constrained_Optimization_II}.

Before stating our result, we need to define the error set
\begin{equation*}
	\EE_1(\vx^{\star},\vv^{\star}):=\{(\va,\vb)\in \R^n\times\R^m: f(\vx^{\star}+\va)\leq f(\vx^{\star})  \text{ and } g(\vv^{\star}+\vb)\leq g(\vv^{\star}) \},
\end{equation*}
in which the error vector $(\hat{\vx}-\vx^{\star},\hat{\vv}-\vv^{\star})$ lives. By the convexity of $f$ and $g$, $\EE_1(\vx^{\star},\vv^{\star})$ belongs to the following convex cone
\begin{equation*}
\CC_1(\vx^{\star},\vv^{\star}):=\{(\va,\vb)\in \R^n\times\R^m: \langle \va, \vu\rangle \leq 0
\text{ and } \langle \vb, \vs\rangle \leq 0 \text{ for any }\vu\in\partial f(\vx^{\star})\text{ and }\vs\in \partial g(\vv^{\star}) \},
\end{equation*}
which is equivalent to
\begin{equation*}
      \{(\va,\vb)\in \R^n\times\R^m: \va \in \mathcal{T}_f(\vx^{\star})  \text{ and } \vb \in \mathcal{T}_g(\vv^{\star}) \}.
\end{equation*}
Fig. \ref{fig:Errorset} illustrates the error set $\EE_1$, error cone $\CC_1$, and corresponding spherical part of $\CC_1$. As we will see in this section, the Gaussian complexity of the spherical part of the error cone is closely related to the number of measurements to guarantee successful recovery.

\begin{figure*}
  	\centering
  		\includegraphics[width= .6\textwidth]{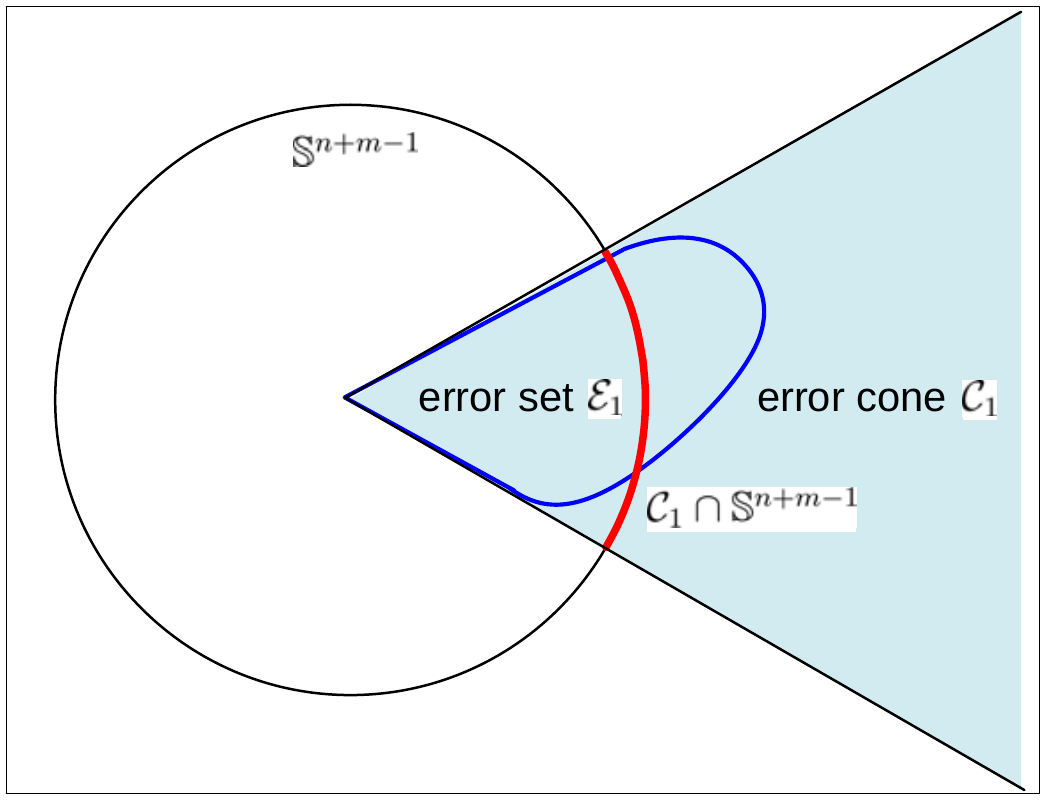}
	\caption{Illustrations of the error set $\EE_1$, error cone $\CC_1$, and corresponding spherical part of $\CC_1$: $\CC_1\cap\S^{n+m-1}$ (denoted by the red arc).}
	\label{fig:Errorset}
\end{figure*}

Then we have the following result.

\begin{theorem}[Constrained Recovery]
	\label{them: Constrained Recovery}
	Let $(\hat{\vx}, \hat{\vv})$ be the solution to either of the constrained optimization problems \eqref{Constrained_Optimization_I} or \eqref{Constrained_Optimization_II}. If the number of measurements
	\begin{align}\label{NumberofMeasurements1}
	\sqrt{m} \geq  CK^2 \gamma( \CC_1(\vx^{\star},\vv^{\star})\cap\S^{n+m-1} ) + \epsilon,
	\end{align}
	then
	\begin{align*}
	\sqrt{\|\hat{\vx}-\vx^{\star}\|_2^2+\|\hat{\vv}-\vv^{\star}\|_2^2}\leq \frac{2\delta\sqrt{m}}{\epsilon}
	\end{align*}
	with probability at least $1-\exp\{-\gamma^2(\CC_1\cap\S^{n+m-1})\}$.
\end{theorem}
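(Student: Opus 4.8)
The plan is to combine an elementary feasibility/optimality argument with the restricted singular value lower bound \eqref{LowerBound} that follows from Corollary \ref{CorollaryofMDI}. Throughout, write $\va := \hat{\vx} - \vx^{\star}$ and $\vb := \hat{\vv} - \vv^{\star}$, and reformulate the observation model \eqref{model: observe} as $\vy = [\mPhi,\mI_m]\,\vs^{\star} + \vz$ with $\vs^{\star} = [(\vx^{\star})^T,(\vv^{\star})^T]^T$. The first step is to show that the error vector $(\va,\vb)$ lies in the cone $\CC_1(\vx^{\star},\vv^{\star})$. Since $\|\vz\|_2\le\delta$, the true pair $(\vx^{\star},\vv^{\star})$ is feasible for both \eqref{Constrained_Optimization_I} and \eqref{Constrained_Optimization_II}. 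For procedure \eqref{Constrained_Optimization_I}, optimality of $(\hat{\vx},\hat{\vv})$ forces $f(\hat{\vx})\le f(\vx^{\star})$ while the explicit constraint forces $g(\hat{\vv})\le g(\vv^{\star})$; for procedure \eqref{Constrained_Optimization_II} the two roles are exchanged, but the conclusion is identical. Hence in both cases $(\va,\vb)\in\EE_1(\vx^{\star},\vv^{\star})$, and by convexity of $f$ and $g$ (the containment $\EE_1\subseteq\CC_1$ recorded before the theorem statement) we get $(\va,\vb)\in\CC_1(\vx^{\star},\vv^{\star})$. This is the only place the distinction between the two procedures matters, so afterwards they are treated simultaneously.

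Next I would record the a priori upper bound $\|\mPhi\va+\vb\|_2\le 2\delta$: both $(\hat{\vx},\hat{\vv})$ and $(\vx^{\star},\vv^{\star})$ satisfy the tube constraint $\|\vy-\mPhi\vx-\vv\|_2\le\delta$ — the former by feasibility, the latter because its residual is exactly $\vz$ — so the triangle inequality gives $\|\mPhi\va+\vb\|_2 = \|(\vy-\mPhi\vx^{\star}-\vv^{\star}) - (\vy-\mPhi\hat{\vx}-\hat{\vv})\|_2 \le 2\delta$.

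Finally I would invoke the lower bound. Put $\mA := \sqrt{m}\,\mPhi$; by \eqref{model: subgaus property} its rows are independent, centered, isotropic, sub-Gaussian with parameter $K$, and $\mPhi\va+\vb = \tfrac{1}{\sqrt{m}}(\mA\va+\sqrt{m}\,\vb)$. Applying \eqref{LowerBound} with $\TT = \CC_1(\vx^{\star},\vv^{\star})\cap\S^{n+m-1}$, on an event of probability at least $1-\exp\{-\gamma^2(\CC_1\cap\S^{n+m-1})\}$ every unit vector $(\vu,\vw)\in\CC_1\cap\S^{n+m-1}$ obeys $\|\mA\vu+\sqrt{m}\,\vw\|_2 \ge \sqrt{m} - CK^2\gamma(\CC_1\cap\S^{n+m-1}) \ge \epsilon$, where the last step is exactly the measurement hypothesis \eqref{NumberofMeasurements1}. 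Since $\CC_1$ is a cone containing $(\va,\vb)$, scaling this inequality to $(\va,\vb)$ (trivially if $(\va,\vb)=\vzero$) yields $\|\mA\va+\sqrt{m}\,\vb\|_2 \ge \epsilon\sqrt{\|\va\|_2^2+\|\vb\|_2^2}$, hence $\|\mPhi\va+\vb\|_2 \ge \tfrac{\epsilon}{\sqrt{m}}\sqrt{\|\va\|_2^2+\|\vb\|_2^2}$. Combining with the $2\delta$ bound from the previous step and rearranging gives $\sqrt{\|\hat{\vx}-\vx^{\star}\|_2^2+\|\hat{\vv}-\vv^{\star}\|_2^2} \le 2\delta\sqrt{m}/\epsilon$ on that same event.

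The argument is essentially bookkeeping once Theorem \ref{them: mat dev ineq} is in hand; the only points needing a little care are (i) verifying that the optimality/constraint structure of \emph{both} constrained programs deposits the error in the same cone $\CC_1$, and (ii) the homogeneity step that passes from the uniform bound over the spherical section $\CC_1\cap\S^{n+m-1}$ to the single error vector $(\va,\vb)$. I do not expect any genuine analytic obstacle here — the heavy lifting (the extended matrix deviation inequality, and with it \eqref{LowerBound}) has already been carried out.
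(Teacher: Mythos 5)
Your proposal is correct and follows essentially the same route as the paper: place the error vector in $\CC_1(\vx^{\star},\vv^{\star})$ via optimality plus the explicit constraints, bound $\|\mPhi(\hat{\vx}-\vx^{\star})+(\hat{\vv}-\vv^{\star})\|_2\le 2\delta$ by the triangle inequality and feasibility of both pairs, and combine with the restricted singular value lower bound \eqref{LowerBound} over $\CC_1\cap\S^{n+m-1}$, extended to the full cone by homogeneity. Your only additions are explicitly spelling out the rescaling $\mA=\sqrt{m}\,\mPhi$ and the role swap between the two constrained programs, both of which the paper leaves implicit.
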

\begin{remark}
  If we consider the observation model $\vy = \mA \vx^{\star} + \sqrt{m} \vv^{\star} + \vz$ as that in \cite{nguyen2013robust}, then the factor $\sqrt{m}$ can be removed from the error bound. This change of the model seems to make
  our result more interpretable, but we analyze the original observation model \eqref{model: observe} here.
  \end{remark}

\begin{proof}
    Since $(\hat{\vx}, \hat{\vv})$ solves \eqref{Constrained_Optimization_I} or \eqref{Constrained_Optimization_II}, we have $f(\hat{\vx}) \leq f(\vx^{\star})$ and $g(\hat{\vv}) \leq g(\vv^{\star})$. This implies $(\hat{\vx}-\vx^{\star},\hat{\vv}-\vv^{\star})\in\EE_1(\vx^{\star},\vv^{\star}) \subset  \CC_1(\vx^{\star},\vv^{\star})$.
    It then follows from \eqref{LowerBound} and \eqref{NumberofMeasurements1} that the event
	\begin{align}\label{LowerBound1}
	\min_{(\va,\vb)\in \CC_1(\vx^{\star},\vv^{\star})\cap\S^{n+m-1} }\sqrt{m}\|\bm{\Phi}\va+\vb\|_2  \geq \sqrt{m}-CK^2 \gamma( \CC_1(\vx^{\star},\vv^{\star})\cap\S^{n+m-1} ) \geq \epsilon
	\end{align}
    holds with probability at least $1-\exp\{-\gamma^2( \CC_1(\vx^{\star},\vv^{\star})\cap\S^{n+m-1} )\}$.

    On the other hand, since both $(\hat{\vx},\hat{\vv})$ and $(\vx^{\star},\vv^{\star})$ are feasible, by the triangle inequality, we have
	\begin{align} \label{prf them1 : step2}
	\|\bm{\Phi}(\hat{\vx}-\vx^{\star})+(\hat{\vv}-\vv^{\star})\|_2  \leq \|\vy-\bm{\Phi}\hat{\vx}-\hat{\vv}\|_2 + \|\vy-\bm{\Phi}\vx^{\star}-\vv^{\star}\|_2\leq 2\delta.
	\end{align}

    Combining \eqref{LowerBound1} and \eqref{prf them1 : step2} yields
    \begin{align*}
      2\delta & \geq \|\bm{\Phi}(\hat{\vx}-\vx^{\star})+(\hat{\vv}-\vv^{\star})\|_2 \\
              & = \sqrt{\|\hat{\vx}-\vx^{\star}\|_2^2+\|\hat{\vv}-\vv^{\star}\|_2^2} \cdot \left\|\frac{\bm{\Phi}(\hat{\vx}-\vx^{\star})}{\sqrt{\|\hat{\vx}-\vx^{\star}\|_2^2+\|\hat{\vv}-\vv^{\star}\|_2^2}}+\frac{(\hat{\vv}-\vv^{\star})}{\sqrt{\|\hat{\vx}-\vx^{\star}\|_2^2+\|\hat{\vv}-\vv^{\star}\|_2^2}}\right\|_2  \\
              & \geq \sqrt{\|\hat{\vx}-\vx^{\star}\|_2^2+\|\hat{\vv}-\vv^{\star}\|_2^2} \cdot \frac{\epsilon}{\sqrt{m}}.
    \end{align*}

    Rearranging completes the proof.

\end{proof}

To make use of Theorem \ref{them: Constrained Recovery}, it suffices to bound $\gamma(\CC_{1}(\vx^{\star}, \vv^{\star})\cap\S^{n+m-1})$. Since a number of upper bounds on Gaussian width and Gaussian squared distance for different structured settings are available in the literature (see, e.g., \cite{chandrasekaran2012convex,amelunxen2014living,foygel2014corrupted,thrampoulidis2015recovering}), it is highly desirable to bound $\gamma(\CC_{1}(\vx^{\star}, \vv^{\star})\cap\S^{n+m-1})$ in terms of these familiar parameters. Then we have the following result.


\begin{lemma}\label{BoundofGaussianComplexity1}
	The Gaussian complexity of $\CC_{1}(\vx^{\star}, \vv^{\star})\cap\S^{n+m-1}$ satisfies
	\begin{equation*}
	\gamma(\CC_{1}(\vx^{\star}, \vv^{\star})\cap\S^{n+m-1}) \leq  2\big[\gw{\TT_f(\vx^{\star})\cap\S^{n-1}} + \gw{\TT_g(\vx^{\star})\cap\S^{m-1}} + 1\big].
	\end{equation*}
\end{lemma}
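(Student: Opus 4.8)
The plan is to exploit the fact, noted just above the statement, that $\CC_1(\vx^{\star},\vv^{\star})$ is the Cartesian product $\TT_f(\vx^{\star})\times\TT_g(\vv^{\star})$. Accordingly I would split a standard Gaussian vector $\vg\in\R^{n+m}$ as $\vg=(\vg_1,\vg_2)$ with $\vg_1\sim\NN(0,\mI_n)$ and $\vg_2\sim\NN(0,\mI_m)$ independent, so that for any $(\va,\vb)\in\CC_1\cap\S^{n+m-1}$ we have $\langle\vg,(\va,\vb)\rangle=\langle\vg_1,\va\rangle+\langle\vg_2,\vb\rangle$.

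The key step is a pointwise estimate on the joint sphere. For $(\va,\vb)\in\CC_1\cap\S^{n+m-1}$ with $\va\neq\vzero$ and $\vb\neq\vzero$, write $\va=\|\va\|_2\vu$ and $\vb=\|\vb\|_2\vw$ where $\vu\in\TT_f(\vx^{\star})\cap\S^{n-1}$ and $\vw\in\TT_g(\vv^{\star})\cap\S^{m-1}$ (this uses that $\TT_f$ and $\TT_g$ are cones). Since $\|\va\|_2^2+\|\vb\|_2^2=1$ forces $\|\va\|_2\le1$ and $\|\vb\|_2\le1$, the triangle inequality gives
\[
|\langle\vg,(\va,\vb)\rangle|\le\|\va\|_2\,|\langle\vg_1,\vu\rangle|+\|\vb\|_2\,|\langle\vg_2,\vw\rangle|\le\sup_{\vu\in\TT_f(\vx^{\star})\cap\S^{n-1}}|\langle\vg_1,\vu\rangle|+\sup_{\vw\in\TT_g(\vv^{\star})\cap\S^{m-1}}|\langle\vg_2,\vw\rangle|;
\]
in the degenerate cases $\va=\vzero$ or $\vb=\vzero$ the corresponding term simply drops out, so the bound still holds. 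Taking the supremum over $(\va,\vb)\in\CC_1\cap\S^{n+m-1}$ and then expectations, the right-hand side splits by independence of $\vg_1$ and $\vg_2$ into $\gamma(\TT_f(\vx^{\star})\cap\S^{n-1})+\gamma(\TT_g(\vv^{\star})\cap\S^{m-1})$.

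Finally I would pass from Gaussian complexity to Gaussian width via \eqref{Relation}: every element of $\TT_f(\vx^{\star})\cap\S^{n-1}$ has unit $\ell_2$-norm, so $\gamma(\TT_f(\vx^{\star})\cap\S^{n-1})\le 2\omega(\TT_f(\vx^{\star})\cap\S^{n-1})+1$, and likewise for $g$; summing the two inequalities yields the claimed bound $2[\omega(\TT_f(\vx^{\star})\cap\S^{n-1})+\omega(\TT_g(\vv^{\star})\cap\S^{m-1})+1]$. There is no genuine obstacle here: the whole argument is the triangle inequality together with $\|\va\|_2,\|\vb\|_2\le1$ on the joint sphere, and the only items needing a word of care are the degenerate boundary cases above and, if one wishes to be fully rigorous, the convention that $\omega$ of an empty spherical section is $0$ (which occurs only when $\TT_f$ or $\TT_g$ is trivial).
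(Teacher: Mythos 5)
Your proposal is correct and follows essentially the same route as the paper's proof: the paper likewise parametrizes a unit vector $(\va,\vb)\in\CC_1\cap\S^{n+m-1}$ as $c\,\vu$ and $\sqrt{1-c^2}\,\vw$ with $\vu\in\TT_f(\vx^{\star})\cap\S^{n-1}$, $\vw\in\TT_g(\vv^{\star})\cap\S^{m-1}$, applies the triangle inequality with the weights bounded by $1$, and finishes with \eqref{Relation}. Your explicit treatment of the degenerate cases $\va=\vzero$ or $\vb=\vzero$ is a minor point of additional care that the paper elides.
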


\begin{proof}
Let $\vg\sim\NN(\vzero, \mI_n)$ and $\vh\sim\NN(\vzero, \mI_m)$, it follows from the definition of Gaussian complexity that
	\begin{align*}
	\gamma( \CC_{1}(\vx^{\star}, \vv^{\star})\cap\S^{n+m-1})
      &=\E\sup_{(\va,\vb)\in \CC_{1}(\vx^{\star}, \vv^{\star})\cap\S^{n+m-1} } \big|\ip{\vg}{\va} + \ip{\vh}{\vb} \big|\\
      &=\E\sup_{\substack{ c\in(0,1)\\\va\in\TT_f(\vx^{\star})\cap\S^{n-1}\\\vb\in\TT_g(\vv^{\star})\cap\S^{m-1}} }\left| \ip{\vg}{\va}\cdot c + \ip{\vh}{\vb}\cdot\sqrt{1-c^2} \right|\\
	&\leq \E\sup_{ \substack{ c\in(0,1)\\\va\in\TT_f(\vx^{\star})\cap\S^{n-1}\\\vb\in\TT_g(\vv^{\star})\cap\S^{m-1}} }c \cdot\left| \ip{\vg}{\va}\right| + \sqrt{1-c^2} \left|\ip{\vh}{\vb} \right|\\
	&\leq \E\sup_{\va\in\TT_f(\vx^{\star})\cap\S^{n-1}} \left| \ip{\vg}{\va} \right| + \E\sup_{\vb\in\TT_g(\vv^{\star})\cap\S^{m-1}} \left| \ip{\vh}{\vb} \right|\\
	&=\gamma(\TT_f(\vx^{\star})\cap\S^{n-1}) + \gamma( \TT_g(\vv^{\star})\cap\S^{m-1} )\\
    &\leq 2\big[\gw{\TT_f\cap\S^{n-1}} + \gw{\TT_g\cap\S^{m-1}} + 1\big].
	\end{align*}
The last inequality is due to \eqref{Relation}.
\end{proof}
Thus, \eqref{NumberofMeasurements11} follows from Theorem \ref{them: Constrained Recovery} and Lemma \ref{BoundofGaussianComplexity1}.

\begin{remark}
  When $\mPhi$ is a random orthogonal matrix ($m = n$) and measurements are noiseless ($\delta = 0$), Amelunxen \textit{et al.} \cite{amelunxen2014living} show that if $m \geq \iota(\TT_f(\vx^{\star})) + \iota(\TT_g(\vv^{\star}))$, then the constrained procedures \eqref{Constrained_Optimization_I}
 and \eqref{Constrained_Optimization_II} succeed with high probability, where the statistical dimension $\iota(\TT)$ of a closed convex cone $\TT \in \R^n$ is defined as $\iota(\TT) = \E \left( \sup_{\vu \in \TT \cap \B_2^n} \langle \vu, \vg \rangle   \right)^2 $ with $\vg \sim \NN(0, \mI_n)$. They also establish the sharpness of their result. Our result \eqref{NumberofMeasurements11} essentially coincides with this result because the statistical dimension of a closed convex cone $\TT \in \R^n$ is closely related to its Gaussian width \cite[Proposition 10.2]{amelunxen2014living}: $\omega^2(\TT\cap\S^{n-1}) \leq \iota(\TT) \leq \omega^2(\TT\cap\S^{n-1})+1$.

\end{remark}

\begin{remark}
 When $\mPhi$ is a Gaussian matrix ($m > n$ or $m \leq n$), Foygel and Mackey \cite{foygel2014corrupted} establish similar result as \eqref{NumberofMeasurements11} in terms of Gaussian squared complexity\footnote{The Gaussian squared complexity of a set $\TT \in \R^n$ is defined as $\zeta^2(\TT) = \E \left( \sup_{\vu \in \TT} \langle \vu, \vg \rangle \right)_{+}^2 $, where $\vg \sim \NN(0, \mI_n)$ and $(a)_+ = \max\{a,0\}$. Note that the Gaussian squared complexity $\zeta^2(\TT_f(\vx^{\star})\cap\S^{n-1})$ is very slightly bigger than the squared Gaussian width $\omega^2(\TT_f(\vx^{\star})\cap\S^{n-1})$.} under bounded noise ($\|\vz\|_2 \leq \delta$). They also numerically show that their result is sharp.
 In the absence of noise, Zhang \textit{et al.} \cite{Zhang2017} recently establish that the phase transition of the constrained procedure under Gaussian measurements occurs around $\omega^2(\TT_f(\vx^{\star})\cap\S^{n-1}) + \omega^2(\TT_g(\vv^{\star})\cap\S^{m-1})$.

\end{remark}

\subsection{Recovery via Partially Penalized Optimization}
We next present performance analysis for the partially penalized optimization problem \eqref{Partially_Penalized_Optimization}. Our second result shows that, with high probability, approximately
\begin{align}\label{NumberofMeasurements22}
CK^4 \left[ \eta^2(\lambda_1\cdot\partial f(\vx^{\star})) + \eta^2(\lambda_2\cdot\partial g(\vv^{\star})) \right]
\end{align}
corrupted measurements suffice to recover $(\vx^{\star}, \vv^{\star})$ exactly in the absence of noise and stably in the presence of noise, via the procedure \eqref{Partially_Penalized_Optimization}. Here, $\lambda_1$ and $\lambda_2$ are absolute constants such that the regularization parameter $\lambda = \lambda_2 / \lambda_1$.

In this case, it is natural to define the following error set
\begin{equation*}
\EE_2(\vx^{\star},\vv^{\star}):=\{(\va,\vb)\in\R^n\times\R^m: f(\vx^{\star}+\va)+\lambda\cdot g(\vv^{\star}+\vb)\leq f(\vx^{\star})+\lambda \cdot g(\vv^{\star}) \}.
\end{equation*}
By the convexity of $f$ and $g$, $\EE_2(\vx^{\star},\vv^{\star})$ belongs to the following convex cone
\begin{equation*}
\CC_2(\vx^{\star},\vv^{\star}):=\{(\va,\vb)\in \R^n\times\R^m:  \langle \va, \vu \rangle
+ \lambda \langle \vb, \vs \rangle \leq 0 \text{ for any }\vu\in \partial f(\vx^{\star})\text{ and } \vs\in \partial g(\vv^{\star}) \}.
\end{equation*}

Then we have the following result.

\begin{theorem}[Partially Penalized Recovery]
	\label{them: Partially_Penalized_Recovery}
	Let $(\hat{\vx}, \hat{\vv})$ be the solution to the partially penalized optimization problem \eqref{Partially_Penalized_Optimization}. If the number of measurements
	\begin{align}\label{NumberofMeasurements2}
	\sqrt{m} \geq  CK^2 \gamma( \CC_2(\vx^{\star},\vv^{\star})\cap\S^{n+m-1} ) + \epsilon,
	\end{align}
	then
	\begin{align*}
	\sqrt{\|\hat{\vx}-\vx^{\star}\|_2^2+\|\hat{\vv}-\vv^{\star}\|_2^2}\leq \frac{2\delta\sqrt{m}}{\epsilon}
	\end{align*}
	with probability at least $1-\exp\{-\gamma^2(\CC_2\cap\S^{n+m-1})\}$.
\end{theorem}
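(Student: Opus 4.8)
The plan is to mirror the three-step argument already used for Theorem~\ref{them: Constrained Recovery}, with the only change being that the product cone $\CC_1$ is replaced by the single linear cone $\CC_2$ generated by the combined penalty $f+\lambda\cdot g$. The one genuinely new point to check is the cone containment $\EE_2(\vx^{\star},\vv^{\star})\subset\CC_2(\vx^{\star},\vv^{\star})$; the rest of the argument transfers essentially verbatim.

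First I would localize the error vector. Since $(\hat{\vx},\hat{\vv})$ is optimal for \eqref{Partially_Penalized_Optimization} and, because $\|\vy-\mPhi\vx^{\star}-\vv^{\star}\|_2=\|\vz\|_2\leq\delta$, the pair $(\vx^{\star},\vv^{\star})$ is feasible, optimality gives $f(\hat{\vx})+\lambda\cdot g(\hat{\vv})\leq f(\vx^{\star})+\lambda\cdot g(\vv^{\star})$, i.e. $(\hat{\vx}-\vx^{\star},\hat{\vv}-\vv^{\star})\in\EE_2(\vx^{\star},\vv^{\star})$. To see $\EE_2\subset\CC_2$, fix arbitrary $\vu\in\partial f(\vx^{\star})$ and $\vs\in\partial g(\vv^{\star})$: the subgradient inequalities $f(\vx^{\star}+\va)\geq f(\vx^{\star})+\ip{\va}{\vu}$ and $g(\vv^{\star}+\vb)\geq g(\vv^{\star})+\ip{\vb}{\vs}$, weighted by $1$ and $\lambda$ respectively and added, force $\ip{\va}{\vu}+\lambda\ip{\vb}{\vs}\leq 0$ whenever $(\va,\vb)\in\EE_2$. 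Hence the error vector lies in $\CC_2(\vx^{\star},\vv^{\star})$, and since $\CC_2$ is a cone, its normalization to $\S^{n+m-1}$ lies in $\CC_2\cap\S^{n+m-1}$ (if the error vanishes the asserted bound is trivial).

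Second I would invoke the restricted singular value lower bound. Setting $\mA=\sqrt{m}\cdot\mPhi$, assumption \eqref{model: subgaus property} makes the rows of $\mA$ independent, centered, isotropic sub-Gaussian with $\max_i\|\mA_i\|_{\psi_2}\leq K$, so Corollary~\ref{CorollaryofMDI} (equivalently \eqref{LowerBound}) applied to the bounded set $\CC_2(\vx^{\star},\vv^{\star})\cap\S^{n+m-1}$ yields that, under the measurement condition \eqref{NumberofMeasurements2},
\[
\min_{(\va,\vb)\in\CC_2(\vx^{\star},\vv^{\star})\cap\S^{n+m-1}}\sqrt{m}\,\|\mPhi\va+\vb\|_2\;\geq\;\sqrt{m}-CK^2\gamma(\CC_2\cap\S^{n+m-1})\;\geq\;\epsilon
\]
holds with probability at least $1-\exp\{-\gamma^2(\CC_2\cap\S^{n+m-1})\}$.

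Third I would close the loop with a triangle inequality on the residuals. Feasibility of both $(\hat{\vx},\hat{\vv})$ and $(\vx^{\star},\vv^{\star})$ gives $\|\mPhi(\hat{\vx}-\vx^{\star})+(\hat{\vv}-\vv^{\star})\|_2\leq 2\delta$. Factoring out $\rho:=\sqrt{\|\hat{\vx}-\vx^{\star}\|_2^2+\|\hat{\vv}-\vv^{\star}\|_2^2}$, applying the two previous steps to the normalized error direction (which lies in $\CC_2\cap\S^{n+m-1}$), and using $\|\mPhi\tilde{\va}+\tilde{\vb}\|_2\geq\epsilon/\sqrt{m}$, gives $2\delta\geq\rho\cdot\epsilon/\sqrt{m}$; rearranging is the claim. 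I do not anticipate a real obstacle here: the substantive content is confined to verifying $\EE_2\subset\CC_2$ via a single combined subgradient inequality and to observing that Corollary~\ref{CorollaryofMDI} applies unchanged to the linear cone $\CC_2$, so this theorem is Theorem~\ref{them: Constrained Recovery} with $\CC_1$ replaced by $\CC_2$ throughout.
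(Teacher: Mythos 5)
Your proposal is correct and follows the paper's proof essentially verbatim: localize the error in $\EE_2\subset\CC_2$, apply the restricted singular value bound \eqref{LowerBound} to $\CC_2\cap\S^{n+m-1}$, and combine with the $2\delta$ triangle-inequality bound on the residual. The only additions are details the paper leaves implicit (the subgradient verification of $\EE_2\subset\CC_2$ and the normalization $\mA=\sqrt{m}\,\mPhi$), both of which you handle correctly.
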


\begin{proof}
    The proof is similar to that of Theorem \ref{them: Constrained Recovery}.
    Since $(\hat{\vx}, \hat{\vv})$ solves \eqref{Partially_Penalized_Optimization}, we have $(\hat{\vx}-\vx^{\star}, \hat{\vv}-\vv^{\star}) \in \EE_2(\vx^{\star},\vv^{\star}) \subset \CC_2(\vx^{\star},\vv^{\star})$.
    It then follows from \eqref{LowerBound} and \eqref{NumberofMeasurements2} that the event
	\begin{equation}\label{LowerBound2}
	\min_{(\va,\vb)\in \CC_2(\vx^{\star},\vv^{\star})\cap\S^{n+m-1} }\sqrt{m}\|\bm{\Phi}\va + \vb\|_2 \geq \sqrt{m}-CK^2 \gamma( \CC_2(\vx^{\star},\vv^{\star})\cap\S^{n+m-1} ) \geq \epsilon
	\end{equation}
    holds with probability at least $1-\exp\{-\gamma^2( \CC_2(\vx^{\star},\vv^{\star})\cap\S^{n+m-1} )\}$.
	
    On the other hand, since both $(\hat{\vx},\hat{\vv})$ and $(\vx^{\star},\vv^{\star})$ are feasible, by the triangle inequality, we have
	\begin{equation} \label{prf them2 : step2}
	\|\bm{\Phi}(\hat{\vx}-\vx^{\star})+(\hat{\vv}-\vv^{\star})\|_2 \leq \|\vy-\bm{\Phi}\hat{\vx}-\hat{\vv}\|_2 + \|\vy-\bm{\Phi}\vx^{\star}-\vv^{\star}\|_2\leq 2\delta.
	\end{equation}
	
    Combining \eqref{LowerBound2} and \eqref{prf them2 : step2} completes the proof.
\end{proof}

To bound $\gamma(\CC_{2}(\vx^{\star}, \vv^{\star})\cap\S^{n+m-1})$ in terms of familiar parameters, we have the following result.
\begin{lemma}\label{BoundofGaussianComplexity2}
	The Gaussian complexity of $\CC_{2}(\vx^{\star}, \vv^{\star})\cap\S^{n+m-1}$ satisfies
	\begin{align*}
	\gamma(\CC_{2}(\vx^{\star}, \vv^{\star})\cap\S^{n+m-1}) \leq  2\sqrt{\eta^2(\lambda_1\cdot\partial f(\vx^{\star})) + \eta^2(\lambda_2\cdot\partial g(\vv^{\star}))} +1.
	\end{align*}
\end{lemma}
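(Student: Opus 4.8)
The plan is to carry out the standard descent-cone/polarity computation (in the spirit of \cite{chandrasekaran2012convex,amelunxen2014living}): reduce the Gaussian complexity of $\CC_2(\vx^{\star},\vv^{\star})\cap\S^{n+m-1}$ first to its Gaussian width, then to an expected distance to a polar cone, and finally exploit the product structure of the set that defines $\CC_2$. (The split used in Lemma~\ref{BoundofGaussianComplexity1} is not convenient here, since the inequality defining $\CC_2$ couples $\va$ and $\vb$.) First note that $\CC_2(\vx^{\star},\vv^{\star})$ is a closed convex cone: writing $K:=\partial f(\vx^{\star})\times\big(\lambda\cdot\partial g(\vv^{\star})\big)$, it is the set $\{(\va,\vb):\sigma_K(\va,\vb)\le 0\}$, where $\sigma_K$ is the support function of the compact convex set $K$, hence sublinear. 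Since $\CC_2\cap\S^{n+m-1}$ is a nonempty subset of the unit sphere, \eqref{Relation} gives $\gamma(\CC_2\cap\S^{n+m-1})\le 2\,\omega(\CC_2\cap\S^{n+m-1})+1$, so it is enough to prove $\omega(\CC_2\cap\S^{n+m-1})\le\sqrt{\eta^2(\lambda_1\cdot\partial f(\vx^{\star}))+\eta^2(\lambda_2\cdot\partial g(\vv^{\star}))}$.

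To bound the Gaussian width, pass from the sphere to the ball, $\omega(\CC_2\cap\S^{n+m-1})\le\omega(\CC_2\cap\B_2^{n+m})$, and invoke the Moreau decomposition: for any closed convex cone $\CC$ and any $\vw$, $\sup_{\vu\in\CC\cap\B_2}\langle\vw,\vu\rangle=\|\Pi_{\CC}(\vw)\|_2=\dist(\vw,\CC^{\circ})$, with $\CC^{\circ}$ the polar cone. With $\vg\sim\NN(\vzero,\mI_n)$ and $\vh\sim\NN(\vzero,\mI_m)$ independent this gives $\omega(\CC_2\cap\B_2^{n+m})=\E\dist\big((\vg,\vh),\CC_2^{\circ}\big)$. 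The key observation is that every nonnegative multiple of $K$ lies in $\CC_2^{\circ}$: for $\vu\in\partial f(\vx^{\star})$ and $\vs\in\partial g(\vv^{\star})$, the defining property of $\CC_2$ states precisely that $\langle(\va,\vb),(\vu,\lambda\vs)\rangle\le 0$ for all $(\va,\vb)\in\CC_2$, i.e.\ $(\vu,\lambda\vs)\in\CC_2^{\circ}$; hence $K\subseteq\CC_2^{\circ}$ and, $\CC_2^{\circ}$ being a cone, $tK\subseteq\CC_2^{\circ}$ for all $t\ge 0$. Since $tK=\big(t\cdot\partial f(\vx^{\star})\big)\times\big(t\lambda\cdot\partial g(\vv^{\star})\big)$ is a Cartesian product, $\dist\big((\vg,\vh),\CC_2^{\circ}\big)\le\dist\big((\vg,\vh),tK\big)=\sqrt{\dist^2\big(\vg,\,t\cdot\partial f(\vx^{\star})\big)+\dist^2\big(\vh,\,t\lambda\cdot\partial g(\vv^{\star})\big)}$. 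Taking expectations and applying Jensen's inequality, $\omega(\CC_2\cap\B_2^{n+m})\le\sqrt{\eta^2\big(t\cdot\partial f(\vx^{\star})\big)+\eta^2\big(t\lambda\cdot\partial g(\vv^{\star})\big)}$ for every $t\ge 0$. Taking $t=\lambda_1$, so that $t\lambda=\lambda_2$ (recall $\lambda=\lambda_2/\lambda_1$), and combining with the first paragraph gives the claim.

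I anticipate no genuine obstacle: the argument is a routine polarity/descent-cone computation, and the factor $2$ and the additive $1$ are merely an artifact of the complexity-to-width passage \eqref{Relation}. The one step that deserves a moment's thought is the key inclusion — one is forced to scale both blocks of $K$ by the \emph{same} factor $t$, since a priori $\CC_2^{\circ}$ need not contain products $s\cdot\partial f(\vx^{\star})\times t\cdot\partial g(\vv^{\star})$ with $t/s\neq\lambda$ — and then to match $t$ with the factorization $\lambda=\lambda_2/\lambda_1$ so that the two Gaussian squared distances appear with the intended weights $\lambda_1$ and $\lambda_2$. One should also keep the standing assumptions $0\notin\partial f(\vx^{\star})$ and $0\notin\partial g(\vv^{\star})$ in mind (automatic for norms evaluated at nonzero points), guaranteeing that the cones here are proper.
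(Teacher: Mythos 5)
Your proof is correct, and it reaches the paper's own intermediate estimate $\omega(\CC_2(\vx^{\star},\vv^{\star})\cap\S^{n+m-1})\leq\sqrt{\eta^2(\lambda_1\cdot\partial f(\vx^{\star}))+\eta^2(\lambda_2\cdot\partial g(\vv^{\star}))}$ before applying \eqref{Relation}, but by a recognizably different packaging. The paper argues bare-handed: for $(\va,\vb)$ in the cone it adds the nonpositive quantity $\langle\va,\lambda_1\vu\rangle+\langle\vb,\lambda_2\vs\rangle$ to $\langle\va,\vg\rangle+\langle\vb,\vh\rangle$, applies Cauchy--Schwarz blockwise, optimizes over $\vu\in\partial f(\vx^{\star})$, $\vs\in\partial g(\vv^{\star})$ to produce the two distances $d_f,d_g$, and finishes with $\|\va\|_2^2+\|\vb\|_2^2=1$ and Jensen. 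You instead route through polarity: the Moreau identity $\sup_{\vu\in\CC\cap\B_2}\langle\vw,\vu\rangle=\dist(\vw,\CC^{\circ})$ turns the width into an expected distance to $\CC_2^{\circ}$, and the inclusion $t\cdot\big(\partial f(\vx^{\star})\times\lambda\,\partial g(\vv^{\star})\big)\subseteq\CC_2^{\circ}$ is precisely the content of the paper's ``subtract $(\lambda_1\vu,\lambda_2\vs)$ for free'' step. The two arguments are mathematically equivalent; yours makes the geometric source of the bound (distance to a product subset of the polar cone, in the style of \cite{chandrasekaran2012convex,amelunxen2014living}) explicit, while the paper's version is self-contained and needs nothing beyond Cauchy--Schwarz. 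Your observations that the two blocks must be scaled by a common factor $t$, and that $t=\lambda_1$ recovers the asymmetric weights $(\lambda_1,\lambda_2)$ from $\lambda=\lambda_2/\lambda_1$, correspond exactly to the paper's multiplication of the defining inequality of $\CC_2$ by $\lambda_1$.
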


\begin{proof}
	By the definition of $\CC_2(\vx^{\star},\vv^{\star})$, for any point $(\va,\vb)\in\CC_2(\vx^{\star},\vv^{\star})$, we have
	\begin{align*}
	\langle \va, \vu \rangle + \lambda \langle \vb, \vs \rangle \leq 0
	\end{align*}
	for any $\vu\in\partial f(\vx^{\star})$ and $\vs\in\partial g(\vv^{\star})$.
	Multiplying both sides by $\lambda_1$ yields
	\begin{align*}
	\langle \va, \lambda_1\vu\rangle +  \langle \vb, \lambda_2\vs\rangle \leq 0.
	\end{align*}
    For any fixed $\vg \in \R^n$ and $\vh \in \R^m$, it follows from the Cauchy-Schwarz inequality that
	\begin{align*}
	\ip{\va}{\vg} + \ip{\vb}{\vh}  &\leq \ip{\va}{\vg-\lambda_1 \vu} + \ip{\vb}{\vh - \lambda_2 \vs }\\
	&\leq \|\va\|_2\|\vg-\lambda_1 \vu \|_2 + \|\vb\|_2\|\vh - \lambda_2 \vs\|_2.
	\end{align*}
	Choosing suitable $\bar{\vu} \in \partial f(\vx^{\star})$ and $\bar{\vs} \in \partial g(\vv^{\star})$ such that
	$$\|\vg-\lambda_1\cdot \bar{\vu}\|_2 = \dist(\vg,\lambda_1\cdot\partial f(\vx^{\star})):= d_f $$
	and
	$$\|\vh-\lambda_2\cdot \bar{\vs}\|_2 = \dist(\vh,\lambda_2\cdot\partial g(\vv^{\star})):= d_g,$$
    we obtain
    \begin{align}\label{GaussianWidthBound}
	\ip{\va}{\vg} + \ip{\vb}{\vh}
    & \leq \|\va\|_2\cdot \dist(\vg,\lambda_1\cdot\partial f(\vx^{\star})) + \|\vb\|_2\cdot \dist(\vh,\lambda_2\cdot\partial g(\vv^{\star}))\\ \notag
    & = d_f \cdot \|\va\|_2 + d_g \cdot \|\vb\|_2.
	\end{align}

    Then, by the definition of Gaussian width,
    \begin{align*}
	\gw{{\CC_2}(\vx^{\star},\vv^{\star})\cap\S^{n+m-1}}
    & =\E\sup_{(\va,\vb)\in \CC_2(\vx^{\star},\vv^{\star})\cap\S^{n+m-1}}\big[ \ip{\vg}{\va}+\ip{\vh}{\vb} \big] \\
	& \leq \E\sup_{(\va,\vb)\in \CC_2(\vx^{\star},\vv^{\star})\cap\S^{n+m-1}}\big[ \|\va\|_2\cdot d_f + \|\vb\|_2\cdot d_g \big]\\
	& \leq \E\sqrt{d_f^2 + d_g^2} \leq \sqrt{\E d_f^2 + \E d_g^2} =\sqrt{\eta^2(\lambda_1\cdot\partial f(\vx^{\star})) + \eta^2(\lambda_2\cdot\partial g(\vv^{\star}))},
	\end{align*}
where $\vg\sim\NN(\vzero, \mI_n)$ and $\vh\sim\NN(\vzero, \mI_m)$. The second and the third inequalities follow from the Cauchy-Schwarz and Jensen's inequalities, respectively. By \eqref{Relation}, we complete the proof.
\end{proof}

Thus, combining Theorem \ref{them: Partially_Penalized_Recovery} and Lemma \ref{BoundofGaussianComplexity2} yields \eqref{NumberofMeasurements22}.

\subsubsection{How to Choose the Regularization Parameter $\lambda$?}
The result \eqref{NumberofMeasurements22} also suggests a simple strategy to choose $\lambda_1$ and $\lambda_2$, and hence the regularization parameter $\lambda = \lambda_2/\lambda_1$. Observe that the number of observations to guarantee successful recovery by \eqref{Partially_Penalized_Optimization} depends on $\lambda_1$ and $\lambda_2$ through $\eta^2(\lambda_1\cdot\partial f(\vx^{\star}))$ and $\eta^2(\lambda_2\cdot\partial g(\vv^{\star}))$ respectively. In order to achieve the possibly smallest number of observations, it is natural to choose $\lambda_1$ and $\lambda_2$ which make the two Gaussian squared distances as small as possible. Thus we can pick
\begin{equation}\label{Chooselambda}
  \lambda_1^{\star} = \arg\min_{\lambda_1 > 0} \eta^2(\lambda_1\cdot\partial f(\vx^{\star})),~~ \lambda_2^{\star} = \arg\min_{\lambda_2 > 0} \eta^2(\lambda_2\cdot\partial g(\vv^{\star})),~~ \textrm{and}~~ \lambda^{\star} = \lambda_2^{\star}/\lambda_1^{\star}.
\end{equation}
Furthermore, it has been shown in \cite[Proposition 4.1]{amelunxen2014living} that if the subdifferential $\partial f(\vx^{\star})$ (or $\partial g(\vv^{\star})$) is nonempty, compact, and does not contain the original, then the function $J(\kappa) = \eta^2( \kappa \cdot\partial f(\vx^{\star})) $ (or $ \eta^2( \kappa \cdot\partial g(\vv^{\star})) $) is strictly convex for $\kappa > 0$, so it achieves its minimum at a unique point. Thus, under mild conditions, the optimal regularization parameter $\lambda^{\star}$ is always unique.

\begin{remark}
  In the setting of Gaussian measurements, Foygel and Mackey \cite{foygel2014corrupted} establish similar result as \eqref{NumberofMeasurements22} in terms of Gaussian squared distance under bounded noise ($\|\vz\|_2 \leq \delta$). Their result also indicates the same way to choose the regularization parameter $\lambda$. However, in the sub-Gaussian case, our proof is totally different from theirs. Because many useful properties of Gaussian distribution (e.g., the rotation invariance property, Gordon's comparison lemma \cite{gordon1988milman}) which are heavily used in \cite{foygel2014corrupted} do not hold in the sub-Gaussian case.
\end{remark}

\subsection{Recovery via Fully Penalized Optimization}
Finally, we analyze the fully penalized optimization problem \eqref{Fully Penalized Optimization}. In this case, due to the presence of noise, we require regularization parameters $\tau_1$ and $\tau_2$ to satisfy the following condition.
\begin{condition} \label{Assump: regular}
For any $\beta >1$,
\begin{align*}
\tau_1\geq \beta f^{\ast}(\bm{\Phi}^{T}\vz)\quad \text{and} \quad \tau_2\geq \beta g^{\ast}(\vz).
\end{align*}
\end{condition}
This condition is a natural extension to that in \cite[Lemma 1]{negahban2012}, where only the regularization parameter for signal ($\tau_1$) is considered and $\beta =2$.  Then our third result shows that, with high probability, approximately
\begin{align}\label{NumberofMeasurements33}
CK^4\left[\eta^2(\tau_1\cdot\partial f(\vx^{\star}))+\eta^2(\tau_2\cdot\partial g(\vv^{\star}))+\frac{(\tau_1\alpha_f)^2+(\tau_2\alpha_g)^2}{\beta^2}\right]
\end{align}
corrupted measurements suffice to recover $(\vx^{\star}, \vv^{\star})$ exactly in the absence of noise and stably in the presence of noise, via the procedure \eqref{Fully Penalized Optimization}.



Under the above condition, we similarly define the error set
   \begin{equation*}
    \EE_3(\vx^{\star},\vv^{\star}) :=  \{(\va,\vb)\in\R^n\times \R^m : \tau_1 f(\vx^{\star}+\va)+\tau_2 g(\vv^{\star}+\vb)  \leq \tau_1 f(\vx^{\star})+\tau_2  g(\vv^{\star}) + \frac{1}{\beta}[\tau_1 f(\va)+\tau_2 g(\vb)] \}.
   \end{equation*}
Note that if $ 0 < \beta \leq 1$, it then follows from the triangle inequality that the inequality defining $\EE_3(\vx^{\star},\vv^{\star})$ holds for all $(\va,\vb)$. Thus we require $\beta > 1$, which will restrict the set of $\EE_3(\vx^{\star},\vv^{\star})$ and yield the restricted error set. By the convexity of $f$ and $g$, $\EE_3(\vx^{\star},\vv^{\star})$ belongs to the following convex cone
\begin{equation*}
      \CC_3(\vx^{\star},\vv^{\star}):=\{(\va,\vb)\in \R^n\times\R^m: \tau_1 \langle \va, \vu\rangle + \tau_2 \langle \vb, \vs\rangle \leq \frac{1}{\beta}[\tau_1 f(\va) + \tau_2 g(\vb)] \text{ for any }\vu\in\partial f(\vx^{\star})\text{ and }\vs\in \partial g(\vv^{\star})\}.
\end{equation*}

Then we have the following result.
\begin{theorem}[Fully Penalized Recovery]
	\label{them: Fully Penalized Recovery}
	Let $(\hat{\vx}, \hat{\vv})$ be the solution to the fully penalized optimization problem \eqref{Fully Penalized Optimization} with $\tau_1$ and $\tau_2$ satisfying Condition \ref{Assump: regular}. If the number of measurements
	\begin{align}\label{NumberofMeasurements3}
	\sqrt{m} \geq  CK^2 \gamma( \CC_3(\vx^{\star},\vv^{\star})\cap\S^{n+m-1} ) + \epsilon,
	\end{align}
	then
	\begin{align}\label{ErrorBoundofFPR}
	\sqrt{\|\hat{\vx}-\vx^{\star}\|_2^2+\|\hat{\vv}-\vv^{\star}\|_2^2}\leq 2m \cdot \frac{\beta+1}{\beta} \cdot \frac{(\tau_1\alpha_f+\tau_2\alpha_g)}{\epsilon^2}
	\end{align}
	with probability at least $1-\exp\{-\gamma^2(\CC_3\cap\S^{n+m-1})\}$.
\end{theorem}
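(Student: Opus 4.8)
The plan is to mirror the two‑sided argument used for Theorems~\ref{them: Constrained Recovery} and \ref{them: Partially_Penalized_Recovery}: derive a basic inequality from optimality, use it both to place the error vector in the cone $\CC_3$ and to upper bound the prediction error, and then contrast this with the lower bound \eqref{LowerBound} on the restricted singular value of $[\bm{\Phi},\mI_m]$. The crucial difference is that \eqref{Fully Penalized Optimization} carries no feasibility constraint, so the prediction‑error bound is no longer the constant $2\delta$ but a quantity proportional to $r:=\sqrt{\|\hat{\vx}-\vx^{\star}\|_2^2+\|\hat{\vv}-\vv^{\star}\|_2^2}$; consequently the final estimate comes from a \emph{quadratic} inequality in $r$ rather than a linear one.

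Write $\hat{\va}=\hat{\vx}-\vx^{\star}$ and $\hat{\vb}=\hat{\vv}-\vv^{\star}$. If $(\hat{\va},\hat{\vb})=\vzero$ the conclusion is trivial, so assume $r>0$. Since $(\hat{\vx},\hat{\vv})$ minimizes the objective in \eqref{Fully Penalized Optimization} and $(\vx^{\star},\vv^{\star})$ is feasible with residual $\vy-\bm{\Phi}\vx^{\star}-\vv^{\star}=\vz$, comparing objective values and expanding $\frac12\|\vz-\bm{\Phi}\hat{\va}-\hat{\vb}\|_2^2$ gives
\[
\tfrac12\|\bm{\Phi}\hat{\va}+\hat{\vb}\|_2^2+\tau_1 f(\hat{\vx})+\tau_2 g(\hat{\vv})\le \ip{\bm{\Phi}^T\vz}{\hat{\va}}+\ip{\vz}{\hat{\vb}}+\tau_1 f(\vx^{\star})+\tau_2 g(\vv^{\star}).
\]
Since $f^{\ast}$ and $g^{\ast}$ are the norms dual to $f$ and $g$, Condition~\ref{Assump: regular} yields $\ip{\bm{\Phi}^T\vz}{\hat{\va}}\le f^{\ast}(\bm{\Phi}^T\vz)f(\hat{\va})\le \frac{\tau_1}{\beta}f(\hat{\va})$ and $\ip{\vz}{\hat{\vb}}\le \frac{\tau_2}{\beta}g(\hat{\vb})$. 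Discarding the nonnegative quadratic term and recalling $\hat{\vx}=\vx^{\star}+\hat{\va}$, $\hat{\vv}=\vv^{\star}+\hat{\vb}$, this is precisely the inequality defining $\EE_3(\vx^{\star},\vv^{\star})$, so $(\hat{\va},\hat{\vb})\in\EE_3\subset\CC_3$, hence $(\hat{\va}/r,\hat{\vb}/r)\in\CC_3\cap\S^{n+m-1}$.

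Next I would extract the two competing bounds. Keeping the quadratic term in the displayed inequality, moving $\tau_1 f(\hat{\vx})+\tau_2 g(\hat{\vv})$ to the right, and applying the triangle inequality $f(\vx^{\star})-f(\hat{\vx})\le f(\hat{\va})$ (similarly for $g$), followed by $f(\hat{\va})\le\alpha_f\|\hat{\va}\|_2\le\alpha_f r$ and $g(\hat{\vb})\le\alpha_g\|\hat{\vb}\|_2\le\alpha_g r$, yields
\[
\tfrac12\|\bm{\Phi}\hat{\va}+\hat{\vb}\|_2^2\le \tfrac{\beta+1}{\beta}\,(\tau_1\alpha_f+\tau_2\alpha_g)\,r.
\]
On the other hand, since $\sqrt{m}\,\bm{\Phi}$ is isotropic sub-Gaussian with $\psi_2$-norm at most $K$, \eqref{LowerBound} together with hypothesis \eqref{NumberofMeasurements3} gives, with probability at least $1-\exp\{-\gamma^2(\CC_3\cap\S^{n+m-1})\}$,
\[
\sqrt{m}\,\big\|\bm{\Phi}(\hat{\va}/r)+(\hat{\vb}/r)\big\|_2\ge \sqrt{m}-CK^2\gamma(\CC_3\cap\S^{n+m-1})\ge\epsilon,
\]
so $\|\bm{\Phi}\hat{\va}+\hat{\vb}\|_2^2\ge \epsilon^2 r^2/m$. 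Combining the two displays gives $\epsilon^2 r^2/(2m)\le \frac{\beta+1}{\beta}(\tau_1\alpha_f+\tau_2\alpha_g)\,r$; dividing by $r>0$ and rearranging yields \eqref{ErrorBoundofFPR}.

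The only genuinely delicate point is the bookkeeping in the basic inequality---correctly expanding the square, and then being careful to \emph{keep} $\frac12\|\bm{\Phi}\hat{\va}+\hat{\vb}\|_2^2$ when deriving the error estimate but to \emph{drop} it when verifying cone membership---together with the observation that the resulting inequality is quadratic in $r$, which is what produces the $\epsilon^{-2}$ (rather than $\epsilon^{-1}$) dependence in \eqref{ErrorBoundofFPR}. Everything else is a direct reuse of the estimates already in place for the constrained and partially penalized cases.
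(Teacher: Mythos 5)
Your proposal is correct and follows essentially the same route as the paper's own proof: the same basic inequality from optimality, the same dual-norm/Condition~\ref{Assump: regular} argument to place the error in $\CC_3$ (dropping the quadratic term), the same use of \eqref{LowerBound} for the restricted singular value, and the same quadratic-in-$r$ inequality yielding the $\epsilon^{-2}$ dependence. The only cosmetic differences are that you handle the $r=0$ case explicitly and bound $\tau_1\alpha_f\|\hat{\va}\|_2+\tau_2\alpha_g\|\hat{\vb}\|_2$ directly by $(\tau_1\alpha_f+\tau_2\alpha_g)r$ where the paper routes through Cauchy--Schwarz and $\sqrt{a^2+b^2}\le a+b$.
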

\begin{proof}
	Since $(\hat{\vx},\hat{\vv})$ solves \eqref{Fully Penalized Optimization}, we have
	\begin{equation}\label{OptimalCondition}
	\frac{1}{2}\|\vy-\bm{\Phi}\hat{\vx}-\hat{\vv}\|_2^2 + \tau_1 f(\hat{\vx})+\tau_2 g(\hat{\vv}) \leq \frac{1}{2}\|\vy-\bm{\Phi}\vx^{\star}-\vv^{\star}\|_2^2 +\tau_1 f(\vx^{\star})+\tau_2 g(\vv^{\star}).
	\end{equation}
    Observe that
	\begin{equation*}
	\frac{1}{2}\|\vy-\bm{\Phi}\hat{\vx}-\hat{\vv}\|_2^2  = \frac{1}{2}\|\bm{\Phi}(\hat{\vx}-\vx^{\star})+(\hat{\vv}-\vv^{\star})\|_2^2
    +\frac{1}{2}\|\vz\|_2^2  -\ip{\bm{\Phi}(\hat{\vx}-\vx^{\star})}{\vz} -\ip{\hat{\vv}-\vv^{\star}}{\vz}.
	\end{equation*}
    Substituting this into \eqref{OptimalCondition} yields
    \begin{equation}\label{InequalityofTheorem3}
     \frac{1}{2}\|\bm{\Phi}(\hat{\vx}-\vx^{\star})+(\hat{\vv}-\vv^{\star})\|_2^2  \leq \tau_1[f(\vx^{\star})-f(\hat{\vx})]
     +\tau_2 [g(\vv^{\star})-g(\hat{\vv})] + \ip{\bm{\Phi}(\hat{\vx}-\vx^{\star})}{\vz} + \ip{\hat{\vv}-\vv^{\star}}{\vz}.
    \end{equation}

    On the one hand, since $\|\bm{\Phi}(\hat{\vx}-\vx^{\star})+(\hat{\vv}-\vv^{\star})\|_2^2\geq 0$, we have
    \begin{align*}
    \tau_1 f(\hat{\vx})+\tau_2 g(\hat{\vv})
                 & \leq \tau_1 f(\vx^{\star}) + \tau_2 g(\vv^{\star}) + \ip{\bm{\Phi}(\hat{\vx}-\vx^{\star})}{\vz} + \ip{\hat{\vv}-\vv^{\star}}{\vz} \\
                 & \leq \tau_1 f(\vx^{\star}) + \tau_2 g(\vv^{\star}) + f^*(\bm{\Phi}^T\vz)\cdot f(\hat{\vx}-\vx^{\star})+ g^*(\vz)\cdot g(\hat{\vv}-\vv^{\star}) \\
                 & \leq \tau_1 f(\vx^{\star}) + \tau_2 g(\vv^{\star}) + \frac{\tau_1}{\beta}\cdot f(\hat{\vx}-\vx^{\star}) + \frac{\tau_2}{\beta}\cdot g(\hat{\vv}-\vv^{\star}),
    \end{align*}
    where $f^*(\cdot)$ and $g^*(\cdot)$ denote the dual norm\footnote{The dual norm of $f$ is defined as $f^{\ast}(\vd)=\sup_{\vu\in\B_f^n}\ip{\vu}{\vd}$, where $\B_f^n=\{ \vu\in\R^n : f(\vu)\leq 1\}$.}
 of $f(\cdot)$ and $g(\cdot)$ respectively. The second inequality follows from generalized H\"{o}lder's inequality. The last inequality is due to Condition \ref{Assump: regular}. This implies $(\hat{\vx}-\vx^{\star}, \hat{\vv}-\vv^{\star}) \in \EE_3(\vx^{\star},\vv^{\star}) \subset \CC_3(\vx^{\star},\vv^{\star})$.
    It then follows from \eqref{LowerBound} and \eqref{NumberofMeasurements3} that the event
	\begin{equation}\label{LowerBound3}
	\min_{(\va,\vb)\in \CC_3(\vx^{\star},\vv^{\star})\cap\S^{n+m-1} }\sqrt{m}\|\bm{\Phi}\va+\vb\|_2 \geq \sqrt{m}-CK^2 \gamma( \CC_3(\vx^{\star},\vv^{\star})\cap\S^{n+m-1} ) \geq \epsilon
	\end{equation}
    holds with probability at least $1-\exp\{-\gamma^2( \CC_3(\vx^{\star},\vv^{\star})\cap\S^{n+m-1} )\}$.

    On the other hand, it follows from \eqref{InequalityofTheorem3} that
    \begin{align}\label{UpperBound3}
    \frac{1}{2}\|\bm{\Phi}(\hat{\vx}-\vx^{\star})+(\hat{\vv}-\vv^{\star})\|_2^2
    & \leq \tau_1 \cdot f(\hat{\vx}-\vx^{\star})+\tau_2\cdot g(\hat{\vv}-\vv^{\star}) + \frac{\tau_1}{\beta}\cdot f(\hat{\vx}-\vx^{\star}) + \frac{\tau_2}{\beta}\cdot g(\hat{\vv}-\vv^{\star})\\ \notag
	&=\frac{\beta+1}{\beta}\big( \tau_1\cdot f(\hat{\vx}-\vx^{\star})+\tau_2\cdot g(\hat{\vv}-\vv^{\star}) \big)\\ \notag
	&= \frac{\beta+1}{\beta}\big( \alpha_f\tau_1\cdot \|\hat{\vx}-\vx^{\star}\|_2+\alpha_g\tau_2\cdot \|\hat{\vv}-\vv^{\star}\|_2 \big)\\ \notag
	& \leq \frac{\beta+1}{\beta}\cdot \sqrt{(\tau_1\alpha_f)^2+(\tau_2\alpha_g)^2} \cdot\sqrt{\|\hat{\vx}-\vx^{\star}\|_2^2+\|\hat{\vv}-\vv^{\star}\|_2^2} \\ \notag
    & \leq \frac{\beta+1}{\beta}\cdot (\tau_1\alpha_f+\tau_2\alpha_g) \cdot\sqrt{\|\hat{\vx}-\vx^{\star}\|_2^2+\|\hat{\vv}-\vv^{\star}\|_2^2},
	\end{align}
    where $\alpha_f$ and $\alpha_g$ are compatibility constants. The first inequality follows from the triangle inequality. In the last two inequalities, we have used the Cauchy-Schwarz inequality and the fact that $\sqrt{a^2+b^2} \leq a+b$ for $a, b \geq 0$, respectively.

    Combining \eqref{LowerBound3} and \eqref{UpperBound3} yields
    \begin{align*}
      \frac{2(\beta+1)}{\beta}\cdot (\tau_1\alpha_f+\tau_2\alpha_g) & \geq \sqrt{\|\hat{\vx}-\vx^{\star}\|_2^2+\|\hat{\vv}-\vv^{\star}\|_2^2} \cdot \left\|\frac{\bm{\Phi}(\hat{\vx}-\vx^{\star})}{\sqrt{\|\hat{\vx}-\vx^{\star}\|_2^2+\|\hat{\vv}-\vv^{\star}\|_2^2}}+\frac{(\hat{\vv}-\vv^{\star})}{\sqrt{\|\hat{\vx}-\vx^{\star}\|_2^2+\|\hat{\vv}-\vv^{\star}\|_2^2}}\right\|_2^2  \\
              & \geq \sqrt{\|\hat{\vx}-\vx^{\star}\|_2^2+\|\hat{\vv}-\vv^{\star}\|_2^2} \cdot \frac{\epsilon^2}{m}.
    \end{align*}

    Rearranging yields the desired result.

\end{proof}
	
To bound the Gaussian complexity of $\CC_{3}(\vx^{\star}, \vv^{\star})\cap\S^{n+m-1}$ in terms of familiar parameters, we have the following result.
\begin{lemma}\label{Gaussiancomplexity3}
	\begin{align*}
	\gamma(\CC_{3}(\vx^{\star}, \vv^{\star})\cap\S^{n+m-1}) \leq  2\left[\sqrt{\eta^2(\tau_1\cdot\partial f(\vx^{\star})) + \eta^2(\tau_2\cdot\partial g(\vv^{\star}))} + \frac{\sqrt{(\tau_1\alpha_f)^2+(\tau_2\alpha_g)^2}}{\beta}\right] + 1.
	\end{align*}
\end{lemma}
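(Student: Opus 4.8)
The plan is to follow the argument of Lemma \ref{BoundofGaussianComplexity2} almost verbatim, with the extra penalty term $\tfrac1\beta[\tau_1 f(\va)+\tau_2 g(\vb)]$ absorbed through the compatibility constants $\alpha_f,\alpha_g$. First I would fix realizations $\vg\sim\NN(\vzero,\mI_n)$ and $\vh\sim\NN(\vzero,\mI_m)$ and take an arbitrary $(\va,\vb)\in\CC_3(\vx^{\star},\vv^{\star})$. For any $\vu\in\partial f(\vx^{\star})$ and $\vs\in\partial g(\vv^{\star})$ I split
\[
\ip{\va}{\vg}+\ip{\vb}{\vh}=\ip{\va}{\vg-\tau_1\vu}+\ip{\vb}{\vh-\tau_2\vs}+\big(\tau_1\ip{\va}{\vu}+\tau_2\ip{\vb}{\vs}\big),
\]
bound the first two inner products by Cauchy--Schwarz, and bound the parenthesized term by $\tfrac1\beta[\tau_1 f(\va)+\tau_2 g(\vb)]$ using membership in $\CC_3$ (the defining inequality holds for \emph{all} $\vu,\vs$). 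Choosing $\bar\vu\in\partial f(\vx^{\star})$ and $\bar\vs\in\partial g(\vv^{\star})$ that attain $\dist(\vg,\tau_1\cdot\partial f(\vx^{\star}))=:d_f$ and $\dist(\vh,\tau_2\cdot\partial g(\vv^{\star}))=:d_g$ — these minimizers exist because the subdifferential of a norm is nonempty and compact — and using $f(\va)\le\alpha_f\|\va\|_2$, $g(\vb)\le\alpha_g\|\vb\|_2$, this yields
\[
\ip{\va}{\vg}+\ip{\vb}{\vh}\le\Big(d_f+\tfrac{\tau_1\alpha_f}{\beta}\Big)\|\va\|_2+\Big(d_g+\tfrac{\tau_2\alpha_g}{\beta}\Big)\|\vb\|_2 .
\]

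Next I would restrict to $(\va,\vb)\in\CC_3\cap\S^{n+m-1}$, so $\|\va\|_2^2+\|\vb\|_2^2=1$, and apply the Cauchy--Schwarz inequality in $\R^2$ to get the uniform (in $(\va,\vb)$) bound $\sqrt{(d_f+\tau_1\alpha_f/\beta)^2+(d_g+\tau_2\alpha_g/\beta)^2}$. Taking the supremum over $(\va,\vb)$ and then the expectation, Jensen's inequality gives $\omega(\CC_3\cap\S^{n+m-1})\le\sqrt{\E(d_f+\tau_1\alpha_f/\beta)^2+\E(d_g+\tau_2\alpha_g/\beta)^2}$. The only slightly delicate point is to peel the compatibility term off cleanly: I would regard $(d_f,d_g)$ and the constant $(\tau_1\alpha_f/\beta,\tau_2\alpha_g/\beta)$ as elements of $L^2(\Omega;\R^2)$ and invoke the triangle inequality there, which produces $\sqrt{\E d_f^2+\E d_g^2}+\tfrac1\beta\sqrt{(\tau_1\alpha_f)^2+(\tau_2\alpha_g)^2}$. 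Since $\E d_f^2=\eta^2(\tau_1\cdot\partial f(\vx^{\star}))$ and $\E d_g^2=\eta^2(\tau_2\cdot\partial g(\vv^{\star}))$ by definition of the Gaussian squared distance, this bounds $\omega(\CC_3\cap\S^{n+m-1})$ by $\sqrt{\eta^2(\tau_1\cdot\partial f(\vx^{\star}))+\eta^2(\tau_2\cdot\partial g(\vv^{\star}))}+\tfrac1\beta\sqrt{(\tau_1\alpha_f)^2+(\tau_2\alpha_g)^2}$.

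Finally, because $\CC_3\cap\S^{n+m-1}$ is a subset of the unit sphere, the relation \eqref{Relation} (applied with any $\vy$ in the set, so $\|\vy\|_2=1$; the claim is vacuous if the set is empty) converts this Gaussian width estimate into $\gamma(\CC_3\cap\S^{n+m-1})\le 2\,\omega(\CC_3\cap\S^{n+m-1})+1$, which is precisely the asserted bound. I do not expect any genuine obstacle: the work is the bookkeeping in the first display (keeping the choice of $\bar\vu,\bar\vs$ independent of $(\va,\vb)$) and the $L^2(\Omega;\R^2)$ triangle-inequality step used to detach the compatibility contribution; everything else is a direct transcription of the proof of Lemma \ref{BoundofGaussianComplexity2}.
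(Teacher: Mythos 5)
Your proposal is correct and follows essentially the same route as the paper: the same decomposition of $\ip{\va}{\vg}+\ip{\vb}{\vh}$ using the defining inequality of $\CC_3$, the same choice of distance-attaining subgradients, the compatibility-constant bound, and the conversion $\gamma\le 2\omega+1$ via \eqref{Relation}. The only cosmetic difference is that you apply Cauchy--Schwarz once to the combined coefficients and then peel off the constant term by Minkowski's inequality in $L^2(\Omega;\R^2)$, whereas the paper splits the supremum into the $(d_f,d_g)$ part and the constant part before applying Cauchy--Schwarz to each; both yield the identical bound.
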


\begin{proof}
    The proof is similar to that of lemma \ref{BoundofGaussianComplexity2}. By the definition of $\CC_3(\vx^{\star},\vv^{\star})$, for any point $(\va,\vb)\in\CC_3(\vx^{\star},\vv^{\star})$, we have
	\begin{align*}
	\langle \va, \tau_1\vu \rangle + \langle \vb, \tau_2 \vs \rangle - \frac{1}{\beta}[\tau_1 f(\va) + \tau_2 g(\vb)] \leq 0
	\end{align*}
	for any $\vu\in\partial f(\vx^{\star})$ and $\vs\in\partial g(\vv^{\star})$.
    It then follows from the Cauchy-Schwarz inequality that
	\begin{align*}
	\ip{\va}{\vg} + \ip{\vb}{\vh}  &\leq \ip{\va}{\vg-\tau_1 \vu} + \ip{\vb}{\vh - \tau_2 \vs } + \frac{1}{\beta}[\tau_1 f(\va) + \tau_2 g(\vb)]\\
	&\leq \|\va\|_2\|\vg-\tau_1 \vu \|_2 + \|\vb\|_2\|\vh - \tau_2 \vs\|_2 + \frac{1}{\beta}[\tau_1 f(\va) + \tau_2 g(\vb)],
	\end{align*}
    where $\vg \in \R^n$ and $\vh \in \R^m$ are arbitrary vectors.
	Choosing suitable $\tilde{\vu} \in \partial f(\vx^{\star})$ and $\tilde{\vs} \in \partial g(\vv^{\star})$ such that
	$$\|\vg-\tau_1\cdot \tilde{\vu}\|_2 = \dist(\vg,\tau_1\cdot\partial f(\vx^{\star})):= d'_f $$
	and
	$$\|\vh-\tau_2\cdot \tilde{\vs}\|_2 = \dist(\vh,\tau_2\cdot\partial g(\vv^{\star})):= d'_g,$$
    we obtain
    \begin{align}\label{GaussianWidthBound}
	\ip{\va}{\vg} + \ip{\vb}{\vh}
    & \leq \|\va\|_2\cdot \dist(\vg,\tau_1\cdot\partial f(\vx^{\star})) + \|\vb\|_2\cdot \dist(\vh,\tau_2\cdot\partial g(\vv^{\star})) + \frac{1}{\beta}[\tau_1 f(\va) + \tau_2 g(\vb)]\\ \notag
    & \leq d'_f \cdot \|\va\|_2 + d'_g \cdot \|\vb\|_2 + \frac{1}{\beta}[\tau_1 \alpha_f \|\va\|_2 + \tau_2 \alpha_g \|\vb\|_2],
	\end{align}
    where the last inequality follows from the definition of the compatibility constant.

    Then, by the definition of Gaussian width,
	\begin{align*}
    \gw{ \CC_3(\vx^{\star},\vv^{\star})\cap\S^{n+m-1}} &=\E\sup_{(\va,\vb)\in\CC_3(\vx^{\star},\vv^{\star})\cap\S^{n+m-1}}\big[\ip{\vg}{\va}+\ip{\vh}{\vb} \big]\\
	& \leq \E\sup_{(\va,\vb)\in\CC_3(\vx^{\star},\vv^{\star})\cap\S^{n+m-1}} \big[d'_f \cdot \|\va\|_2 + d'_g \cdot \|\vb\|_2 + \frac{\tau_1 \alpha_f}{\beta} \cdot \|\va\|_2 + \frac{\tau_2 \alpha_g}{\beta}\cdot \|\vb\|_2 \big]\\
    & \leq \E \sqrt{d'^2_f + d'^2_g } + \frac{\sqrt{(\tau_1\alpha_f)^2+(\tau_2\alpha_g)^2}}{\beta} \\
	& \leq \sqrt{\eta^2(\tau_1\cdot\partial f(\vx^{\star})) + \eta^2(\tau_2\cdot\partial g(\vv^{\star}))} + \frac{\sqrt{(\tau_1\alpha_f)^2+(\tau_2\alpha_g)^2}}{\beta},
	\end{align*}
where $\vg\sim\NN(\vzero, \mI_n)$ and $\vh\sim\NN(\vzero, \mI_m)$. The second and the third inequalities follow from the Cauchy-Schwarz and Jensen's inequalities respectively. By \eqref{Relation}, we complete the proof.
\end{proof}

Clearly, \eqref{NumberofMeasurements33} comes from Theorem \ref{them: Fully Penalized Recovery} and Lemma \ref{Gaussiancomplexity3}.

\subsubsection{Identify the Range of $\tau_1$ and $\tau_2$}
Since our result relies on Condition \ref{Assump: regular}, it is necessary to identify the range of the regularization parameters $\tau_1$ and $\tau_2$ under different kinds of noise. In the absence of noise ($\vz = 0$), we can easily see that Condition \ref{Assump: regular} holds with $\tau_1 \geq 0$ and $\tau_2 \geq 0$. In general, we establish the following Chevet-type inequality which indicates Condition \ref{Assump: regular} holds with high probability under both bounded and stochastic noise scenarios.

\begin{lemma}
\label{lm: upper bound of general ip}
Let $\mA$ be an $m \times n$ matrix whose rows $\mA_i$ are independent centered isotropic sub-Gaussian vectors with $\max_{i} \|\mA_i\|_{\psi_2} \leq K$, and $\vw $ be any fixed vector. Let $\TT$ be any bounded subset $\R^n$. Then, for any $t\geq 0$, the event
\begin{align*}
	\sup_{\vu\in \TT} \ip{\mA\vu}{\vw} \leq CK \|\vw\|_2\big[ \gamma(\TT) + t\cdot\rad(\TT) \big]
\end{align*}
holds with probability at least $1-\exp\{-t^2\}$.
\end{lemma}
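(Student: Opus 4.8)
The plan is to recognize $\ip{\mA\vu}{\vw}$ as a sub-Gaussian process indexed by $\vu$ and to invoke Talagrand's Majorizing Measure Theorem (Fact~\ref{Talagrand Them}), in the same spirit as the proof of Theorem~\ref{them: mat dev ineq}. Concretely, I would enlarge the index set to $\TT_0 := \TT\cup\{\vzero\}$ (still bounded) and define the random process $X_{\vu} := \ip{\mA\vu}{\vw} = \sum_{i=1}^m \vw_i\,\ip{\mA_i}{\vu}$ for $\vu\in\TT_0$. The point of adjoining the origin is that $X_{\vzero}=0$, so that
\[
\sup_{\vu\in\TT}\ip{\mA\vu}{\vw} \;=\; \sup_{\vu\in\TT}\bigl(X_{\vu}-X_{\vzero}\bigr) \;\le\; \sup_{\vu,\vv\in\TT_0}\bigl|X_{\vu}-X_{\vv}\bigr|,
\]
which converts the one-sided supremum we must control into the symmetric increment quantity that Fact~\ref{Talagrand Them} bounds.

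The next step is to check that the process has sub-Gaussian increments. For fixed $\vu,\vv\in\TT_0$, the increment $X_{\vu}-X_{\vv}=\sum_{i=1}^m \vw_i\,\ip{\mA_i}{\vu-\vv}$ is a sum of independent centered random variables, and since $\|\ip{\mA_i}{\vu-\vv}\|_{\psi_2}\le K\|\vu-\vv\|_2$ (by the definition of $\|\mA_i\|_{\psi_2}$ and homogeneity), the standard estimate for the sub-Gaussian norm of a sum of independent sub-Gaussians (collected in Appendix~A) gives
\[
\bigl\|X_{\vu}-X_{\vv}\bigr\|_{\psi_2}^2 \;\le\; C\sum_{i=1}^m \vw_i^2\,\bigl\|\ip{\mA_i}{\vu-\vv}\bigr\|_{\psi_2}^2 \;\le\; CK^2\|\vw\|_2^2\,\|\vu-\vv\|_2^2 .
\]
Hence the increments are sub-Gaussian with parameter $M=CK\|\vw\|_2$.

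I would then apply the high-probability bound \eqref{High_Probability_Bound} of Fact~\ref{Talagrand Them} to $(X_{\vu})_{\vu\in\TT_0}$: for any $t\ge0$, with probability at least $1-\exp(-t^2)$,
\[
\sup_{\vu,\vv\in\TT_0}\bigl|X_{\vu}-X_{\vv}\bigr| \;\le\; CK\|\vw\|_2\bigl[\omega(\TT_0)+t\cdot\diam(\TT_0)\bigr].
\]
Finally I would replace $\omega(\TT_0)$ and $\diam(\TT_0)$ by the quantities in the statement: $\omega(\TT_0)\le\gamma(\TT_0)=\gamma(\TT)$, since adjoining $\vzero$ does not change $\sup_{\vu}|\ip{\vg}{\vu}|$, and $\diam(\TT_0)\le 2\rad(\TT_0)=2\rad(\TT)$. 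Combining these with the two previous displays and absorbing the numerical constants into $C$ yields the claimed inequality.

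The argument is essentially routine once the process is set up; the only step that genuinely requires care is the reduction from the one-sided supremum $\sup_{\vu}\ip{\mA\vu}{\vw}$ to the symmetric increment $\sup_{\vu,\vv}|X_{\vu}-X_{\vv}|$ controlled by Talagrand's theorem. The origin-adjunction trick disposes of this cleanly; an alternative would be to write $X_{\vu}=(X_{\vu}-X_{\vu_0})+X_{\vu_0}$ for a fixed $\vu_0\in\TT$, bound the random constant $X_{\vu_0}$ by a scalar sub-Gaussian tail inequality, and pay a union-bound factor (harmlessly absorbed by adjusting $t$ and $C$).
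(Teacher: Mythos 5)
Your proposal is correct and follows essentially the same route as the paper: both define $X_{\vu}=\ip{\mA\vu}{\vw}$, verify the sub-Gaussian increment bound $\|X_{\vu}-X_{\vv}\|_{\psi_2}\le CK\|\vw\|_2\|\vu-\vv\|_2$ (the paper via the $\psi_2$-norm of the vector $\mA(\vu-\vv)$, you via the equivalent sum-of-independent-sub-Gaussians estimate), adjoin the origin to pass from the one-sided supremum to the symmetric increment, and conclude with the high-probability form of Talagrand's theorem together with $\omega(\bar{\TT})\le\gamma(\TT)$ and $\diam(\bar{\TT})\le 2\rad(\TT)$. No gaps.
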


\begin{proof}
Define the random process
\begin{align*}
X_{\vu}:=\ip{\mA\vu}{\vw},\text{ for any }\vu \in \TT,
\end{align*}
which has sub-Gaussian increments
\begin{align*}
\|X_{\vu}-X_{\vu'}\|_{\psi_2} & = \|\ip{\mA(\vu-\vu')}{\vw}\|_{\psi_2} \\
                              &\leq  \|\vw\|_2 \|\mA(\vu-\vu')\|_{\psi_2}\\
                              &\leq CK \|\vw\|_2 \|\vu-\vu'\|_2
\end{align*}
for any $\vu,\vu' \in \TT$. The first inequality follows from the definition of the $\psi_2$-norm of a sub-Gaussian random vector \eqref{subGaussian vector} and the last inequality holds because $\{\ip{\mA_i^T}{(\vu-\vu')}\}_{i=1}^m$ are independent centered sub-Gaussian random variables with  $\|\ip{\mA_i^T}{(\vu-\vu')}\|_{\psi_2}\leq K \|\vu-\vu'\|_2$ and hence $\|\mA(\vu-\vu')\|_{\psi_2}\leq C K \|\vu-\vu'\|_2$.
Define $\bar{\TT} = \TT \cup \{\vzero\}$. It follows from Talagrand's Majorizing Measure Theorem that the event
\begin{align*}
\sup_{\vu \in \TT}\ip{\mA\vu}{\vw} &\leq \sup_{\vu \in \TT}|\ip{\mA\vu}{\vw}|= \sup_{\vu \in \bar{\TT}}|\ip{\mA\vu}{\vw}|\\
                                 & = \sup_{\vu \in \bar{\TT}}|\ip{\mA\vu}{\vw} - \ip{\mA\vzero}{\vw}|\\
                                 & \leq \sup_{\vu, \vu' \in \bar{\TT}}|\ip{\mA\vu}{\vw} - \ip{\mA\vu'}{\vw}|\\
                                 & \leq C'K\|\vw\|_2(\omega(\bar{\TT})+t\diam(\bar{\TT}))\\
                                 & \leq C''K \|\vw\|_2(\gamma(\TT)+t\rad(\TT))
\end{align*}
holds with probability at least $1-\exp\{-t^2\}$. In the last inequality, we have used the facts that $\omega(\bar{\TT}) \leq \gamma(\bar{\TT}) = \gamma(\TT)$ and $\diam(\TT) \leq 2\rad(\TT)$. This completes the proof.
\end{proof}

\textbf{Bounded Noise Case:} When the noise is bounded $(\|\vz\|_2\leq \delta)$, it follows from Lemma \ref{lm: upper bound of general ip} that the event (choosing $t = \sqrt{m}$)
\begin{align*}
	f^*(\bm{\Phi}^T\vz) = \sup_{\vu\in\B_f^n}\ip{\bm{\Phi}\vu}{\vz} \leq \frac{CK\delta}{\sqrt{m}}\big[ \gamma(\B_f^n)+ \sqrt{m} \cdot r_f \big]
\end{align*}
holds with probability at least $1-\exp(m)$, where $\B_f^n=\{\vu\in\R^n : f(\vu)\leq 1\}$ and $r_f=\sup\{\|\vu\|_2:\vu\in\B_f^n\}$. Thus it is natural to choose
\begin{equation}\label{tau1bounded}
  \tau_1 \geq \frac{CK\delta\beta}{\sqrt{m}}\big[ \gamma(\B_f^n)+\sqrt{m}\cdot r_f \big]:=\tau_{1B},
\end{equation}
 which implies that the first part of Condition \ref{Assump: regular} holds with high probability. In addition, note that $g^*(\vz)=\sup_{\vu\in\B_g^m}\ip{\vz}{\vu}\leq \delta \sup_{\vu\in\B_g^m}\|\vu\|_2=\delta\cdot r_g$, where $\B_g^m=\{\vu\in\R^m : g(\vu)\leq 1\}$ and $r_g=\sup\{\|\vu\|_2:\vu\in\B_g^m\}$. Therefore, we can choose
 \begin{equation}\label{tau2bounded}
   \tau_2 \geq \beta \delta\cdot r_g := \tau_{2B}.
 \end{equation}

\textbf{Sub-Gaussian Noise Case:} When $\vz$ is a sub-Gaussian random vector satisfying \eqref{model: noise}, then $\|\vz\|_2$ concentrates near the value $\sqrt{m}$ with high probability \cite[Theorem 3.1.1]{vershynin2016book}, namely $\|\|\vz\|_2 - \sqrt{m}\|_{\psi_2} \leq CK^2$. This implies
\begin{equation*}
  \Pr{\|\vz\|_2 \geq (L^2+1)\sqrt{m}} \leq \Pr{\big|\|\vz\|_2 - \sqrt{m}\big| \geq L^2\sqrt{m} } \leq 2 e^{-cm}.
\end{equation*}
Combining this with Lemma \ref{lm: upper bound of general ip} and taking union bound yields
\begin{align*}
	f^*(\bm{\Phi}^T\vz) = \sup_{\vu\in\B_f^n}\ip{\bm{\Phi}\vu}{\vz} \leq {CK(1+L^2)}\big[ \gamma(\B_f^n)+ \sqrt{m} \cdot r_f \big]
\end{align*}
with probability at least $1-3e^{-cm}$. Thus we can choose
\begin{equation}\label{tau1stochastic}
  \tau_1 \geq {CK(1+L^2)\beta}\big[ \gamma(\B_f^n)+ \sqrt{m} \cdot r_f \big]:=\tau_{1S},
\end{equation}
which means that the first part of Condition \ref{Assump: regular} holds with high probability in the sub-Gaussian noise case. For the second part of Condition \ref{Assump: regular}, note that the random process
\begin{align*}
X_{\va}:=\ip{\va}{\vz}, \quad\text{ for any }\va \in \TT
\end{align*}
has sub-Gaussian increments
\begin{align*}
\|X_{\va}-X_{\va'}\|_{\psi_2} &= \|\ip{\va-\va'}{\vz}\|_{\psi_2}\leq  L \cdot\|\va-\va'\|_2 ~~\textrm{ for any}~~ \va,\va' \in \TT.
\end{align*}
It follows from Talagrand's Majorizing Measure Theorem that the event (similar arguments to that of Lemma \ref{lm: upper bound of general ip})
\begin{align*}
\sup_{\va \in \TT}\ip{\va}{\vz} \leq CL [\gamma(\TT)+t\rad(\TT)]
\end{align*}
holds with probability at least $1-\exp(-t^2)$. Choosing $t = \sqrt{m}$, we obtain that the event
\begin{align*}
  g^*(\vz)=\sup_{\vu\in\B_g^m}\ip{\vz}{\vu}\leq {CL}\big[ \gamma(\B_g^m)+ \sqrt{m} \cdot r_g \big]
\end{align*}
holds with probability at least $1-\exp\{-m\}$. Thus, in the sub-Gaussian noise case, we can choose
\begin{equation}\label{tau2stochastic}
  \tau_2 \geq {CL\beta}\big[ \gamma(\B_g^m)+ \sqrt{m} \cdot r_g \big] :=\tau_{2S}.
\end{equation}

\subsubsection{How to Choose the Regularization Parameters $\tau_1$ and $\tau_2$?}\label{Howtochoosetau}
Once the range of $\tau_1$ and $\tau_2$ is determined, then we can choose the ``best'' regularization parameters under certain criteria in this range. Our theoretical results show that both the number of observations \eqref{NumberofMeasurements33} and the recovery error bound \eqref{ErrorBoundofFPR} rely on the regularization parameters $\tau_1$ and $\tau_2$. Then one might expect to select $\tau_1$ and $\tau_2$ in the specified range such that these two quantities are as small as possible. However, it is not hard to see that these two quantities do not achieve their minima at the same time in general, so it is practical to choose some criteria which make a tradeoff between the number of observations and the recovery error bound. Specifically, if one wants to choose $\tau_1$ and $\tau_2$ such that the number of observations \eqref{NumberofMeasurements33} which guarantees successful recovery for \eqref{Fully Penalized Optimization} is as small as possible, then one can pick
\begin{equation}\label{ChoosetauCase1}
  \tau_{1\cdot}^{\star} = \arg\min_{\tau_1 \geq \tau_{1\cdot}} \left\{ \eta^2(\tau_1\cdot\partial f(\vx^{\star})) + \frac{\alpha_f^2}{\beta^2}\tau_1^2 \right\} ~~ \textrm{and} ~~ \tau_{2\cdot}^{\star} = \arg\min_{\tau_2 \geq \tau_{2\cdot}} \left\{ \eta^2(\tau_2\cdot\partial g(\vv^{\star})) + \frac{\alpha_g^2}{\beta^2}\tau_2^2\right\},
\end{equation}
where $\tau_{1\cdot}$ (or $\tau_{2\cdot}$) denotes $\tau_{1B}$ (or $\tau_{2B}$) in the bounded noise case and $\tau_{1S}$ (or $\tau_{2S}$) in the sub-Gaussian noise case. Moreover, when both $\partial f(\vx^{\star})$ and $\partial g(\vv^{\star})$ are nonempty, compact, and do not contain the original, by the strict convexity of objective functions,  these ``best'' parameters are uniquely attained. If one wants to select $\tau_1$ and $\tau_2$ which make the recovery error bound \eqref{ErrorBoundofFPR} as small as possible, then one can pick
\begin{equation}\label{ChoosetauCase2}
  \tau_{1\cdot}^{\star} = \arg\min_{\tau_1 \geq \tau_{1\cdot}} \frac{2m(\beta+1)\alpha_f}{\beta\epsilon^2} \tau_1 =  \tau_{1\cdot}~~ \textrm{and} ~~ \tau_{2\cdot}^{\star} = \arg\min_{\tau_2 \geq \tau_{2\cdot}} \frac{2m(\beta+1)\alpha_g}{\beta\epsilon^2} \tau_2 =  \tau_{2\cdot}.
\end{equation}
This result demonstrates that if we want to recover signal and corruption accurately by the fully penalized procedure in the noise-free case, we require
\begin{equation}\label{taunoisefreecase}
  \tau_1 \rightarrow 0_{+}  ~~ \textrm{and} ~~ \tau_2 \rightarrow 0_{+},
\end{equation}
which might be regarded as promoting the equality constraint $\vy = \mPhi \vx + \vv $ in the fully penalized optimization problem.

\subsection{Relationships Among Three Procedures}\label{Relationships}
The theory of Lagrange multipliers \cite[Section 28]{rockafellar2015convex} asserts that the three procedures are essentially equivalent when the regularization parameters $\lambda, \tau_1,$ and $\tau_2$ are chosen correctly. However, the primary difficulty lies in determining these parameters when prior information is unavailable. Our results, \eqref{NumberofMeasurements22}, \eqref{NumberofMeasurements33}, and \eqref{ErrorBoundofFPR}, suggest some useful strategies to choose these parameters and shed some light on the relationships among these approaches.

Let $m_1, m_2,$ and $m_3$ be the necessary numbers of observations which guarantee successful recovery for constrained, partially penalized, and fully penalized recovery procedures, respectively. Since $\CC_{1}(\vx^{\star}, \vv^{\star}) \subseteq  \CC_{2}(\vx^{\star}, \vv^{\star}) \subseteq \CC_{3}(\vx^{\star}, \vv^{\star})$, by the definition of Gaussian complexity, we have $\gamma(\CC_{1}(\vx^{\star}, \vv^{\star})\cap\S^{n+m-1}) \leq \gamma(\CC_{2}(\vx^{\star}, \vv^{\star})\cap\S^{n+m-1}) \leq \gamma(\CC_{3}(\vx^{\star}, \vv^{\star})\cap\S^{n+m-1})$ and hence $m_1 \leq m_2 \leq m_3$. This relation seems natural, because one might expect that the more prior information, the less number of measurements to guarantee success.

On the other hand, it follows from \cite[Theorem 4.3, Proposition 10.2]{amelunxen2014living} and \cite[Proposition 1]{foygel2014corrupted} that, under a weak decomposition assumption\footnote{For $\vx \neq \vzero$, $\partial f(\vx)$ satisfies the weak decomposition assumption: $\exists \vw_0 \in  \partial f(\vx) ~\textrm{s.t.} ~\langle \vw-\vw_0, \vw_0 \rangle =0, \forall \vw \in \partial f(\vx)$.}, the optimized Gaussian squared distance $\min_{\kappa \geq 0} \eta^2( \kappa \cdot\partial f(\vx^{\star}))$ provides a faithful approximation to the statistical dimension $\iota(\TT_f(\vx^{\star}))$, and hence the squared Gaussian width $\omega^2(\TT_f(\vx^{\star})\cap\S^{n-1})$, i.e., $\min_{\kappa \geq 0} \eta^2( \kappa \cdot\partial f(\vx^{\star})) \approx  \omega^2(\TT_f(\vx^{\star})\cap\S^{n-1})$. This implies that
\begin{equation*}
  \eta^2( \lambda_1^{\star} \cdot\partial f(\vx^{\star})) \approx  \omega^2(\TT_f(\vx^{\star})\cap\S^{n-1}) ~~ \textrm{and} ~~ \eta^2( \lambda_2^{\star} \cdot\partial g(\vv^{\star})) \approx  \omega^2(\TT_g(\vv^{\star})\cap\S^{m-1}),
\end{equation*}
provided that both $\partial f(\vx^{\star})$  and $\partial g(\vv^{\star})$ are nonempty, compact, and do not contain the original.
It then follows from \eqref{NumberofMeasurements11} and \eqref{NumberofMeasurements22} that $m_1 \approx m_2$ if we choose $\lambda$ according to \eqref{Chooselambda}. For the fully penalized procedure, if we pick $\tau_1$ and $\tau_2$ according to \eqref{ChoosetauCase1} under the noise-free case, i.e.,
\begin{equation*}
  \tau_{1}^{\star} = \arg\min_{\tau_1 > 0} \left\{\eta^2(\tau_1\cdot\partial f(\vx^{\star})) + \frac{\alpha_f^2}{\beta^2}\tau_1^2\right\} ~~ \textrm{and} ~~ \tau_{2}^{\star} = \arg\min_{\tau_2 > 0} \left\{\eta^2(\tau_2\cdot\partial g(\vv^{\star})) + \frac{\alpha_g^2}{\beta^2}\tau_2^2\right\},
\end{equation*}
and if $\beta$ is chosen large enough such that both $\alpha_f^2\tau_1^2/\beta^2$ and $\alpha_g^2\tau_2^2/\beta^2$ are negligible, then we have $\tau_{1}^{\star} \approx \lambda_1^{\star}$ and $\tau_{2}^{\star} \approx \lambda_2^{\star}$, and hence $m_2 \approx m_3$. Thus we might conclude that under proper parameter selection strategies, the three approaches are approximately equivalent in terms of the necessary number of observations to guarantee success.

\section{Numerical Simulations}\label{Simulations}
In this section, we provide a series of numerical tests to verify our theoretical results in Section \ref{PerformanceGuarantees}. We present constrained, partially penalized, and fully penalized recovery experiments for sparse signals and sparse corruptions in the absence or presence of noise. In each experiment, we employ the CVX Matlab package \cite{grant2008cvx, grant2008graph} to solve our convex optimization problems.

\subsection{Phase Transition of Constrained Recovery Procedure}
We first investigate the empirical behavior of the constrained recovery procedure when the noise level $\delta = 0$ and the norm of the true signal $f(\vx^{\star}) = \|\vx^{\star}\|_1$ are known exactly. We fix the sample size and signal length $m = n = 128$, and vary the sparsity levels $(s_{\sig}, s_{\cor}) \in [1, 128] \times [1,128]$. We implement the following experiment $20$ times for each $(s_{\sig}, s_{\cor})$ pair:
\begin{itemize}
	\item [(1)] Generate a signal vector $\vx^{\star}$ with $s_{\sig}$ independent standard normal entries and set the other $n-s_{\sig}$ entries to 0.
	\item [(2)] Generate a corruption vector $\vv^{\star}$ with $s_{\cor}$ independent standard normal entries and set the other $m-s_{\cor}$ entries to 0.
	\item [(3)] For Gaussian measurements, draw a sensing matrix $\mPhi\in\R^{m\times n}$ with independent $\NN(0, 1/m)$ entries; for scaled symmetric Bernoulli measurements, draw a sensing matrix $\mPhi\in\R^{m\times n}$ with independent entries obeying $\Pr {\mPhi_{i,j} = 1/\sqrt{m}} = 1/2$ and $\Pr{\mPhi_{i,j} = -1/\sqrt{m}} = 1/2$.
	\item [(4)] Solve the following constrained optimization problem:
                \begin{equation}\label{L1constrained}
                  (\hat{\vx}, \hat{\vv}) \in \arg \min_{\vx,\vv} \|\vv\|_1,~ \textrm{s.t.}~~ \|\vx\|_1 \leq \|\vx^{\star}\|_1,~ \vy = \mPhi \vx + \vv.
                 \end{equation}
    \item [(5)] Declare success if $\|\hat{\vx}-\vx^{\star}\|_2 / \|\vx^{\star}\|_2 \leq 10^{-3}$.
\end{itemize}

To compare these empirical results with our theory, we overlay the theoretical recovery threshold suggested by \eqref{NumberofMeasurements11}\footnote{In \cite{Zhang2017}, the second author of this paper and his collaborators have shown that the phase transition of the constrained procedure under Gaussian measurements occurs around $\omega^2(\TT_f(\vx^{\star})\cap\S^{n-1}) + \omega^2(\TT_g(\vv^{\star})\cap\S^{m-1})$. A large number of numerical experiments indicate that this theoretical recovery threshold also holds for sub-Gaussian measurements. This observed universality phenomenon suggests us to set $CK^4 = 1$ in \eqref{NumberofMeasurements11} for our numerical simulations.}
\begin{equation}\label{threshold}
 \omega^2(\TT_f(\vx^{\star})\cap\S^{n-1}) + \omega^2(\TT_g(\vv^{\star})\cap\S^{m-1}),
\end{equation}
where the squared Gaussian widths of signals and corruptions can be accurately estimated  by (see, e.g., \cite[Appendix C]{chandrasekaran2012convex} or \cite[eq.(10)]{foygel2014corrupted})
\begin{align*}
&\min_{t\geq 0} \E \left[ \inf_{\vu\in t \cdot \partial \|\vx^{\star}\|_1 } \|\vg-\vu\|_2^2 \right]=\min_{t\geq 0} s_{\sig}(1+t^2) + \frac{2(n-s_{\sig})}{\sqrt{2\pi}}\left( (1+t^2)\int_{t}^{\infty} e^{-x^2/2}dx -te^{-t^2/2} \right)
\end{align*}
and
\begin{align*}
&\min_{t\geq 0} \E \left[ \inf_{\vu\in t\cdot\partial \|\vv^{\star}\|_1 } \|\vg-\vu\|_2^2 \right]=\min_{t\geq 0} s_{\cor}(1+t^2) + \frac{2(m-s_{\cor})}{\sqrt{2\pi}}\left( (1+t^2)\int_{t}^{\infty} e^{-x^2/2}dx -te^{-t^2/2} \right),
\end{align*}
respectively.

Fig. \ref{fig:Constrained_Procedure} displays the empirical probability of success for each setting of $(s_{\sig}, s_{\cor})$ averaged over the $20$ runs. We can see that our theoretical recovery threshold closely aligns with observed phase transition under both Gaussian and Bernoulli measurements.

\begin{figure*}
  	\centering
	\subfigure{
  		\includegraphics[width= .48\textwidth]{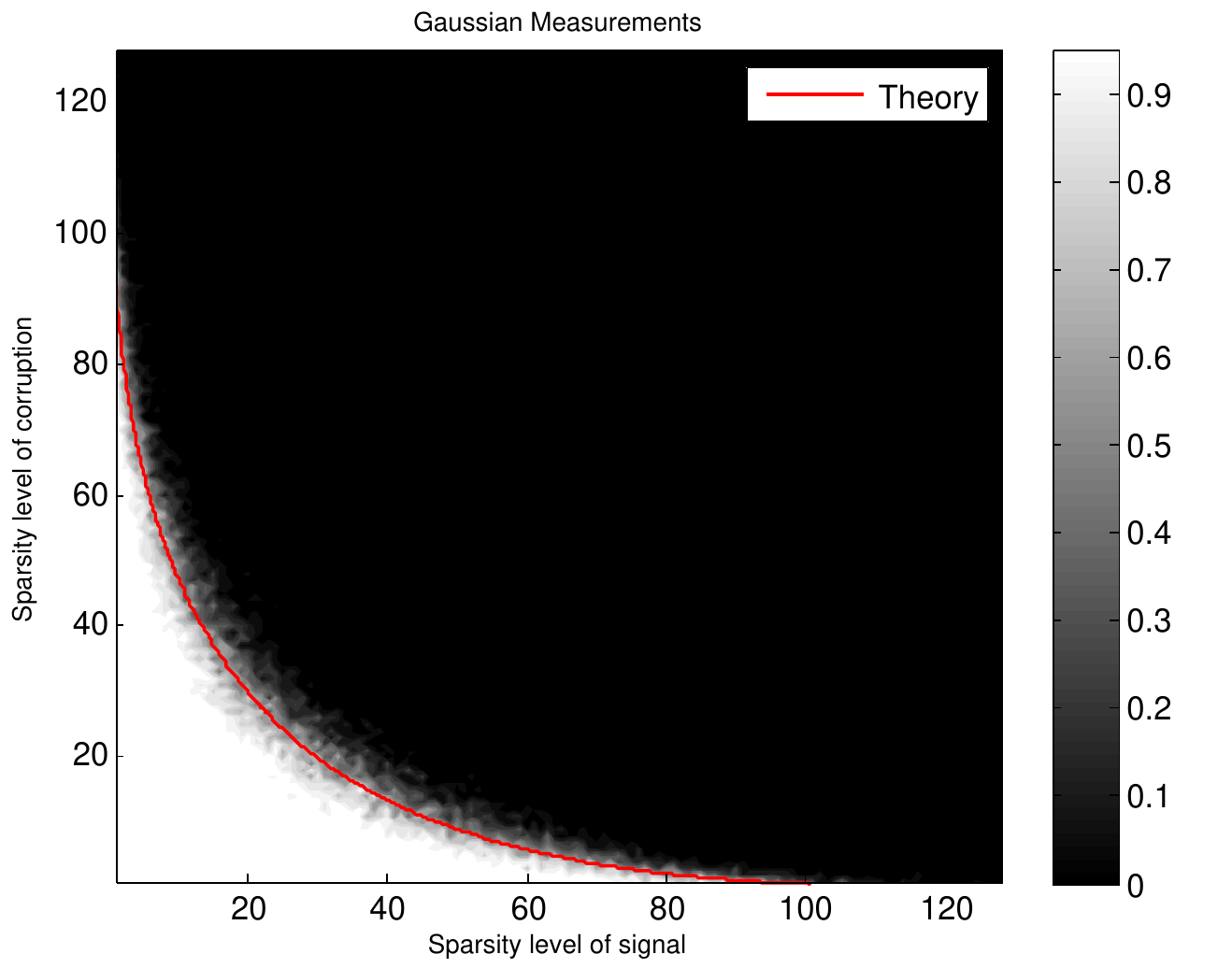}
	}
	\subfigure{
  		\includegraphics[width= .48\textwidth]{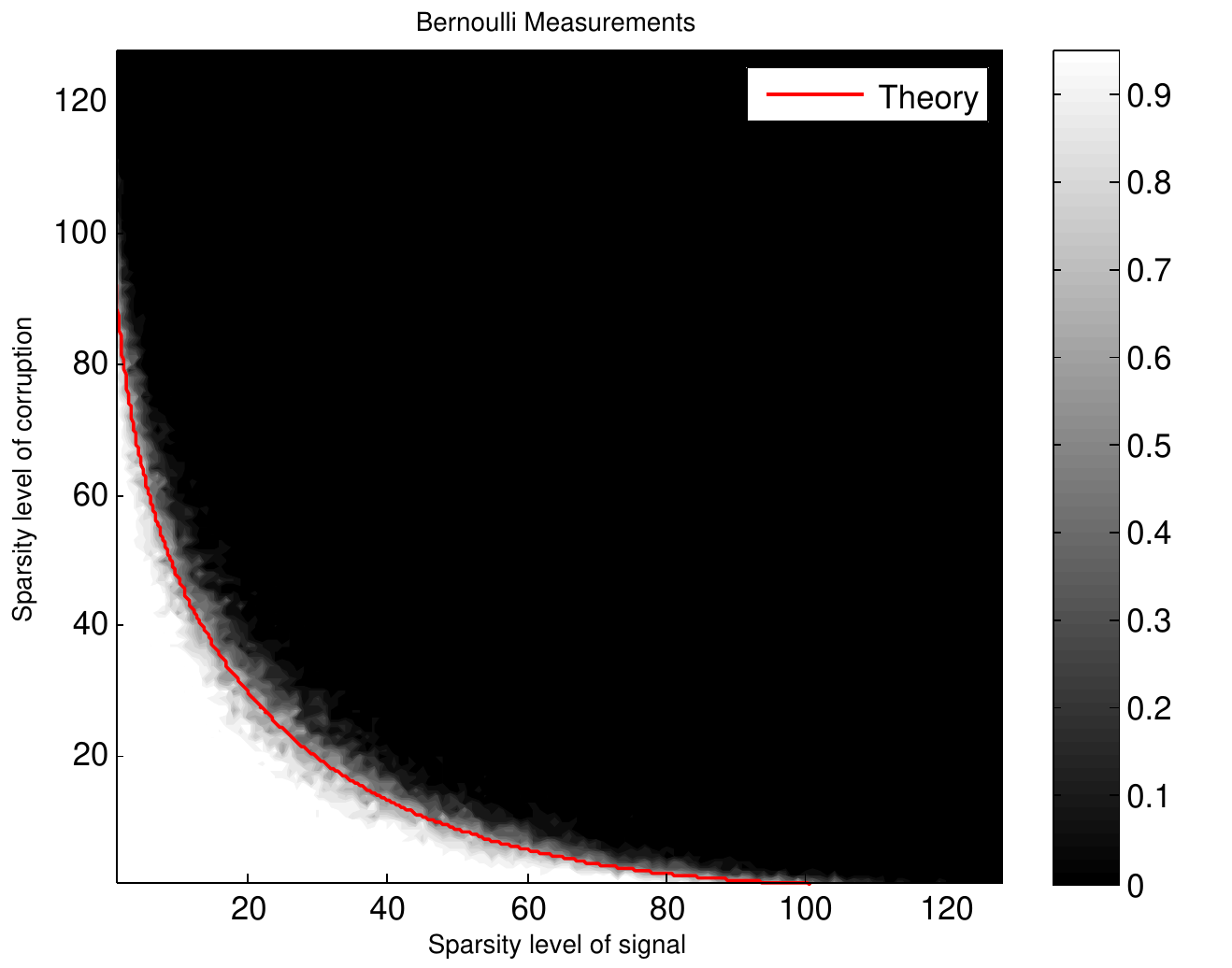}
	}
	\caption{Phase transition of the constrained recovery procedure \eqref{L1constrained} under Gaussian and Bernoulli measurements. The red curve plots the phase transition threshold predicted by \eqref{threshold}.}
	\label{fig:Constrained_Procedure}
\end{figure*}

\subsection{Phase Transition of Partially Penalized Recovery Procedure}
We next consider the partially penalized recovery procedure in the noiseless setting where neither $f(\vx)=\|\vx^{\star}\|_1$ nor $g(\vv)=\|\vv^{\star}\|_1$ is known beforehand. The experiment settings are almost the same as in the constrained case except that we recover signals and corruptions via the following procedure in step (4):
\begin{equation}\label{L1PartiallyPenalized}
\min_{\vx,\vv}  \|\vx\|_1 + \lambda \|\vv\|_1, \text{ s.t. }\vy = \mPhi \vx +\vv.
\end{equation}
We test two kinds of the regularization parameter: 1) $\lambda = \lambda^{\star}$, which is chosen according to \eqref{Chooselambda} and depends on the signal and corruption sparsity levels $s_{\sig}$ and $s_{\cor}$, and 2) $\lambda = 1$. As discussed in Section \ref{Relationships}, when we choose $\lambda = \lambda^{\star}$, the necessary number of measurements to guarantee success by the partially penalized procedure is nearly the same as that of the constrained one. Thus, we also overlay the curve of \eqref{threshold} to compare with the empirical results.

Fig. \ref{fig:Partially_Penalized_Procedure} shows the empirical probability of success as the signal and corruption sparsity levels $s_{\sig}$ and $s_{\cor}$ vary. We can find that the recovery performance in the case of $\lambda = \lambda^{\star}$ is better than that in the case of $\lambda = 1$. This demonstrates that the choice of the regularization parameter suggested by \eqref{Chooselambda} is effective. Moreover, signal recovery with $\lambda = \lambda^{\star}$ is nearly as good as in the constrained case, but without any prior knowledge of $\|\vx^{\star}\|_1$.

\begin{figure*}
  	\centering
	\subfigure{
  		\includegraphics[width= .48\textwidth]{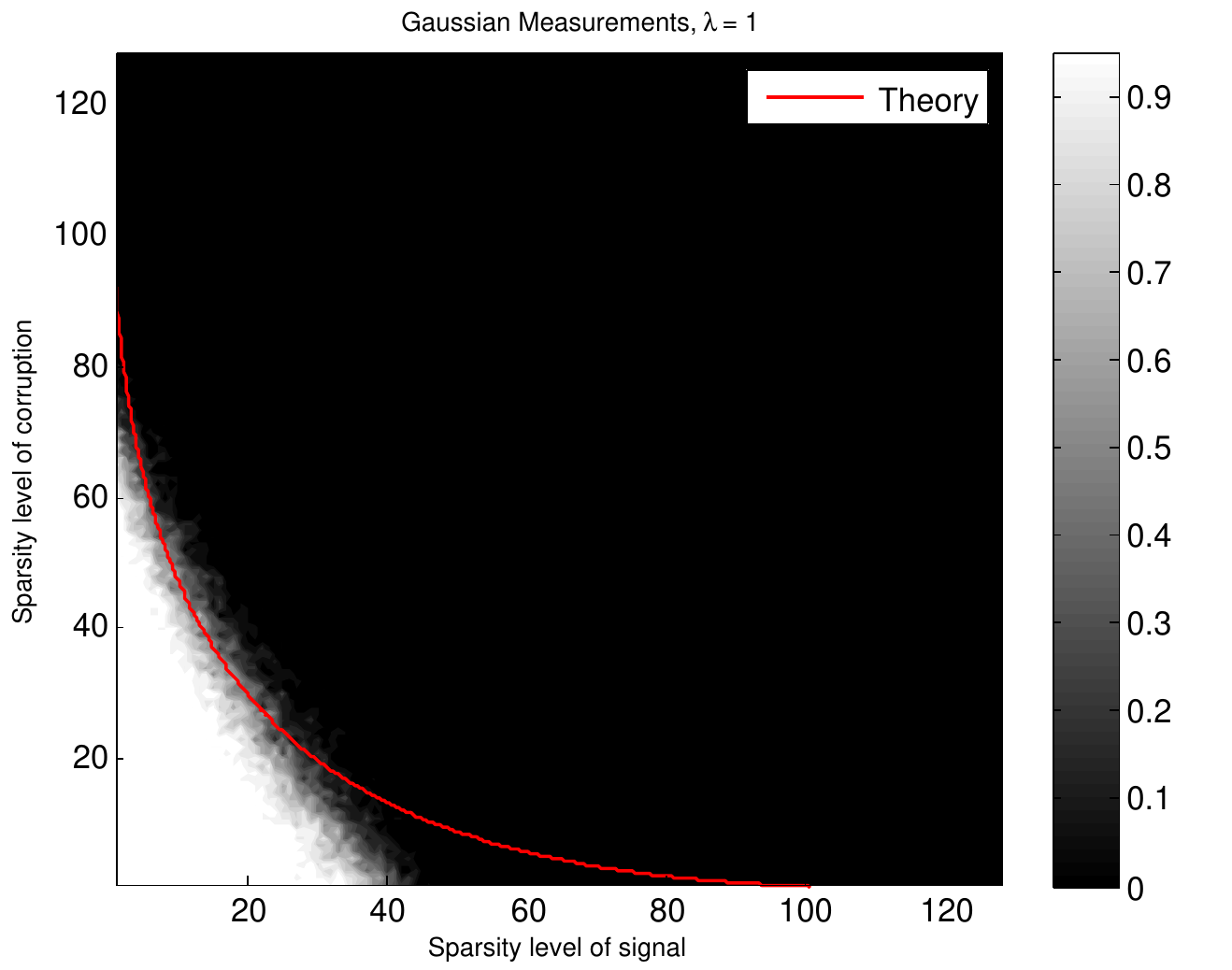}
        \includegraphics[width= .48\textwidth]{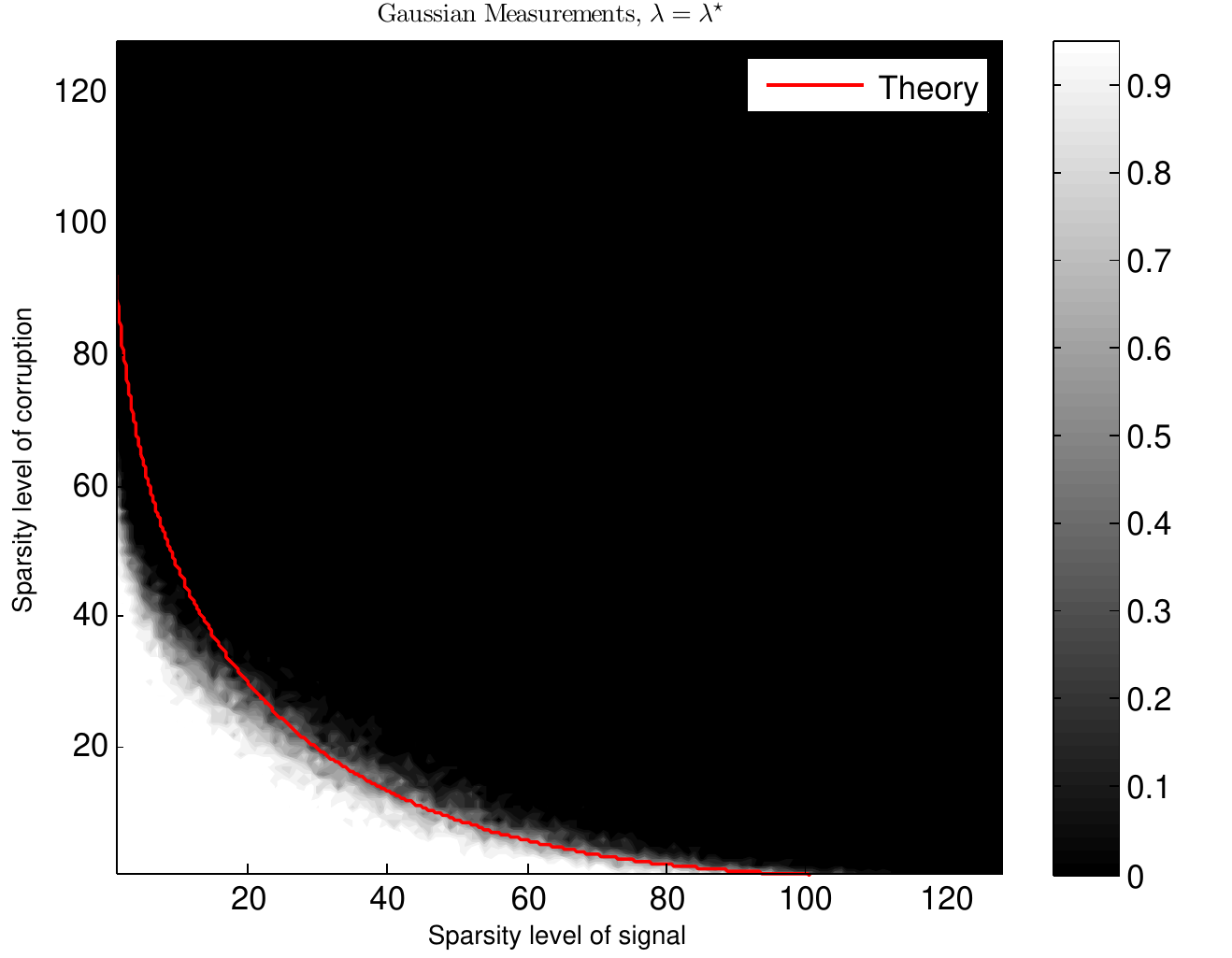}
	}
	\subfigure{
  		\includegraphics[width= .48\textwidth]{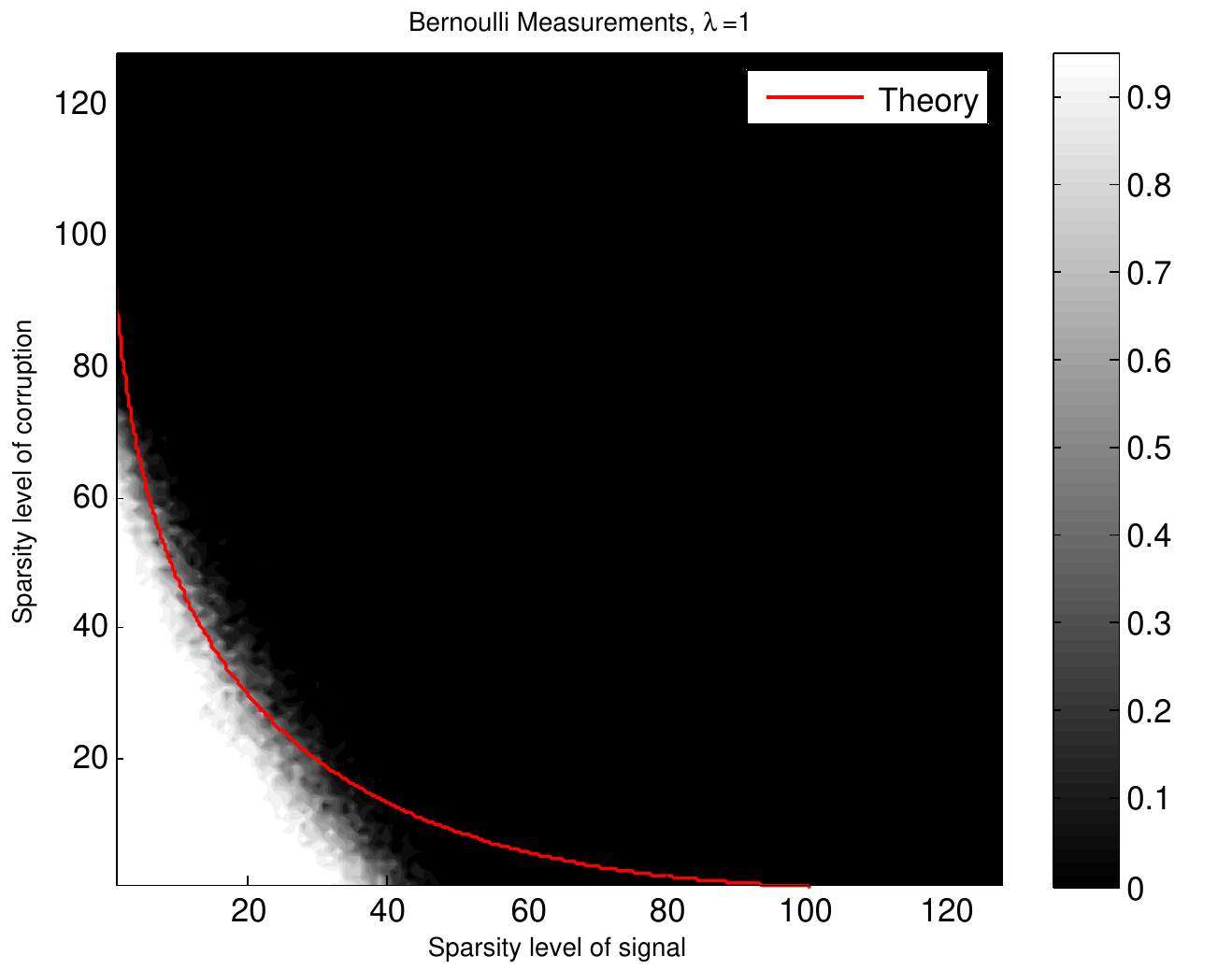}
        \includegraphics[width= .48\textwidth]{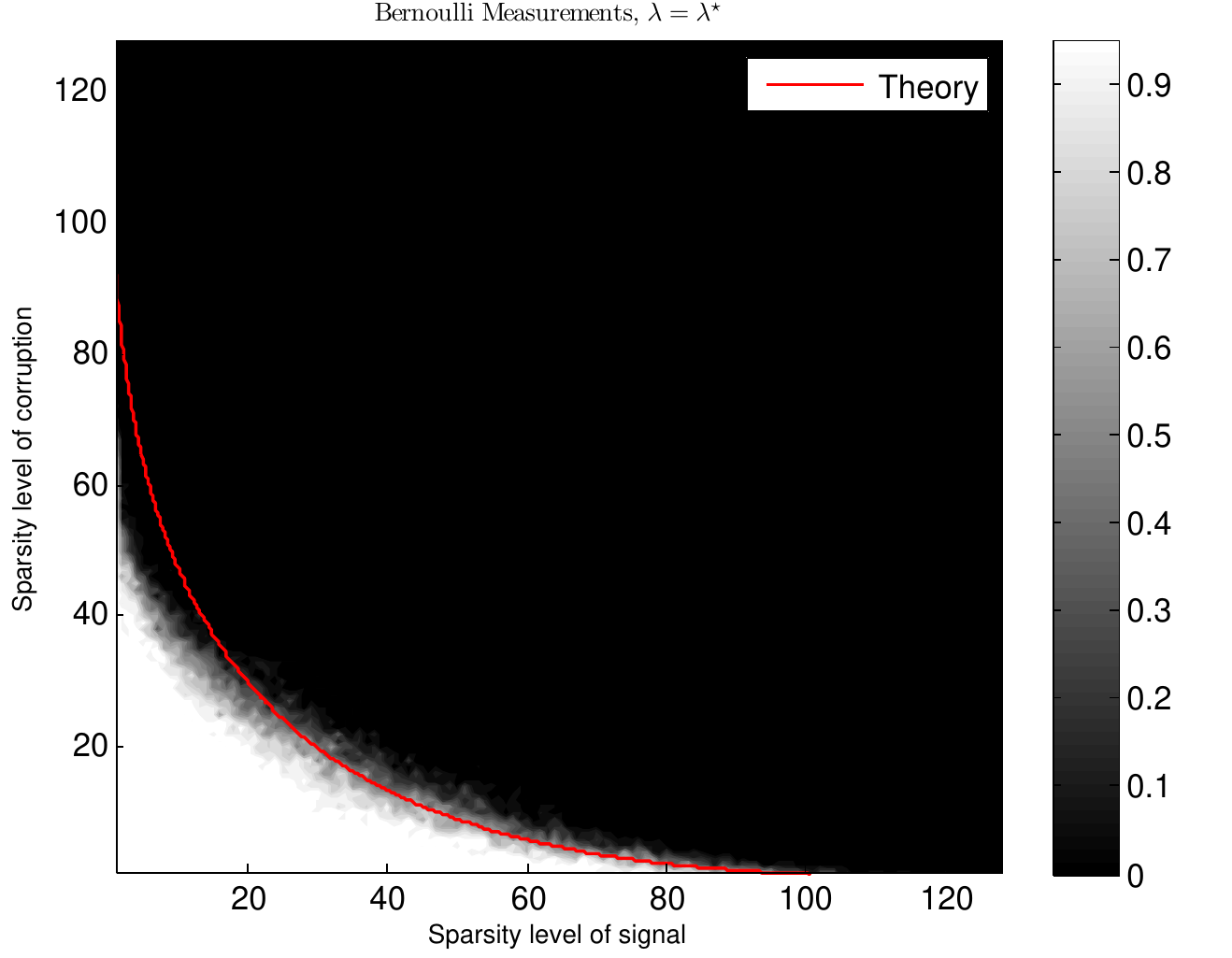}
	}
	\caption{Phase transition of the partially penalized recovery procedure \eqref{L1PartiallyPenalized} with different kinds of the regularization parameter under Gaussian and Bernoulli measurements. The red curve plots the phase transition threshold predicted by \eqref{threshold}.}
	\label{fig:Partially_Penalized_Procedure}
\end{figure*}

\subsection{Phase Transition of Fully Penalized Recovery Procedure}
Thirdly, we study the empirical behavior of the fully penalized recovery procedure in the noiseless setting where neither $f(\vx)=\|\vx^{\star}\|_1$ nor $g(\vv)=\|\vv^{\star}\|_1$ is known a priori. Similarly, the experiment settings are nearly the same as in the constrained case except that we reconstruct signals and corruptions via the following procedure in step (4):
\begin{equation}\label{L1FullyPenalized}
\min_{\vx,\vv}   \frac{1}{2} \|\vy - \mPhi \vx -\vv\|_2^2 + \tau_1 \|\vx\|_1 + \tau_2 \|\vv\|_1.
\end{equation}
As suggested in Section \ref{Howtochoosetau}, in order to recover signals and corruptions faithfully by this procedure, we require that the regularization parameters $\tau_1$ and $\tau_2$ tend to zero. Thus we set $\tau_1 = \tau_2 = 10^{-5}$. For reference, we also overlay the curve of \eqref{threshold} to compare with the empirical results.

Fig. \ref{fig:Fully_Penalized_Procedure} illustrates the average empirical probability of success for each setting $(s_{\sig}, s_{\cor})$. We can see that the theoretical recovery threshold \eqref{threshold} roughly predict observed phase transition with this choice of the regularization parameters.

\begin{figure*}
  	\centering
	\subfigure{
  		\includegraphics[width= .48\textwidth]{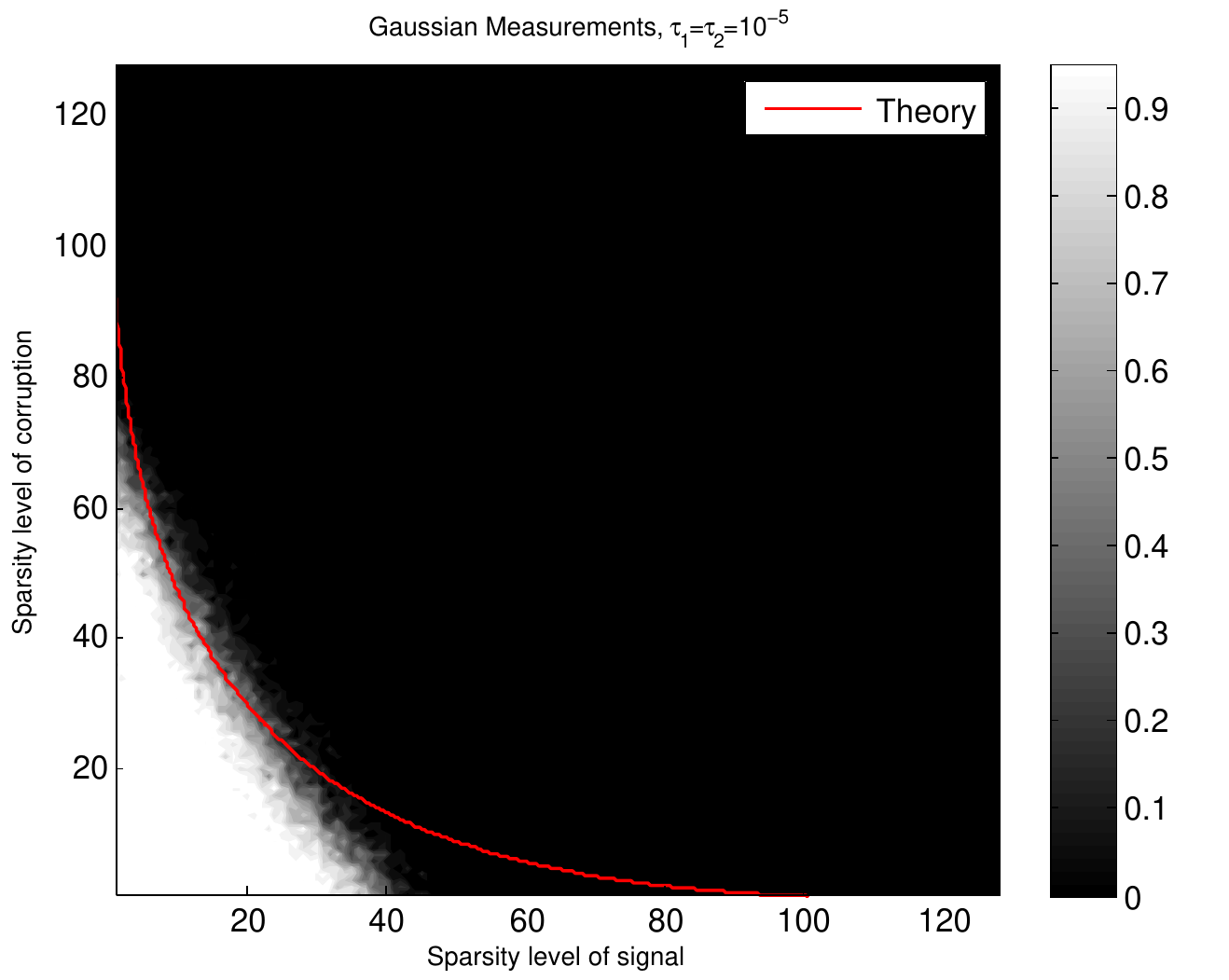}
	}
	\subfigure{
  		\includegraphics[width= .48\textwidth]{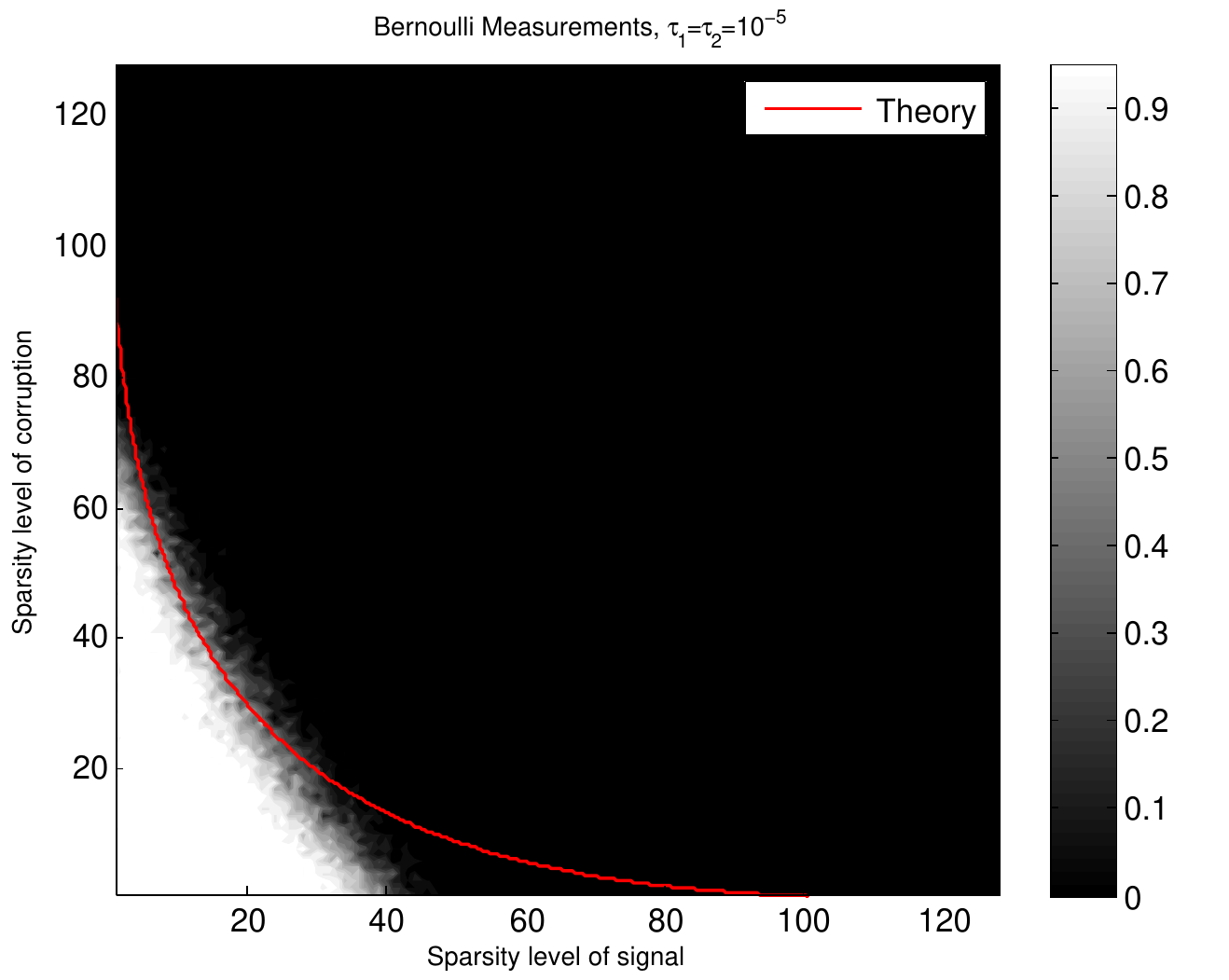}
	}
	\caption{Phase transition of the fully penalized recovery procedure \eqref{L1FullyPenalized} with $\tau_1 = \tau_2 = 10^{-5}$ under Gaussian and Bernoulli measurements. The red curve plots the phase transition threshold predicted by \eqref{threshold}.}
	\label{fig:Fully_Penalized_Procedure}
\end{figure*}

\subsection{Stable Recovery}
Finally, we investigate the empirical behavior of the three procedures under noisy measurements. We fix $m = n = 128$  and  $s_{\sig} = s_{\cor} = 20$. The first three steps of this experiment are the same as that in the constrained case and the other steps are as follows:
\begin{itemize}
	\item [(4)] Generate a Gaussian noise vector $\vz$ and scale $\vz$ such that $\|\vz\|_2 = \delta$.

    \item [(5)] Solve constrained, partially penalized, and fully penalized procedures under different noise levels with different regularization parameters. In the partially penalized case, we choose $\lambda = \lambda^{\star}$ according to \eqref{Chooselambda} or $\lambda = 1$.
        In the fully penalized case, we choose $\tau_1$ and $\tau_2$ according \eqref{ChoosetauCase2} or $\tau_1 = \tau_2 = 1$. Here, for sparse signal and sparse corruption, \eqref{ChoosetauCase2} implies $\tau_1 = \tau_{1B} \approx CK\beta\delta(\sqrt{n}+\sqrt{m})/\sqrt{m} =  2CK\beta\delta = C'K\beta\delta$ and $\tau_2 = \tau_{2B} = \beta \delta$. In our simulations, we choose $C'K=1$ and $\beta = 2$.

    \item [(6)] Record the $\ell_2$ recovery error $\sqrt{\|\hat{\vx}-\vx^{\star}\|_2^2 + \|\hat{\vv}-\vv^{\star}\|_2^2  }$ in each case.
\end{itemize}

Fig. \ref{fig:Stable_recovery} shows the average recovery error across $20$ runs. We can see that the recovery error increases linearly as the noise level increases in both constrained and partially penalized cases, which is consistent with our theoretical results (Theorems \ref{them: Constrained Recovery} and \ref{them: Partially_Penalized_Recovery}). Moreover, the partially penalized procedure with different regularization parameters shows almost the same performance as the constrained one when the sparsity levels of signal and corruption are not too high ($s_{\sig} = s_{\cor} = 20$).
This implies that when the estimate of the sparsity levels $s_{\sig}$ and $s_{\cor}$ is unavailable, the setting $\lambda = 1$ yields high probability recovery, provided that the sparsity levels of signal and corruption are relatively low. In the fully penalized case, the recovery performance with $\tau_1$ and $\tau_2$ chosen according to \eqref{ChoosetauCase2} is much better than that with $\tau_1 = \tau_2 =1$. This demonstrates the effectiveness of the parameter selection strategy \eqref{ChoosetauCase2}.

\begin{figure*}
  	\centering
	\subfigure{
  		\includegraphics[width= .48\textwidth]{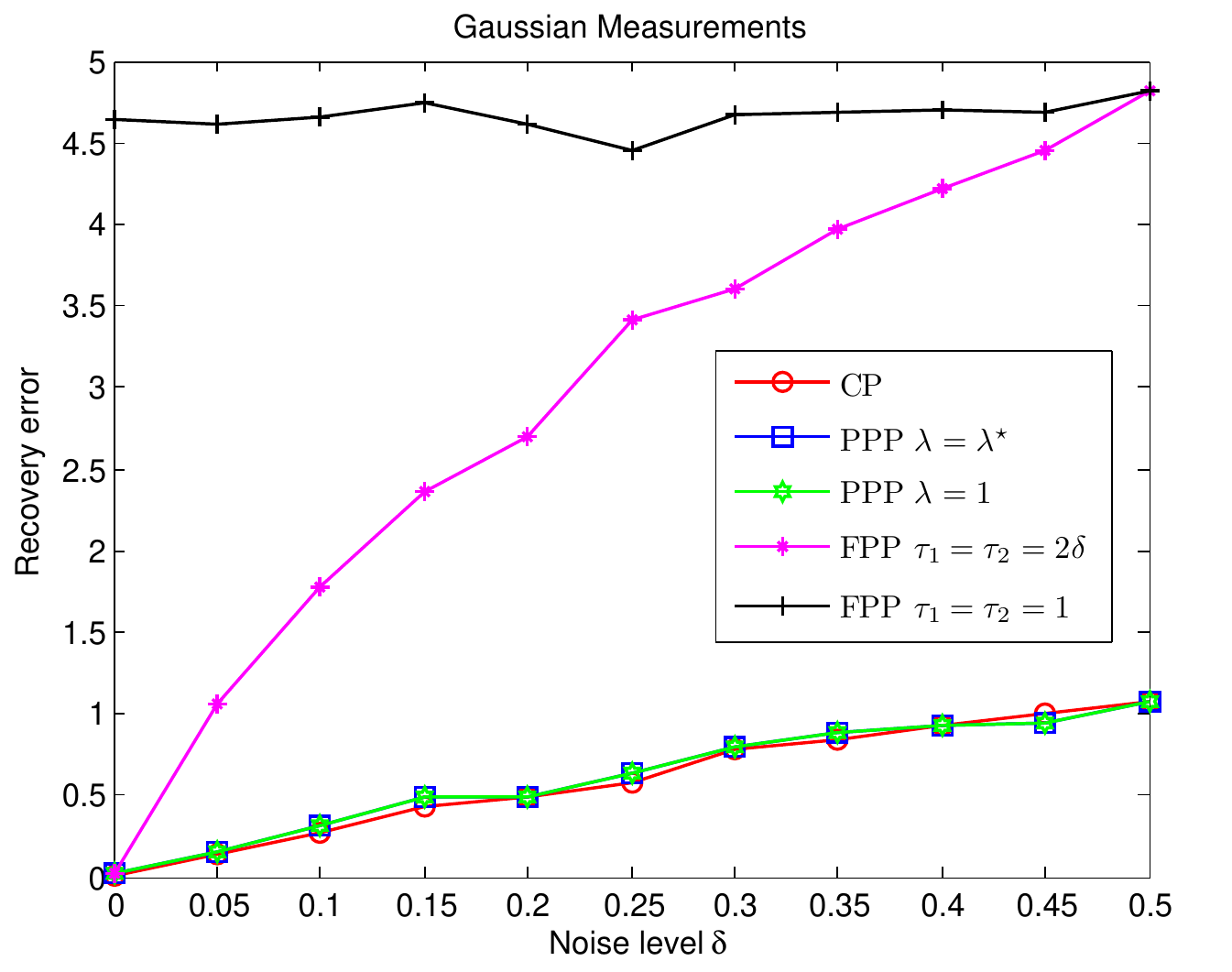}
	}
	\subfigure{
  		\includegraphics[width= .48\textwidth]{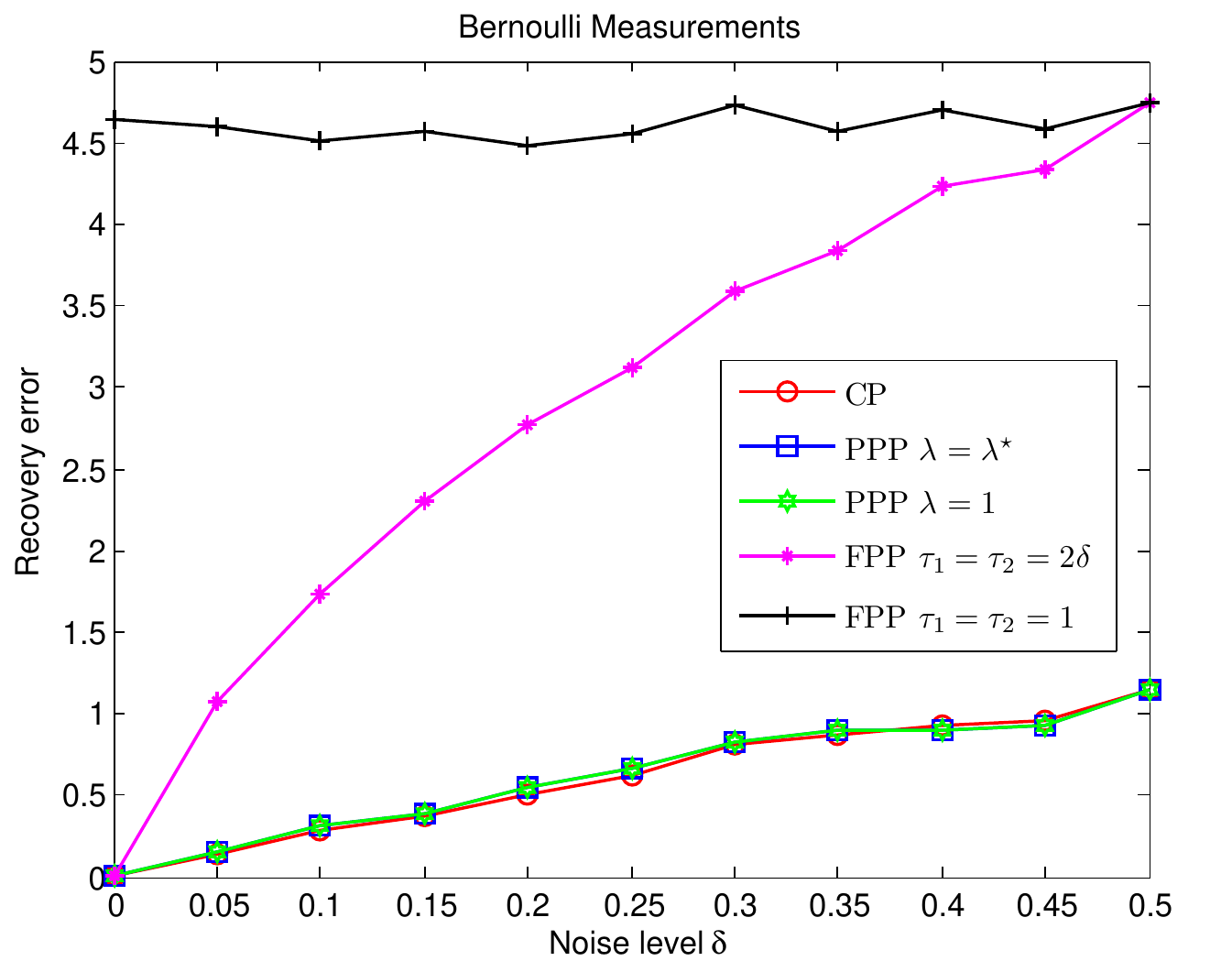}
	}
	\caption{Recovery error $\sqrt{\|\hat{\vx}-\vx^{\star}\|_2^2 + \|\hat{\vv}-\vv^{\star}\|_2^2}$ by constrained, partially penalized, and fully penalized procedures (abbreviated as CP, PPP, and FPP, respectively) under different noise levels with different regularization parameters. Left: Gaussian measurements, right: Bernoulli measurements.}
	\label{fig:Stable_recovery}
\end{figure*}

\section{Conclusion}\label{Conclusion}
In this paper, we have established an extended matrix deviation inequality for sub-Gaussian matrices, which provides a powerful tool to analyze the corrupted sensing problem. We then presented performance analysis for three convex recovery procedures which are used to recover structured signals from corrupted sub-Gaussian measurements when different kinds of prior information are available. We considered both bounded and stochastic noise. Our results have shown that, under proper conditions, these approaches reconstruct both signal and corruption exactly in the absence of noise and stably in the presence of noise. Moreover, our results also indicate how to pick the regularization parameters in both partially and fully penalized recovery procedures and reveal the relationships among these methods. For future work, it would be of great interest to establish the phase transition theory for both constrained and partially penalized procedures under sub-Gaussian measurements.

\appendices

\section{Some Properties of Sub-Gaussian and Sub-exponential random variables and vectors}
%

\begin{fact}[Product of sub-Gaussians is sub-exponential]\cite[Lemma 2.7.7]{vershynin2016book} \label{Product of subgaus is subexp}
	Let $X$ and $Y$ be sub-Gaussian random variables (not necessarily independent). Then $XY$ is sub-exponential. Moreover,
	\begin{equation}
	\|XY\|_{\psi_1} \leq \|X\|_{\psi_2}\|Y\|_{\psi_2}.
	\end{equation}
\end{fact}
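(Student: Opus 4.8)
The statement is the classical fact that the product of two sub-Gaussian random variables is sub-exponential, with the natural bound on Orlicz norms. The plan is to reduce to the normalized case and then combine Young's inequality with the Cauchy--Schwarz inequality applied to the exponential moments.

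First I would handle the trivial case: if either $\|X\|_{\psi_2} = 0$ or $\|Y\|_{\psi_2} = 0$, then the corresponding variable vanishes almost surely, so $XY = 0$ a.s.\ and the inequality holds trivially. Otherwise, by the positive homogeneity of the Orlicz norms, it suffices to prove $\|XY\|_{\psi_1} \leq 1$ under the normalization $\|X\|_{\psi_2} = \|Y\|_{\psi_2} = 1$; the general case follows by rescaling $X$ and $Y$. To avoid worrying about whether the infimum in the definition \eqref{Sub-Gaussian_Definition} is attained, I would instead fix an arbitrary $\varepsilon > 0$, work with $K_X = 1+\varepsilon \geq \|X\|_{\psi_2}$ and $K_Y = 1+\varepsilon$ so that $\E\exp(X^2/K_X^2) \leq 2$ and $\E\exp(Y^2/K_Y^2) \leq 2$, carry the argument through, and let $\varepsilon \to 0$ at the end.

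The core computation is short. By Young's inequality $ab \leq \tfrac12 a^2 + \tfrac12 b^2$ applied with $a = |X|/K_X$, $b = |Y|/K_Y$, we get $|XY| \leq \tfrac12\big(K_Y^2/K_X\big)\cdot\big(X^2/K_X\big) + \ldots$; more cleanly, dividing through, $\dfrac{|XY|}{K_XK_Y} \leq \dfrac12\dfrac{X^2}{K_X^2} + \dfrac12\dfrac{Y^2}{K_Y^2}$. Exponentiating and using the Cauchy--Schwarz inequality,
\begin{align*}
\E\exp\!\left(\frac{|XY|}{K_XK_Y}\right) &\leq \E\left[\exp\!\left(\frac{X^2}{2K_X^2}\right)\exp\!\left(\frac{Y^2}{2K_Y^2}\right)\right] \\
&\leq \left(\E\exp\!\left(\frac{X^2}{K_X^2}\right)\right)^{1/2}\left(\E\exp\!\left(\frac{Y^2}{K_Y^2}\right)\right)^{1/2} \leq \sqrt{2}\cdot\sqrt{2} = 2.
\end{align*}
Hence $\E\psi_1\!\big(|XY|/(K_XK_Y)\big) = \E\exp\!\big(|XY|/(K_XK_Y)\big) - 1 \leq 1$, which by definition \eqref{Sub-exponential_Definition} gives $\|XY\|_{\psi_1} \leq K_XK_Y = (1+\varepsilon)^2$. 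Letting $\varepsilon \to 0$ yields $\|XY\|_{\psi_1} \leq 1$ in the normalized case, and undoing the normalization gives $\|XY\|_{\psi_1} \leq \|X\|_{\psi_2}\|Y\|_{\psi_2}$ in general; in particular $XY$ is sub-exponential.

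There is no real obstacle here --- the only points requiring a little care are the homogeneity reduction and the $\varepsilon$-device to sidestep attainment of the infimum in the Orlicz-norm definition. The two inequalities doing the work, Young and Cauchy--Schwarz, are both elementary, and the bound $\E\exp(X^2/K^2) \leq 2$ for $K = \|X\|_{\psi_2}$ (up to $\varepsilon$) is immediate from \eqref{Sub-Gaussian_Definition}.
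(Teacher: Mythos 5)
Your argument is correct. The paper itself gives no proof of this fact---it is stated with a citation to \cite[Lemma 2.7.7]{vershynin2016book}---and your Young-plus-Cauchy--Schwarz computation is precisely the standard argument from that reference, with the degenerate case and the $\varepsilon$-device handled cleanly. Nothing further is needed.
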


\begin{fact}[Centering]\cite[Lemma 2.6.8 and Exercise 2.7.10]{vershynin2016book}
	\label{pro: center}
	If $X$ is sub-Gaussian (or sub-exponential), then so is $X-\E X$. Moreover,
	\begin{equation*}
	\|X - \E X\|_{\psi_2} \leq C \|X\|_{\psi_2} ~~\textrm{and}~~ \quad \|X-\E X\|_{\psi_1} \leq C \|X\|_{\psi_1}.
	\end{equation*}
\end{fact}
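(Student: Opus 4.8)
The plan is to exploit that $\|\cdot\|_{\psi_2}$ (and likewise $\|\cdot\|_{\psi_1}$) is a genuine norm on the corresponding Orlicz space, so that the triangle inequality applies, and then to control the Orlicz norm of the deterministic shift $\E X$ separately.

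First I would apply the triangle inequality to write
\[
\|X - \E X\|_{\psi_2} \le \|X\|_{\psi_2} + \|\E X\|_{\psi_2} = \|X\|_{\psi_2} + |\E X| \cdot \|1\|_{\psi_2},
\]
where $1$ denotes the constant random variable equal to $1$. A direct computation from \eqref{Sub-Gaussian_Definition} gives $\|1\|_{\psi_2} = 1/\sqrt{\ln 2}$, since $\E\,\psi_2(1/K) = e^{1/K^2} - 1 \le 1$ exactly when $K \ge 1/\sqrt{\ln 2}$. It then remains to bound $|\E X|$: by Jensen's inequality $|\E X| \le \E|X|$, and the moment characterization \eqref{Sub-Gaussian_Definition1} with $p = 1$ (legitimate up to an absolute constant, as recalled after \eqref{Sub-Gaussian_Definition2}, or obtained by integrating the tail bound \eqref{Sub-Gaussian_Definition2}) yields $\E|X| \le C\|X\|_{\psi_2}$. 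Combining the two displays gives
\[
\|X - \E X\|_{\psi_2} \le \bigl(1 + C/\sqrt{\ln 2}\bigr)\,\|X\|_{\psi_2},
\]
which is the claimed bound with a new absolute constant.

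For the sub-exponential case the argument is identical: the triangle inequality for $\|\cdot\|_{\psi_1}$ gives $\|X - \E X\|_{\psi_1} \le \|X\|_{\psi_1} + |\E X|\cdot\|1\|_{\psi_1}$ with $\|1\|_{\psi_1} = 1/\ln 2$, and $|\E X| \le \E|X| \le C\|X\|_{\psi_1}$ follows from the corresponding moment (or tail) characterization of sub-exponential variables. I expect no real obstacle here; the only point needing care is that one must invoke the equivalence, up to absolute constants, between the Orlicz-norm definition and the moment/tail definitions in order to justify $\E|X| \le C\|X\|_{\psi_2}$ — but this is exactly the equivalence recorded in the preliminaries, so the proof is essentially a two-line application of the triangle inequality.
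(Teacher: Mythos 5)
Your proof is correct; the paper itself offers no proof of this fact, simply importing it from \cite[Lemma 2.6.8 and Exercise 2.7.10]{vershynin2016book}, and your argument (triangle inequality for the Orlicz norm, $\|\E X\|_{\psi_2}=|\E X|\cdot\|1\|_{\psi_2}$, and $|\E X|\le\E|X|\le C\|X\|_{\psi_2}$ via the moment or tail characterization) is exactly the standard one given in that reference. One small simplification available to you: $\E|X|\le C\|X\|_{\psi_2}$ follows directly from the Orlicz definition \eqref{Sub-Gaussian_Definition} by Jensen's inequality, $\exp\bigl((\E|X|)^2/K^2\bigr)\le\E\exp(|X|^2/K^2)\le 2$, so you need not invoke the equivalence between the three definitions at all.
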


\begin{fact}[Hoeffding-type Inquality] \cite[Theorem 2.6.3]{vershynin2016book}
	\label{Hoeffding's ineq}
	Let $X_1, \ldots, X_m$ be independent, mean-zero, sub-Gaussian random variables, and $\va = (a_1, a_2, \ldots, a_m)^{T}\in\R^{m}$. Then, for any $t\geq 0$, we have
	\begin{equation}
	\Pr{\left| \sum_{i=1}^{m}a_i X_i \right| \geq t } \leq 2 \exp\left\{-\frac{ct^2}{K^2 \|\va\|_2^2} \right\},
	\end{equation}
	where $K = \max_{i} \|X_i\|_{\psi_2}$.
\end{fact}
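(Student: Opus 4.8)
The plan is to run the classical moment-generating-function (Chernoff) argument. The only ingredient beyond the characterizations recalled above is the one-variable exponential-moment bound: if $X$ is mean-zero and sub-Gaussian with $\|X\|_{\psi_2}\le K_0$, then there is an absolute constant $c_0$ such that $\E\exp(sX)\le\exp(c_0 s^2 K_0^2)$ for every $s\in\R$. I would obtain this from the moment characterization \eqref{Sub-Gaussian_Definition1} by the standard power-series computation: expand $\exp(sX)$, use $\E X=0$ to kill the linear term, insert $\E|X|^k\le (CK_0)^k k^{k/2}$, and bound the resulting series with $k!\ge (k/e)^k$, treating $|s|K_0$ small and large separately so that the bound holds for all $s$. (Equivalently, this is one of the standard sub-Gaussian equivalences and could simply be added to Appendix~A.)

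Given that estimate, fix $\lambda>0$ and apply Markov's inequality to the exponential. By independence,
\[
\Pr{\sum_{i=1}^{m}a_i X_i\ge t}\le e^{-\lambda t}\,\E\exp\!\Big(\lambda\sum_{i=1}^{m}a_i X_i\Big)=e^{-\lambda t}\prod_{i=1}^{m}\E\exp(\lambda a_i X_i).
\]
Applying the one-variable bound with $s=\lambda a_i$ and $K_0\le K=\max_i\|X_i\|_{\psi_2}$ gives $\prod_i\E\exp(\lambda a_i X_i)\le\exp\!\big(c_0\lambda^2 K^2\sum_i a_i^2\big)=\exp(c_0\lambda^2K^2\|\va\|_2^2)$, hence
\[
\Pr{\sum_{i=1}^{m}a_i X_i\ge t}\le\exp\!\big(-\lambda t+c_0\lambda^2K^2\|\va\|_2^2\big).
\]
Optimizing the exponent, the choice $\lambda=t/(2c_0K^2\|\va\|_2^2)$ yields $\Pr{\sum_i a_i X_i\ge t}\le\exp\!\big(-t^2/(4c_0K^2\|\va\|_2^2)\big)$.

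Finally, since $-X_1,\dots,-X_m$ are again independent, mean-zero, and sub-Gaussian with the same $\psi_2$-norms, the identical estimate bounds $\Pr{\sum_i a_i X_i\le -t}$; adding the two one-sided bounds gives the stated two-sided inequality with the factor $2$ and $c=1/(4c_0)$. The edge cases $\va=\vzero$ or $t=0$ are immediate since the right-hand side is then at least $1$ (or the left-hand side is $0$). The one genuine piece of work is the first step—verifying the exponential-moment bound for mean-zero sub-Gaussian variables from the moment/tail characterization while keeping the absolute constants straight; everything after that is the routine Chernoff optimization followed by a union bound.
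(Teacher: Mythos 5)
The paper states this as a Fact with a citation to Vershynin's book and gives no proof of its own; your argument is the standard moment-generating-function/Chernoff proof used in that reference, and it is correct. The one nontrivial ingredient—the bound $\E\exp(sX)\le\exp(c_0 s^2\|X\|_{\psi_2}^2)$ for mean-zero sub-Gaussian $X$—is a standard equivalence, and the rest (independence to factor the MGF, optimizing $\lambda$, symmetrizing for the two-sided bound) is carried out correctly.
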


\begin{fact}[Bernstein-type Inequality] \cite[Theorem 2.8.2]{vershynin2016book}
    \label{Bernstein ineq}
	Let $X_1, X_2, \ldots, X_m $ be independent, mean-zero, sub-exponential random variables, and $\va = (a_1, a_2, \ldots, a_m)^{T}\in\R^{m}$. Then, for any $t \geq 0$, we have
	\begin{align}
	\Pr{ \left| \sum_{i=1}^{m} a_{i} X_{i} \right| \geq t }\leq 2 \exp \left\{ -c \min  \left( \frac{t^2}{K^2 \|\va\|_2^2}, \frac{t}{K\|\va\|_{\infty} } \right)\right\},
	\end{align}
	where $K = \max_{i} \|X_{i}\|_{\psi_1}$.
\end{fact}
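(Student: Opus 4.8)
The plan is to prove the bound by the classical Cram\'er--Chernoff (exponential moment) method: control the moment generating function (MGF) of the weighted sum and then optimize the resulting Chernoff bound over the free parameter. First I would note that, by symmetry, it suffices to handle the one-sided tail $\Pr{\sum_{i=1}^m a_i X_i \geq t}$; applying this estimate to $-X_i$ (which is again centered sub-exponential with the same $\psi_1$-norm) and taking a union bound then yields the two-sided statement, and this is exactly where the factor $2$ comes from.

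The key analytic input is an MGF estimate for a single centered sub-exponential random variable: if $\E X = 0$ and $\|X\|_{\psi_1} \leq K$, then there are absolute constants $c_0, C_0 > 0$ such that $\E \exp(\lambda X) \leq \exp(C_0 K^2 \lambda^2)$ whenever $|\lambda| \leq c_0/K$. I would derive this by expanding the exponential, bounding $\E|X|^p \leq (CK)^p\, p!$ from the $\psi_1$-norm (equivalently, from the moment characterization of sub-exponentiality), using $\E X = 0$ to kill the linear term, and summing a geometric-type series valid for $|\lambda|$ sufficiently small. Since the $X_i$ are independent, the MGF tensorizes: observing that $\|a_i X_i\|_{\psi_1} \leq |a_i| K$, the single-variable bound applied to each $a_i X_i$ gives, for every $\lambda$ with $|\lambda| \leq c_0/(K\|\va\|_\infty)$,
\[
\E \exp\Big(\lambda \sum_{i=1}^m a_i X_i\Big) = \prod_{i=1}^m \E \exp(\lambda a_i X_i) \leq \prod_{i=1}^m \exp\big(C_0 K^2 \lambda^2 a_i^2\big) = \exp\big(C_0 K^2 \lambda^2 \|\va\|_2^2\big).
\]

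With this in hand, Markov's inequality gives $\Pr{\sum_{i=1}^m a_i X_i \geq t} \leq \exp(-\lambda t + C_0 K^2 \lambda^2 \|\va\|_2^2)$ for every $0 \leq \lambda \leq c_0/(K\|\va\|_\infty)$, and I would minimize the exponent over this admissible range. The unconstrained minimizer is $\lambda^\ast = t/(2C_0 K^2\|\va\|_2^2)$, producing exponent $-t^2/(4C_0 K^2\|\va\|_2^2)$; this is admissible precisely when $t \lesssim K\|\va\|_2^2/\|\va\|_\infty$ (the ``sub-Gaussian regime''). For larger $t$ one takes $\lambda$ equal to the endpoint $c_0/(K\|\va\|_\infty)$, which after a short computation yields an exponent of order $-t/(K\|\va\|_\infty)$ (the ``sub-exponential regime''). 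Combining the two cases — and noting that in each regime the active term is the smaller of the two — gives the stated $\min\{\,t^2/(K^2\|\va\|_2^2),\, t/(K\|\va\|_\infty)\,\}$ form, after adjusting the absolute constant $c$.

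The main obstacle is the single-variable MGF estimate together with the bookkeeping of the admissible range $|\lambda| \leq c_0/(K\|\va\|_\infty)$: it is exactly this constraint that forces the two-regime behavior and hence the $\min$ in the conclusion, so it must be tracked carefully through the tensorization and the optimization. Everything else — the Chernoff step, the case split, and the symmetrization to a two-sided bound — is then routine.
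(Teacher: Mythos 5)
Your proposal is correct, but note that the paper offers no proof of this statement at all: it is quoted verbatim as a known fact, with a citation to \cite[Theorem 2.8.2]{vershynin2016book}. The Cram\'er--Chernoff argument you outline --- the single-variable MGF bound $\E\exp(\lambda X)\leq\exp(C_0K^2\lambda^2)$ valid only for $|\lambda|\leq c_0/K$, tensorization over the independent summands, and optimization of the Chernoff exponent over the constrained range $0\leq\lambda\leq c_0/(K\|\va\|_\infty)$, which forces the two-regime $\min$ --- is precisely the standard proof given in that cited reference, so there is nothing to compare against within the paper itself.
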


\section{Proof of Theorem \ref{them: mat dev ineq}}
\label{ProofofEMDI}
In order to prove Theorem \ref{them: mat dev ineq}, we need the following key lemma which states that the random process $X_{\va,\vb}:= \|\mA\va + \sqrt{m}\vb\|_2 - \sqrt{m}\cdot\sqrt{\|\va\|^2_2 + \|\vb\|_2^2}$ has sub-Gaussian increments.
\begin{lemma}[Sub-Gaussian Process]
	\label{them: sub-Gaussian process}
	Let $\mA$ be an $m \times n$ matrix whose rows $\mA_i$ are independent centered isotropic sub-Gaussian vectors. Then the random process
	\begin{align*}
	X_{\va,\vb} &=\|\mA\va+\sqrt{m}\vb\|_2 - (\E\|\mA\va + \sqrt{m}\vb\|_2^2)^{1/2}\\
	&=\|\mA\va + \sqrt{m}\vb\|_2 - \sqrt{m}\cdot\sqrt{\|\va\|^2_2 + \|\vb\|_2^2}
	\end{align*}
    has sub-Gaussian increments:
    \begin{equation}\label{subGaussian_increments}
	\left\| X_{\va,\vb} - X_{\va',\vb'} \right\|_{\psi_2} \leq CK^2\cdot\sqrt{\|\va-\va'\|_2^2 + \|\vb-\vb'\|_2^2} ~~~~\text{for every}~~(\va,\vb),(\va',\vb')\in\R^n\times\R^m,
	\end{equation}
    where $K = \max_{i}\|\mA_i\|_{\psi_2}$.
\end{lemma}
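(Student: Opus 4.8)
The plan is to follow the strategy of Liaw \textit{et al.} \cite{liaw2016simple} for the single-matrix process $\|\mA\va\|_2-\sqrt{m}\|\va\|_2$, adapting every step to accommodate the extra block $\sqrt{m}\vb$; throughout one may assume $K\geq 1$. Writing $\tilde{\mA}=[\mA,\sqrt{m}\,\mI_m]$ and $\tilde{\va}=(\va,\vb)$, we have $\|\tilde{\mA}\tilde{\va}\|_2=\|\mA\va+\sqrt{m}\vb\|_2$ and $\E\|\tilde{\mA}\tilde{\va}\|_2^2=m\|\tilde{\va}\|_2^2$, so that $X_{\va,\vb}=\|\tilde{\mA}\tilde{\va}\|_2-\sqrt{m}\|\tilde{\va}\|_2$ and \eqref{subGaussian_increments} is the assertion that this process has $\psi_2$-increments with parameter $CK^2$. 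The key observation is that, although the rows of $\tilde{\mA}$ are neither centered nor isotropic and carry (unnormalized) $\psi_2$-norm of order $\sqrt{m}$, they are still independent sub-Gaussian vectors, which is all the increment analysis needs once the block structure is exploited by hand.

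\textbf{Single-vector concentration and reduction to the sphere.} I would first establish the one-point bound $\big\|\,\|\tilde{\mA}\tilde{\va}\|_2-\sqrt{m}\|\tilde{\va}\|_2\,\big\|_{\psi_2}\leq CK^2\|\tilde{\va}\|_2$. Expanding $\|\mA\va+\sqrt{m}\vb\|_2^2=\|\mA\va\|_2^2+2\sqrt{m}\langle\mA\va,\vb\rangle+m\|\vb\|_2^2$ and subtracting the mean $m(\|\va\|_2^2+\|\vb\|_2^2)$ leaves $\sum_i(\langle\mA_i,\va\rangle^2-\|\va\|_2^2)+2\sqrt{m}\sum_i\vb_i\langle\mA_i,\va\rangle$: the first sum is a mean-zero sum of independent sub-exponentials with $\psi_1$-norms $\lesssim K^2\|\va\|_2^2$ (Facts~\ref{Product of subgaus is subexp} and~\ref{pro: center}), controlled by Bernstein's inequality (Fact~\ref{Bernstein ineq}), and the second is a mean-zero Hoeffding-type sum (Fact~\ref{Hoeffding's ineq}). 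Dividing by $\|\mA\va+\sqrt{m}\vb\|_2+\sqrt{m}\|\tilde{\va}\|_2\geq\sqrt{m}\|\tilde{\va}\|_2$ and using $\|\va\|_2\|\vb\|_2\leq\tfrac12\|\tilde{\va}\|_2^2$ converts these estimates into a sub-Gaussian tail at scale $CK^2\|\tilde{\va}\|_2$ for deviations up to order $K^2\sqrt{m}\|\tilde{\va}\|_2$; larger deviations are absorbed via $\|\mA\va+\sqrt{m}\vb\|_2\leq\|\mA\va\|_2+\sqrt{m}\|\vb\|_2$ and the Bernstein tail for $\|\mA\va\|_2^2$. This one-point bound reduces the lemma to the unit sphere: writing $\tilde{\va}=r\tilde{\vu}$, $\tilde{\va}'=r'\tilde{\vu}'$ with $\tilde{\vu},\tilde{\vu}'\in\S^{n+m-1}$,
\[
X_{\va,\vb}-X_{\va',\vb'}=r\big(\|\tilde{\mA}\tilde{\vu}\|_2-\|\tilde{\mA}\tilde{\vu}'\|_2\big)+(r-r')\big(\|\tilde{\mA}\tilde{\vu}'\|_2-\sqrt{m}\big),
\]
so, using $\big\|\,\|\tilde{\mA}\tilde{\vu}'\|_2-\sqrt{m}\,\big\|_{\psi_2}\leq CK^2$ and the elementary inequality $r\|\tilde{\vu}-\tilde{\vu}'\|_2+|r-r'|\leq 3\|\tilde{\va}-\tilde{\va}'\|_2$ (the cases $\tilde{\va}=\vzero$ or $\tilde{\va}'=\vzero$ being immediate), it suffices to prove $\big\|\,\|\tilde{\mA}\tilde{\vu}\|_2-\|\tilde{\mA}\tilde{\vu}'\|_2\,\big\|_{\psi_2}\leq CK^2\|\tilde{\vu}-\tilde{\vu}'\|_2$ for unit $\tilde{\vu},\tilde{\vu}'$.

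\textbf{Increments on the sphere.} For $\tilde{\vu}=(\vu_1,\vu_2)$, $\tilde{\vu}'=(\vu_1',\vu_2')\in\S^{n+m-1}$, put $p=\|\tilde{\vu}-\tilde{\vu}'\|_2$ and use
\[
\|\tilde{\mA}\tilde{\vu}\|_2-\|\tilde{\mA}\tilde{\vu}'\|_2=\frac{\langle\tilde{\mA}(\tilde{\vu}-\tilde{\vu}'),\tilde{\mA}(\tilde{\vu}+\tilde{\vu}')\rangle}{\|\tilde{\mA}\tilde{\vu}\|_2+\|\tilde{\mA}\tilde{\vu}'\|_2}.
\]
Splitting the numerator according to $\tilde{\mA}=[\mA,\sqrt{m}\,\mI_m]$ exhibits it as a sum of one Bernstein-type (sub-exponential) and two Hoeffding-type (sub-Gaussian) mean-zero sums (the deterministic pieces cancelling against the mean, since $\tilde{\vu},\tilde{\vu}'$ are unit vectors), all with $\psi_1$/$\psi_2$-scales $\lesssim K^2p$ (using $\|\vu_i\pm\vu_i'\|_2\leq 2$ and $\|\vu_1-\vu_1'\|_2^2+\|\vu_2-\vu_2'\|_2^2=p^2$). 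On the event $\|\tilde{\mA}\tilde{\vu}\|_2+\|\tilde{\mA}\tilde{\vu}'\|_2\geq\sqrt{m}$, whose complement has probability $\leq 4\exp(-cm/K^4)$ by the one-point concentration, dividing by $\sqrt{m}$ turns the numerator estimate into a sub-Gaussian tail at scale $CK^2p$ for deviations $s\lesssim\sqrt{m}\,p$; in that range $s^2/(K^4p^2)\leq m/K^4$, so the denominator-failure probability is itself dominated by $\exp(-cs^2/(K^4p^2))$. For $s\gtrsim\sqrt{m}\,p$ I would instead invoke the crude bound $|\|\tilde{\mA}\tilde{\vu}\|_2-\|\tilde{\mA}\tilde{\vu}'\|_2|\leq\|\mA(\vu_1-\vu_1')+\sqrt{m}(\vu_2-\vu_2')\|_2$ together with the concentration of this quantity about $\sqrt{m}\,p$ just established. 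Combining the two regimes gives the required increment bound, and hence the lemma.

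\textbf{Where the difficulty lies.} The main obstacle is the two-regime bookkeeping: one must patch the Bernstein/Hoeffding ``bulk'' estimate, valid only for moderate deviations, with the crude-norm estimate, valid for all deviations but tight only for large ones --- and, the subtle point, the switch must be placed at $s\asymp\sqrt{m}\,p$ rather than at the larger natural crossover $s\asymp K^2\sqrt{m}\,p$, since only then is the $\exp(-cm/K^4)$ failure probability of the denominator event absorbed with the correct power of $K$. A second, deceptively minor point is keeping every constant's $K$-dependence down to $K^2$: this forces one to use Facts~\ref{Product of subgaus is subexp}--\ref{pro: center} to get $\psi_1$-norms $\lesssim K^2$ for the quadratic terms and $\psi_2$-norms $\lesssim K$ for the linear terms, and to track carefully how division by $\sqrt{m}\|\tilde{\va}\|_2$ (resp.\ $\sqrt{m}$) rescales the Bernstein exponents.
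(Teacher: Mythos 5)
Your proposal is correct and follows essentially the same route as the paper's proof in Appendices B--C: a one-point concentration bound obtained by expanding $\|\mA\va+\sqrt{m}\vb\|_2^2$ into a Bernstein-type quadratic sum plus a Hoeffding-type cross term, a sphere-increment bound via the identity $\|\tilde{\mA}\tilde{\vu}\|_2-\|\tilde{\mA}\tilde{\vu}'\|_2=\langle\tilde{\mA}(\tilde{\vu}-\tilde{\vu}'),\tilde{\mA}(\tilde{\vu}+\tilde{\vu}')\rangle/(\|\tilde{\mA}\tilde{\vu}\|_2+\|\tilde{\mA}\tilde{\vu}'\|_2)$ with the same two deviation regimes split at $t\asymp\sqrt{m}$ and the same lower-bound event on the denominator, and a final reduction from general points to the sphere. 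The only cosmetic difference is in that last reduction (you normalize both points and use a factor-$3$ triangle estimate, whereas the paper normalizes one point and invokes a $\sqrt{2}$ geometric inequality), which does not change the argument.
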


\begin{proof}
  See Appendix \ref{ProofofSGP}.
\end{proof}

Combing Lemma \ref{them: sub-Gaussian process} and Talagrand's Majorizing Measure Theorem yields the proof of Theorem \ref{them: mat dev ineq}.

\begin{proof}
According to Lemma \ref{them: sub-Gaussian process}, the random process
\begin{align*}
X_{\va,\vb}:=\|\mA\va + \sqrt{m}\vb\|_2 - \sqrt{m}\cdot\sqrt{\|\va\|_2^2 + \|\vb\|_2^2}
\end{align*}
has sub-Gaussian increments, that is
\begin{align*}
\|X_{\va,\vb}-X_{\va',\vb'}\|_{\psi_2}&\leq CK^2\cdot\sqrt{\|\va-\va'\|_2^2 + \|\vb-\vb'\|_2^2}\\
&=CK^2\left\|
	\begin{bmatrix}
		\va\\
		\vb\\
	\end{bmatrix}
   -\begin{bmatrix}
		\va'\\
		\vb'\\
    \end{bmatrix}
\right\|_2
\end{align*}
for every $(\va,\vb), (\va',\vb') \in \TT$. It follows from Talagrand's Majorizing Measure Theorem that
\begin{align*}
\E\sup_{(\va,\vb),(\va',\vb')\in \TT}\big|X_{\va,\vb}-X_{\va',\vb'}\big|&\leq CK^2\E\sup_{(\va,\vb)\in \TT}\ip{\begin{bmatrix}
	\vg\\
	\vh\\
	\end{bmatrix}}{\begin{bmatrix}
	\va\\
	\vb\\
	\end{bmatrix}}\\
&=CK^2\omega(\TT),
\end{align*}
where $\vg\sim\NN(\vzero, \mI_n)$ and $\vh\sim\NN(\vzero, \mI_m)$. Thus, fix an arbitrary point $(\va_0,\vb_0)\in \TT$ and use the triangle inequality to obtain
\begin{align*}
\E\sup_{(\va,\vb)\in \TT}\big|X_{\va,\vb} \big| & \leq \E\sup_{(\va,\vb)\in \TT}\big|X_{\va,\vb} - X_{\va_0,\vb_0} \big| + \E\big| X_{\va_0,\vb_0} \big|\\
                                              & \leq CK^2 \omega(\TT) +  \E\big| X_{\va_0,\vb_0} \big|.
\end{align*}
The second term can be bounded as follows:
\begin{align*}
\E\big| X_{\va_0,\vb_0} \big| \leq C'\| X_{\va_0,\vb_0} \|_{\psi_2}  \leq C''K^2\cdot\left\|\begin{bmatrix}
\va_0\\
\vb_0\\
\end{bmatrix}\right\|_2.
\end{align*}
The first inequality follows from the equivalent definition \eqref{Sub-Gaussian_Definition1} and the second inequality holds by using Lemma \ref{them: sub-Gaussian process} with $[\va'^T,\vb'^T]^T = \vzero$.
Therefore, by \eqref{Relation}, we have
\begin{align*}
\E\sup_{(\va,\vb)\in \TT}\big|X_{\va,\vb} \big|&\leq CK^2 \omega(\TT) + C''K^2\cdot\left\|\begin{bmatrix}
\va_0\\
\vb_0\\
\end{bmatrix}\right\|_2\\
&\leq C_0K^2\gamma(\TT).
\end{align*}
This establishes the expectation bound.

For the high probability bound, define $\bar{\TT} = \TT \cup \{\vzero\}$. It also follows from Talagrand's Majorizing Measure Theorem that with probability at least $1-\exp(-t^2)$,
\begin{align*}
\sup_{(\va,\vb) \in \TT}\Big| X_{\va,\vb} \Big| = \sup_{(\va,\vb) \in \bar{\TT}}\Big| X_{\va,\vb} \Big| & = \sup_{(\va,\vb) \in \bar{\TT}}\Big| X_{\va,\vb} - X_{\vzero,\vzero} \Big| \\
                                                    & \leq \sup_{(\va,\vb),(\va',\vb') \in \bar{\TT}}\Big| X_{\va,\vb} - X_{\va',\vb'} \Big| \\
                                                    &\leq C'K^2 \big[\omega(\bar{\TT})+ t\cdot\diam(\bar{\TT})\big]\\
                                                    &\leq CK^2 \big[\gamma(\TT)+ t\cdot\rad(\TT)\big].
\end{align*}
In the last inequality, we have used the facts that $\omega(\bar{\TT}) \leq \gamma(\bar{\TT}) = \gamma(\TT)$ and $\diam(\bar{\TT})\leq 2 \rad(\TT)$.
This completes the proof.
\end{proof}

\section{Proof of Lemma \ref{them: sub-Gaussian process}}
\label{ProofofSGP}
Our proof of Lemma \ref{them: sub-Gaussian process} is inspired by \cite{Schechtman2006} and \cite{liaw2016simple}. For clarity, the proof is divided into three steps. First, we show a partial case of Lemma \ref{them: sub-Gaussian process} in which $(\va^T,\vb^T)^T\in\S^{n+m-1}$ and $(\va'^T,\vb'^T)^T= \vzero$. We then extend this by allowing $(\va'^T,\vb'^T)^T$ to be an arbitrary unit vector. Finally, we prove the increment inequality \eqref{subGaussian_increments} for any $(\va,\vb), (\va',\vb')\in\R^n\times\R^m$.

Since $\mA_i$ are isotropic sub-Gaussian random vectors, it follows from \eqref{Sub-Gaussian_Definition1} that $K$ is bounded below by an absolute constant. For simplicity, we will assume that $K \geq 1$.

\noindent\textbf{Step 1:} $(\va^T,\vb^T)^T\in\S^{n+m-1}$ and $(\va'^T,\vb'^T)^T= \vzero$.
In this case, we have the following result.
\begin{lemma}\label{lemma: concen of cor random mat}
	Let $\mA$ be a sub-Gaussian random matrix as in Lemma \ref{them: sub-Gaussian process}. Then
	\begin{equation}
	\Big\|\|\mA\va + \sqrt{m}\vb\|_2 - \sqrt{m}\Big\|_{\psi_2} \leq CK^2 \quad\text{for every}\quad[\va^T,\vb^T]^T\in\S^{n+m-1}.
	\end{equation}
\end{lemma}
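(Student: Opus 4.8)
The plan is to reduce the statement to a tail estimate for $\|\vw\|_2^2$ around its mean, where $\vw := \mA\va + \sqrt m\vb$, and then transfer it to $\|\vw\|_2$. Write $u_i$ for the $i$-th entry of $\mA\va$, i.e. $u_i = \ip{\mA_i^T}{\va}$; the coordinates $w_i = u_i + \sqrt m\,b_i$ are independent (the rows of $\mA$ are independent), each mean-zero with $\E w_i^2 = \|\va\|_2^2 + m b_i^2$ by isotropy, so that $\E\|\vw\|_2^2 = m(\|\va\|_2^2+\|\vb\|_2^2) = m$ since $(\va,\vb)$ is a unit vector.

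First I would split the centered square into a diagonal part and a cross part,
\begin{equation*}
\|\vw\|_2^2 - m \;=\; \underbrace{\textstyle\sum_i\big(u_i^2 - \|\va\|_2^2\big)}_{=:S_1} \;+\; \underbrace{2\sqrt m\,\ip{\mA\va}{\vb}}_{=:S_2},
\end{equation*}
and control each piece using the concentration inequalities of Appendix~A. Since $u_i$ is mean-zero sub-Gaussian with $\|u_i\|_{\psi_2}\le K\|\va\|_2$ (by \eqref{subGaussian vector}), the Hoeffding-type inequality (Fact~\ref{Hoeffding's ineq}) applied to $S_2 = \sum_i (2\sqrt m\,b_i)\,u_i$, together with $\|\va\|_2^2\|\vb\|_2^2\le\tfrac14$, gives $\|S_2\|_{\psi_2}\le CK\sqrt m$. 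For $S_1$, each summand $u_i^2-\|\va\|_2^2$ is mean-zero and, by Facts~\ref{Product of subgaus is subexp} and~\ref{pro: center}, sub-exponential with $\|u_i^2-\E u_i^2\|_{\psi_1}\le C\|u_i\|_{\psi_2}^2 \le K_1 := CK^2\|\va\|_2^2$, so the Bernstein-type inequality (Fact~\ref{Bernstein ineq}) yields
\begin{equation*}
\Pr{|S_1|\ge t} \;\le\; 2\exp\!\Big(-c\min\!\big(t^2/(mK_1^2),\; t/K_1\big)\Big), \qquad t\ge 0 .
\end{equation*}

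To transfer the estimate to $\|\vw\|_2$, I would use the elementary implication: for $z\ge0$, the inequality $|z-1|\ge\delta$ forces $|z^2-1|\ge\max(\delta,\delta^2)$. Applying it with $z = \|\vw\|_2/\sqrt m$ gives, for every $\delta\ge0$,
\begin{equation*}
\Pr{\big|\|\vw\|_2-\sqrt m\big|\ge\delta\sqrt m} \;\le\; \Pr{|S_1|\ge\tfrac{m}{2}\max(\delta,\delta^2)} + \Pr{|S_2|\ge\tfrac{m}{2}\max(\delta,\delta^2)} .
\end{equation*}
Feeding in the two tail bounds and using $K\ge1$ and $\|\va\|_2\le1$ (hence $K_1\le CK^2$), a short case split — $\delta\le1$, where $\max(\delta,\delta^2)=\delta$ and the Bernstein minimum is attained on its sub-Gaussian branch, versus $\delta>1$, where $\max(\delta,\delta^2)=\delta^2$ and even the sub-exponential branch dominates $m\delta^2/K^4$ — shows the right-hand side is $\le C\exp(-cm\delta^2/K^4)$ for all $\delta\ge0$. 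Taking $t=\delta\sqrt m$ turns this into $\Pr{|\|\vw\|_2-\sqrt m|\ge t}\le C\exp(-ct^2/K^4)$ for all $t\ge0$, which is exactly $\big\|\,\|\mA\va+\sqrt m\vb\|_2-\sqrt m\,\big\|_{\psi_2}\le CK^2$.

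The step I expect to be the main obstacle is this last transfer: the naive linear bound $|\|\vw\|_2-\sqrt m|\le|\|\vw\|_2^2-m|/\sqrt m$ only inherits the sub-exponential (not sub-Gaussian) tail of $S_1$ at large deviations, so it cannot by itself produce a genuine sub-Gaussian estimate. The quadratic comparison $|z-1|\ge\delta\Rightarrow|z^2-1|\ge\delta^2$ for $\delta\ge1$ is what repairs the argument, since at the large scale $\delta^2$ the sub-exponential tail of $S_1$ is already strong enough to beat $\exp(-cm\delta^2/K^4)$; this is the same mechanism as in the standard concentration-of-the-norm estimate for isotropic vectors, the only extra bookkeeping being the inhomogeneous variances $\E w_i^2$ and the deterministic shift $\sqrt m\vb$, which is precisely why the split into $S_1$ and $S_2$ is used rather than a single Bernstein bound on $\|\vw\|_2^2-m$.
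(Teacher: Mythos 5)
Your proposal is correct and follows essentially the same route as the paper's proof of this lemma: the same decomposition of $\|\mA\va+\sqrt m\vb\|_2^2-m$ into a diagonal sum of centered squares (handled by the Bernstein-type inequality via Facts \ref{pro: center} and \ref{Product of subgaus is subexp}) plus the cross term $2\sqrt m\,\ip{\mA\va}{\vb}$ (handled by the Hoeffding-type inequality), followed by the same quadratic comparison $|z-1|\ge\delta\Rightarrow|z^2-1|\ge\max(\delta,\delta^2)$ to upgrade the mixed tail to a genuinely sub-Gaussian one. The paper phrases the final step as substituting $t=\max(\delta^2,\delta)$ into the bound $4\exp\{-c_0m\min(t^2,t)/K^4\}$ rather than as an explicit case split on $\delta\le1$ versus $\delta>1$, but this is only a cosmetic difference.
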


\begin{proof}
We begin with establishing a concentration inequality for $\frac{1}{m}\|\mA\va +\sqrt{m} \vb\|_2^2 - 1$. For any $t\geq 0$, we have
\begin{align}\label{Bound_p}
 p &:=\Pr{ \left| \frac{1}{m} \|\mA\va + \sqrt{m}\vb\|_2^2 - 1 \right| \geq t }\\ \notag
   &=\Pr{\left| \frac{1}{m} \|\mA\va \|_2^2 + \frac{2}{\sqrt{m}}\ip{\mA\va}{\vb} + \|\vb\|_2^2 - 1 \right| \geq t}\\ \notag
   &=\Pr{ \left| \frac{1}{m}\|\mA\va\|_2^2 -\|\va\|_2^2 + \frac{2}{\sqrt{m}}\ip{\mA\va}{\vb} \right| \geq t }\\ \notag
   &\leq\Pr{ \left| \frac{1}{m}\|\mA\va\|_2^2 -\|\va\|_2^2\right| + \left|\frac{2}{\sqrt{m}}\ip{\mA\va}{\vb} \right| \geq t }\\ \notag
   &\leq\Pr{ \left| \frac{1}{m} \|\mA\va\|_2^2 -\|\va\|_2^2 \right| \geq \frac{t}{2} } +\Pr{ \left| \frac{2}{\sqrt{m}}\ip{\mA\va}{\vb} \right| \geq \frac{t}{2} }\\ \notag
   &=:p_1+p_2.
\end{align}

To bound $p_1$, it will be useful to express $\frac{1}{m} \|\mA\va\|_2^2 -\|\va\|_2^2$ as a sum of independent random variables $\frac{1}{m}\sum_{i=1}^{m}[\ip{\mA_i^T}{\va}^2 - \|\va\|_2^2]:=\frac{1}{m}\sum_{i=1}^{m}Z_i$. By assumption, $\{\ip{\mA_i^T}{\va}\}_{i=1}^m$ are independent centered sub-Gaussian random variables with $\E\ip{\mA_i^T}{\va}^2 = \|\va\|_2^2$ and $\|\ip{\mA_i^T}{\va}\|_{\psi_2}\leq K\|\va\|_2\leq K$. Therefore, by facts \ref{pro: center} and \ref{Product of subgaus is subexp} , $\{Z_i\}_{i=1}^m$ are independent centered sub-exponential random variables with
\begin{align*}
 \|Z_i\|_{\psi_1}&=\|\ip{\mA_i^T}{\va}^2 - \E \ip{\mA_i^T}{\va}^2 \|_{\psi_1}\\
 &\leq C\| \ip{\mA_i^T}{\va}^2 \|_{\psi_1}~~ \text{by Fact }\ref{pro: center}\\
 &\leq C\|\ip{\mA_i^T}{\va}\|_{\psi_2}^2~~~\text{by Fact }\ref{Product of subgaus is subexp}\\
 &\leq CK^2.
\end{align*}
It follows from Bernstein's inequality (Fact \ref{Bernstein ineq}) that
\begin{align}\label{Bound_p1}
p_1=&\Pr{ \left| \frac{1}{m}\sum_{i=1}^{m} Z_i \right| \geq \frac{t}{2} } \\ \notag
\leq&2\exp\left\{ -c\min ( \frac{t^2}{4C^2K^4/m}, \frac{t}{2CK^2/m} ) \right\}\\ \notag
  = &2\exp\left\{ -\frac{cm}{4C^2K^4}\min(t^2, 2CK^2t) \right\}\\ \notag
\leq&2\exp\left\{ -\frac{c_1m}{K^4}\min(t^2, t) \right\}.
\end{align}
The last inequality holds because we can easily choose $C$ such that $2CK^2 \geq 1$.

To bound $p_2$, it will be helpful to write $\ip{\mA\va}{\vb}=\sum_{i=1}^{m}\vb_i\ip{\mA_i^T}{\va}$, where $\{\ip{\mA_i^T}{\va}\}_{i=1}^m$ are independent centered sub-Gaussian random variable with  $\|\ip{\mA_i^T}{\va}\|_{\psi_2}\leq K$. Applying Hoeffding's inequality (Fact \ref{Hoeffding's ineq}) yields
\begin{align}\label{Bound_p2}
p_2 &= \Pr{ \left| \frac{1}{\sqrt{m}}\sum_{i=1}^{m}\vb_i\ip{\mA_i^T}{\va} \right| \geq \frac{t}{4} }\\ \notag
&\leq 2\exp\left\{ -\frac{cmt^2}{16K^2\|\vb\|_2^2} \right\}\\ \notag
&\leq 2\exp\left\{ -\frac{c_2mt^2}{K^4}\right\}.
\end{align}
The last inequality holds because $\|\vb\|_2 \leq 1$ and $K \geq 1$.

Combining the upper bounds of $p_1$ and $p_2$ yields
\begin{align} \label{concentrationinequalityLemma}
p&\leq 2\exp\left\{ -\frac{c_1m}{K^4}\min(t^2, t) \right\} + 2\exp\left\{ -\frac{c_2mt^2}{K^4}\right\}\\ \notag
 &\leq 4\exp\left\{ -\frac{c_0m}{K^4}\min(t^2, t) \right\}.
\end{align}

We then establish a concentration inequality for $\frac{1}{\sqrt{m}}\|\mA\va + \sqrt{m}\vb\| - 1$. To this end, we will use the following fact
\begin{equation*}
|z-1|\geq \delta \quad\text{implies}\quad |z^2-1|\geq \max(\delta^2, \delta), ~~\textrm{for any} ~~z\geq 0~~\textrm{ and}~~ \delta\geq 0.
\end{equation*}
Using this for $z = \frac{1}{\sqrt{m}}\|\mA\va + \sqrt{m}\vb\|$ together with \eqref{concentrationinequalityLemma} where $t = \max\{\delta^2, \delta\}$, we obtain for any $\delta \geq 0$ that
\begin{align*}
	\Pr{ \left| \frac{1}{\sqrt{m}} \|\mA\va + \sqrt{m}\vb\|_2 - 1 \right| \geq \delta } \leq & \Pr{ \left| \frac{1}{m} \|\mA\va + \sqrt{m}\vb\|_2^2 - 1 \right|  \geq \max(\delta^2, \delta) }\\
    \leq& 4\exp\left\{ -\frac{c_0m\delta^2}{K^4}\right\}.
\end{align*}
This implies $\big\| \| \mA\va + \sqrt{m}\vb \|_2 - \sqrt{m}\big\|_{\psi_2} \leq CK^2$ and completes the proof.
\end{proof}

\noindent\textbf{Step 2:} $(\va^T,\vb^T)^T\in\S^{n+m-1}$ and $(\va'^T,\vb'^T)^T\in\S^{n+m-1}$. In this case, we have the following result.
\begin{lemma}
	\label{lemma: double units}
	Let $A$ be a sub-Gaussian random matrix as in Lemma \ref{them: sub-Gaussian process}. Then
	\begin{align*}
	\Big\| \|\mA\va + \sqrt{m}\vb\|_2 - \|\mA\va' + \sqrt{m}\vb'\|_2 \Big\|_{\psi_2}\leq CK^2\cdot\sqrt{\|\va-\va'\|_2^2 + \|\vb-\vb'\|_2^2},
	\end{align*}
	for every $(\va^T,\vb^T)^T, (\va'^T,\vb'^T)^T\in\S^{n+m-1}$.
\end{lemma}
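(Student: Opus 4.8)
The plan is to reduce the increment bound to the single‑vector concentration estimate of Lemma~\ref{lemma: concen of cor random mat} (Step~1) via the elementary identity $|s-t| = |s^2-t^2|/(s+t)$ for $s,t>0$. Write $\vu := \mA\va + \sqrt m\vb$, $\vv := \mA\va' + \sqrt m\vb'$, $Z := \|\vu\|_2 - \|\vv\|_2$, and $\rho := \sqrt{\|\va-\va'\|_2^2 + \|\vb-\vb'\|_2^2}$; since the two arguments lie on $\S^{n+m-1}$ we have $\rho \le 2$. If $\rho$ exceeds an absolute constant the claim is immediate, since $|Z| \le \bigl|\|\vu\|_2 - \sqrt m\bigr| + \bigl|\|\vv\|_2 - \sqrt m\bigr|$ and each term has $\psi_2$‑norm $\le CK^2$ by Lemma~\ref{lemma: concen of cor random mat}, so $\|Z\|_{\psi_2} \le 2CK^2 \le C'K^2\rho$; hence the interesting regime is $\rho$ small. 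I would then record two basic bounds: a deterministic reverse‑triangle bound $|Z| \le \|\vu - \vv\|_2 = \rho\,\bigl\|\mA\tilde\va + \sqrt m\tilde\vb\bigr\|_2$, where $[\tilde\va^T,\tilde\vb^T]^T := \rho^{-1}[(\va-\va')^T,(\vb-\vb')^T]^T \in \S^{n+m-1}$; and, on the event $E := \{\|\vu\|_2 \ge \tfrac12\sqrt m\} \cap \{\|\vv\|_2 \ge \tfrac12\sqrt m\}$, the sharper bound $|Z| \le m^{-1/2}\,\bigl|\,\|\vu\|_2^2 - \|\vv\|_2^2\,\bigr| = m^{-1/2}\,\bigl|\ip{\vu-\vv}{\vu+\vv}\bigr|$.

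The crux is to show the ``numerator'' $W := \ip{\vu-\vv}{\vu+\vv}$ concentrates around $0$ at the scale $\sqrt m\,K^2\rho$. First, by isotropy of the rows $\mA_i$ one has $\E\|\mA\va+\sqrt m\vb\|_2^2 = m(\|\va\|_2^2+\|\vb\|_2^2)$, so $\E W = \E\|\vu\|_2^2 - \E\|\vv\|_2^2 = m - m = 0$; this cancellation is exactly what produces the extra factor $\rho$ over the naive bound. Expanding componentwise,
\[
W = \ip{\mA(\va-\va')}{\mA(\va+\va')} + \sqrt m\,\ip{\mA(\va-\va')}{\vb+\vb'} + \sqrt m\,\ip{\mA(\va+\va')}{\vb-\vb'} + m\,\ip{\vb-\vb'}{\vb+\vb'}.
\]
The last term is deterministic and, together with the mean $m\,\ip{\va-\va'}{\va+\va'}$ of the first, sums to $m\bigl[(\|\va\|_2^2+\|\vb\|_2^2)-(\|\va'\|_2^2+\|\vb'\|_2^2)\bigr]=0$. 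The centered first term is a sum of $m$ i.i.d.\ centered sub‑exponential variables with $\psi_1$‑norm $\le CK^2\|\va-\va'\|_2\|\va+\va'\|_2 \le 2CK^2\rho$ (Facts~\ref{pro: center} and \ref{Product of subgaus is subexp}), so Bernstein's inequality (Fact~\ref{Bernstein ineq}) controls it; the two cross terms are sums $\sqrt m\sum_i (\vb+\vb')_i\ip{\mA_i^T}{\va-\va'}$ and $\sqrt m\sum_i(\vb-\vb')_i\ip{\mA_i^T}{\va+\va'}$ of centered sub‑Gaussians, handled by Hoeffding's inequality (Fact~\ref{Hoeffding's ineq}) with coefficient‑vector norms $\sqrt m\|\vb+\vb'\|_2\le 2\sqrt m$ and $\sqrt m\|\vb-\vb'\|_2\le \sqrt m\rho$. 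Using $\|\va-\va'\|_2,\|\vb-\vb'\|_2 \le \rho$, $\|\va+\va'\|_2,\|\vb+\vb'\|_2 \le 2$, and $K\ge 1$, a union bound gives $\Pr{|W| \ge t} \le C\exp\bigl(-c\min\{t^2/(mK^4\rho^2),\, t/(K^2\rho)\}\bigr)$ for all $t \ge 0$.

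Finally I would patch everything together. Lemma~\ref{lemma: concen of cor random mat} gives $\Pr{E^c} \le 8\exp(-cm/K^4)$ and, via the deterministic bound, $\Pr{|Z|\ge t} \le \Pr{\|\mA\tilde\va+\sqrt m\tilde\vb\|_2 \ge t/\rho} \le C\exp\bigl(-c(t/\rho-\sqrt m)^2/K^4\bigr)$ for $t \ge \rho\sqrt m$. Split $t$ into three ranges. For $t \le \rho\sqrt m$, use $\Pr{|Z|\ge t} \le \Pr{|W|\ge\sqrt m t} + \Pr{E^c}$: the first term is $\le C\exp(-c't^2/(K^4\rho^2))$ (the min in the exponent equals its first argument here, since $t\le\rho\sqrt m\le K^2\rho\sqrt m$), and the second is too because $t\le\rho\sqrt m$ forces $t^2/(K^4\rho^2)\le m/K^4$. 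For $t \ge 2\rho\sqrt m$, use the deterministic bound directly, since then $t/\rho-\sqrt m\ge t/(2\rho)$. For the bounded middle range $\rho\sqrt m\le t\le 2\rho\sqrt m$, use the $E$‑based bound once more, shrinking the exponent constant so that $\exp(-cm/K^4)\le\exp(-c''t^2/(K^4\rho^2))$ throughout that range (possible since $t^2/(K^4\rho^2)\le 4m/K^4$ there). In every regime $\Pr{|Z|\ge t}\le C\exp\bigl(-c\,t^2/(K^4\rho^2)\bigr)$, i.e.\ $\|Z\|_{\psi_2}\le CK^2\rho$, which is the claim. The main obstacle is the second paragraph — extracting the factor $\rho$ from $W$, which rests on the exact cancellation of means forced by isotropy together with the unit‑norm constraint; the remaining difficulty is the bookkeeping that stitches the sub‑Gaussian tail (valid for $t\lesssim\rho\sqrt m$) to the crude tail (valid for $t\gtrsim\rho\sqrt m$), which works precisely because the crossover sits at the scale $\rho\sqrt m$, below which $\Pr{E^c}$ is already negligible against the target Gaussian tail.
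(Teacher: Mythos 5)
Your proposal is correct and follows essentially the same route as the paper's proof: the reverse-triangle/deterministic bound handles the large-$t$ regime, and for small $t$ you pass to the difference of squares, condition on the norm being at least $\tfrac12\sqrt m$, exploit the mean cancellation forced by isotropy and the unit-sphere constraint (the paper phrases this as $\ip{\vv}{\vv'}=-\ip{\vu}{\vu'}$), and control the quadratic term by Bernstein and the cross terms by Hoeffding. The only differences are cosmetic (centering the quadratic form explicitly, using an event on both norms rather than one, and a three-way rather than two-way split in $t$).
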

\begin{proof}
Given $t\geq 0$, it suffices to show
\begin{align}
\label{key ineq unit vectors}
p:=\Pr{\frac{\big| \|\mA\va + \sqrt{m}\vb\|_2 - \|\mA\va' + \sqrt{m}\vb'\|_2 \big|}{\sqrt{\|\va-\va'\|_2^2 + \|\vb-\vb'\|_2^2}} \geq t } \leq C \exp\left\{\frac{-ct^2}{K^4}\right\}.
\end{align}
We will proceed differently for small and large $t$.

\textbf{Case 1:} $t\geq 2\sqrt{m}$. Denote
	\begin{equation*}
	\vu:=\frac{ \va-\va' }{\sqrt{\|\va-\va'\|_2^2 + \|\vb-\vb'\|_2^2}}~~\textrm{and}~~\quad\vv:=\frac{ \vb-\vb' }{\sqrt{\|\va-\va'\|_2^2 + \|\vb-\vb'\|_2^2}}.
	\end{equation*}
It follows from the triangle inequality that
	\begin{align*}
	p&\leq \Pr{\frac{ \|\mA(\va-\va') + \sqrt{m}(\vb-\vb')\|_2 }{ \sqrt{\|\va-\va'\|_2^2 + \|\vb-\vb'\|_2^2} } \geq t }\\
	 &=\Pr{\|\mA\vu + \sqrt{m}\vv\|_2\geq t}\\
	 &=\Pr{\|\mA\vu + \sqrt{m}\vv\|_2 - \sqrt{m}\geq t-\sqrt{m}}\\
	 &\leq\Pr{\|\mA\vu + \sqrt{m}\vv\|_2 - \sqrt{m}\geq \frac{t}{2}}\\
	 &\leq 2\exp\left\{-\frac{ct^2}{K^4}\right\}.
	\end{align*}
The second inequality holds because $t\geq 2\sqrt{m}$  and the last inequality follows from Lemma \ref{lemma: concen of cor random mat}.


\textbf{Case 2:} $t\leq 2\sqrt{m}$. Denote
\begin{equation*}
\vu' = \va+\va' \text{ and } \vv' = \vb + \vb'.
\end{equation*}
Multiplying both sides of the inequality defining $p$ in \eqref{key ineq unit vectors} by $\|\mA\va + \sqrt{m}\vb\|_2+\|\mA\va'+\sqrt{m}\vb'\|_2$ yields
\begin{align*}
	p&=\Pr{\left| \frac{\|\mA\va+\sqrt{m}\vb\|_2^2 - \|\mA\va'+\sqrt{m}\vb'\|_2^2}{\sqrt{\|\va-\va'\|_2^2 + \|\vb-\vb'\|_2^2}} \right| \geq t\big(\|\mA\va + \sqrt{m}\vb\|_2+\|\mA\va' + \sqrt{m}\vb'\|_2\big) }\\
	 &\leq \Pr{\left| \frac{ \ip{\mA(\va-\va')+\sqrt{m}(\vb-\vb')}{\mA(\va+\va')+\sqrt{m}(\vb+\vb')} }{ \sqrt{\|\va-\va'\|_2^2 + \|\vb-\vb'\|_2^2} } \right| \geq t\cdot\|\mA\va+\sqrt{m}\vb\|_2 }\\
     & = \Pr{\left| \ip{\mA\vu+\sqrt{m}\vv}{\mA\vu'+\sqrt{m}\vv'} \right| \geq t\cdot\|\mA\va+\sqrt{m}\vb\|_2}.
\end{align*}
Define the event $\Omega_1$ as
\begin{align*}
 \Omega_1:=\left\{\|\mA\va+\sqrt{m}\vb\|_2\geq \frac{\sqrt{m}}{2}\right\}.
\end{align*}
By the law of total probability, we have
\begin{align*}
	p &\leq\Pr{\underbrace{\left| \ip{\mA\vu+\sqrt{m}\vv}{\mA\vu'+\sqrt{m}\vv'} \right| \geq t \cdot\|\mA\va+\sqrt{m}\vb\|_2}_{:=\Omega_0} }\\
	 &=\Pr{\Omega_0 ~\big|~ \Omega_1}\cdot\Pr{\Omega_1} + \Pr{\Omega_0~\big|~\Omega_1^{c}}\cdot\Pr{\Omega_1^{c}}\\
	 &\leq \Pr{\Omega_0 ~\textrm{and}~ \Omega_1} + \Pr{\Omega_1^{c}}\\
	 &:=p_1 + p_2.
\end{align*}
We can easily bound $p_2$ using Lemma \ref{lemma: concen of cor random mat}
\begin{align*}
	p_2&=\Pr{\Omega_1^{c}}=\Pr{\|\mA\va+ \sqrt{m}\vb\|_2 \leq \frac{\sqrt{m}}{2}}\\
	&=\Pr{\|\mA\va+ \sqrt{m}\vb\|_2 - \sqrt{m} \leq -\frac{\sqrt{m}}{2}}\\
	&\leq \Pr{ \Big| \|\mA\va+ \sqrt{m}\vb\|_2 - \sqrt{m} \Big| \geq \frac{\sqrt{m}}{2} }\\
	&\leq \Pr{ \Big| \|\mA\va+ \sqrt{m}\vb\|_2 - \sqrt{m} \Big| \geq \frac{t}{4} } \quad\text{since}~~ t \leq 2\sqrt{m}\\
	&\leq 2\exp\left\{-\frac{c_1t^2}{16K^4}\right\}\quad\text{since Lemma \ref{lemma: concen of cor random mat}}\\
	&\leq 2\exp\left\{-\frac{c''t^2}{K^4}\right\}.
\end{align*}
To bound $p_1$, note that
\begin{align*}
	p_1 &:= \Pr{\Omega_0 ~\textrm{and}~ \Omega_1} \\
        &\leq \Pr{\Big| \ip{\mA\vu+\sqrt{m}\vv}{\mA\vu'+\sqrt{m}\vv'} \Big| \geq \frac{t\sqrt{m}}{2}}\\
	    &=\Pr{\Big| \ip{\mA\vu}{\mA\vu'} + m\ip{\vv}{\vv'} + \sqrt{m}\ip{\mA\vu}{\vv'} + \sqrt{m}\ip{\mA\vu'}{\vv}\Big| \geq \frac{t\sqrt{m}}{2} }\\
	    &\leq \Pr{\Big| \ip{\mA\vu}{\mA\vu'} + m\ip{\vv}{\vv'} \Big| \geq \frac{t\sqrt{m}}{4} } + \Pr{ \Big| \sqrt{m}\ip{\mA\vu}{\vv'} + \sqrt{m}\ip{\mA\vu'}{\vv}\Big| \geq \frac{t\sqrt{m}}{4} }\\
	    &\leq \Pr{\Big| \ip{\mA\vu}{\mA\vu'} + m\ip{\vv}{\vv'} \Big| \geq \frac{t\sqrt{m}}{4} } + \Pr{ \Big| \sqrt{m}\ip{\mA\vu}{\vv'}\Big| \geq \frac{t\sqrt{m}}{8} } + \Pr{ \Big| \sqrt{m}\ip{\mA\vu'}{\vv}\Big| \geq \frac{t\sqrt{m}}{8} }\\
	&=:p_{1a}+p_{1b}+ p_{1c}.
\end{align*}
We bound the three terms separately. To bound $p_{1a}$, note that
\begin{align*}
	\ip{\vv}{\vv'} &= \frac{ \ip{ \vb - \vb' }{ \vb + \vb'}}{\sqrt{\|\va - \va'\|_2^2 + \|\vb - \vb'\|_2^2 } } \\
	&=\frac{ \|\vb\|_2^2 - \|\vb'\|_2^2}{\sqrt{\|\va - \va'\|_2^2 + \|\vb - \vb'\|_2^2 }}\\
	&=-\frac{ \|\va\|_2^2 - \|\va'\|_2^2}{\sqrt{\|\va - \va'\|_2^2 + \|\vb - \vb'\|_2^2 }}\\
	&=-\ip{\vu}{\vu'}.
\end{align*}
Therefore, we can bound $p_{1a}$ as $p_1$ \eqref{Bound_p1} in Lemma \ref{lemma: concen of cor random mat}
\begin{align*}
	p_{1a} &= \Pr{\Big| \ip{\mA\vu}{\mA\vu'} + m\ip{\vv}{\vv'} \Big| \geq \frac{t\sqrt{m}}{4} }\\
	&=\Pr{\Big| \ip{\mA\vu}{\mA\vu'}-m\ip{\vu}{\vu'} \Big| \geq \frac{t\sqrt{m}}{4} }\\
	&=\Pr{\left|\sum_{i=1}^{m}\left[\ip{\mA_i^T}{\vu} \ip{\mA_i^T}{\vu'} - \ip{\vu}{\vu'}\right] \right| \geq \frac{t\sqrt{m}}{4} }\\
	&=\Pr{\left| \sum_{i=1}^{m}Z_i \right| \geq \frac{t\sqrt{m}}{4} }.
\end{align*}
By assumption, it follows from facts \ref{pro: center} and \ref{Product of subgaus is subexp} that $\{Z_i\}_{i=1}^{m}$ are independent, mean-zero, sub-exponential random variables with
\begin{align*}
	\|Z_i\|_{\psi_1} &= \|\ip{\mA_i^T}{\vu} \ip{\mA_i^T}{\vu'} - \ip{\vu}{\vu'}\|_{\psi_1}\\
	                 & \leq C'\|\ip{\mA_i^T}{\vu} \ip{\mA_i^T}{\vu'}\|_{\psi_1}\\
	                 & \leq C'\|\ip{\mA_i^T}{\vu}\|_{\psi_2}\cdot\|\ip{\mA_i^T}{\vu'}\|_{\psi_2}\\
	                 &\leq C'K^2\|\vu\|_2\cdot\|\vu'\|_2\leq 2C'K^2 = CK^2.
\end{align*}
Using Bernstein's inequality (Fact \ref{Bernstein ineq}) yields
\begin{align*}
	p_{1a} &\leq 2\exp\left\{-c\min\left(\frac{t^2}{16C^2K^4}, \frac{\sqrt{m}t}{4CK^2} \right) \right\}\\
	&\leq 2\exp\left\{-\frac{c}{16C^2K^4}\min\left(t^2, 4CK^2t\sqrt{m} \right)\right\}\\
	&\leq 2\exp\left\{-\frac{c}{16C^2K^4}\min\left(t^2, 2CK^2t^2 \right)\right\} ~~~\text{since }t\leq2\sqrt{m}\\
	&\leq 2\exp\left\{-\frac{ct^2}{16C^2K^4}\right\}~~~\text{choose}~ C ~\text{such that}~ 2CK^2\geq1\\
	&\leq 2\exp\left\{-\frac{c_1t^2}{K^4}\right\}.
\end{align*}
Similar to $p_2$ \eqref{Bound_p2} in Lemma \ref{lemma: concen of cor random mat}, we can easily obtain $p_{1b} \leq 2\exp\left(-{c_2t^2}/{K^4} \right)$ and $p_{1c} \leq 2\exp\left(-{c_3t^2}/{K^4}\right)$.
Combining $p_{1a},p_{1b},$ and $p_{1c}$ yields
	\begin{align*}
	p_1 &\leq 2\exp\left\{-\frac{c_1t^2}{K^4}\right\} + 2\exp\left\{-\frac{c_2t^2}{K^4} \right\} + 2\exp\left\{-\frac{c_3t^2}{K^4}\right\}\\
	&\leq 6\exp\left\{-\frac{c't^2}{K^4}\right\}.
	\end{align*}
Therefore, we have
	\begin{align*}
	p &\leq p_1 + p_2\leq C\exp\left\{-\frac{ct^2}{K^4}\right\}.
	\end{align*}
This completes the proof.
\end{proof}

\noindent\textbf{Step 3:} $(\va,\vb),(\va',\vb')\in\R^n\times\R^m$.

Finally, we prove the increment inequality \eqref{subGaussian_increments} in full generality. Without lost of generality, we assume that $\|\va\|_2^2+\|\vb\|_2^2 = 1$ and $\|\va'\|_2^2+\|\vb'\|_2^2 \geq 1$. Define the unit vector $(\bar{\va'}^T,\bar{\vb'}^T)^T$ with
\begin{align*}
\bar{\va'}:=\frac{\va'}{\sqrt{\|\va'\|_2^2 + \|\vb'\|_2^2}}~~\text{ and }~~\bar{\vb'}:=\frac{\vb'}{\sqrt{\|\va'\|_2^2 + \|\vb'\|_2^2}}.
\end{align*}
Then we have
\begin{align*}
\|X_{\va,\vb}-X_{\va',\vb'}\|_{\psi_2} &\leq\| X_{\va,\vb} - X_{\bar{\va'},\bar{\vb'}} \|_{\psi_2} + \|X_{\bar{\va'},\bar{\vb'}}-X_{\va',\vb'}\|_{\psi_2}\\
 &=:R_1 + R_2.
\end{align*}
By Lemma \ref{lemma: double units}, $R_1 \leq CK^2\cdot\sqrt{\left\|\va-\bar{\va'}\right\|_2^2 + \left\|\vb-\bar{\vb'}\right\|_2^2}$. Since $(\bar{\va'},\bar{\vb'})$ and $(\va',\vb')$ are colinear, we have
\begin{align*}
R_2 &= \|X_{\bar{\va'},\bar{\vb'}}-X_{\va',\vb'}\|_{\psi_2}\\
&=\sqrt{\|\va'-\bar{\va'}\|_2^2 + \|\vb'-\bar{\vb'}\|_2^2}\cdot\|X_{\bar{\va'},\bar{\vb'}}\|_{\psi_2}\\
&\leq CK^2\cdot\sqrt{\|\va'-\bar{\va'}\|_2^2 + \|\vb'-\bar{\vb'}\|_2^2}.
\end{align*}
The last inequality follows from Lemma \ref{lemma: concen of cor random mat}.

Combining $R_1$ and $R_2$ yields
\begin{align*}
R_1+R_2\leq \sqrt{2}CK^2\cdot\sqrt{\|\va-\va'\|_2^2 + \|\vb-\vb'\|_2^2 },
\end{align*}
where we have used the fact \cite[Exercise 9.1.7]{vershynin2016book} that
\begin{align*}
	\sqrt{\|\va-\bar{\va'}\|_2^2 + \|\vb-\bar{\vb'}\|_2^2 } + \sqrt{\|\bar{\va'}-\va'\|_2^2 + \|\bar{\vb'}-\vb'\|_2^2 } \leq \sqrt{2}\sqrt{\|\va-\va'\|_2^2 + \|\vb-\vb'\|_2^2 }.
\end{align*}
This completes the proof of Lemma \ref{them: sub-Gaussian process}.

\section*{Acknowledgment}
The authors thank Huan Zhang and Xu Zhang for helpful discussions.

\bibliographystyle{IEEEtran}
\bibliography{IEEEabrv,myref}


\end{document}